\newcommand{\Aclass}{\Arg_{class}}
\newcommand{\Fclass}{\F_{class}}
\newcommand{\F}{\mathcal{O}t}
\newcommand{\Arg}{\mathcal{A}\!t}
\newcommand{\f}{f}
\newcommand{\C}{\xdom{C}}
\newcommand{\Conts}{\xdom{Cont}_S}
\newcommand{\Conta}{\xdom{Cont}_A}
\newcommand{\Cd}{\xdom{C}^{\mathcal{D}}}
\newcommand{\Cdc}{\ovr{\xdom{C}}^{\mathcal{D}}}
\newcommand{\Contd}{\Cont^{\mathcal{D}}}
\newcommand{\Ct}{\tilde{\xdom{C}}}
\newcommand{\Rd}{\xdom{R}}
\newcommand{\Dt}{\tilde{\xdom{D}}}
\newcommand{\Md}{\xdom{M}}
\newcommand{\D}{\xdom{D}}
\newcommand{\Od}{\xdom{O}}
\newcommand{\K}{\xdom{K}}
\newcommand{\inv}[4]{{#1}{\langle#2\rangle}\fun{1}{#3}{\langle#4\rangle}}
\newcommand{\ainv}[3]{#1{\langle#2,#3\rangle}}
\newcommand{\co}[2]{co(#1|#2)}
\newcommand{\Names}{\mathcal{N}}
\newcommand{\Conames}{\ovr{\mathcal{N}}}
\newcommand{\xrestr}[2]{#1\backslash#2}
\newcommand{\shole}{\circ}
\newcommand{\ahole}{\bullet}
\newcommand{\ars}{(;\,\bullet)}
\newcommand{\als}[1]{(#1\,;)}
\newcommand{\anu}[2]{(\xrestr{#1}{#2})}
\newcommand{\alp}[1]{(#1\parallel)}
\newcommand{\arp}[1]{(\parallel#1)}
\newcommand{\amatch}{match_{\alpha}}
\newcommand{\amatchn}{match_{\alpha}^{\Names}}
\newcommand{\inalpha}{in_{\alpha}}
\newcommand{\iotatwo}{\iota_2}
\newcommand{\glb}{g}
\newcommand{\pirestr}[2]{#1|_{#2}}
\newcommand{\pirestrn}[2]{#1|_{#2}^{\Names}}
\newcommand{\match}{match}
\newcommand{\perm}{perm}
\newcommand{\at}[2]{#1@#2}
\newcommand{\asyn}[2]{#1\,\cdot\,#2}
\newcommand{\xlbd}[2]{\lambda{#1}\,.\,#2}
\newcommand{\cres}{c_{\Res}}
\newcommand{\cb}{c_{\mu}}
\newcommand{\ca}{c_{\alpha}}
\newcommand{\ks}{ks}
\newcommand{\kd}{kd}
\newcommand{\restrr}{\,\,\backslash^{\!\!\Rc}\,\,}
\newcommand{\seqr}{\,\xseq^{\!\!\Rc}\,}
\newcommand{\parr}{\,\xmerge^{\!\!\Rc}\,}
\newcommand{\restrden}{\,\,\widehat{\backslash}\,\,}
\newcommand{\seqden}{\,\,\widehat{\xseq}\,\,}
\newcommand{\parden}{\,\,\widehat{\xmerge}\,\,}
\newcommand{\xinfer}{\,\nearrow\,}
\newcommand{\xstep}[3]{#1\,\stackrel{#2}{\longrightarrow}\,#3}
\newcommand{\narrow}{\,{\longrightarrow\!\!\!\!\!\!\!\!\!\!/}\,\,\,\,}
\newcommand{\idbag}{id}
\newcommand{\applybag}[2]{#1(#2)}
\newcommand{\subsbag}[3]{\langle#1\mid#2\mapsto#3\rangle}
\newcommand{\s}{x}
\newcommand{\Stmt}{Stmt}
\newcommand{\GStmt}{GStmt}
\newcommand{\cx}{wgt}
\newcommand{\xop}{\,\,\mathsf{op}\,\,}
\newcommand{\U}{U}
\newcommand{\Res}{Res}
\newcommand{\KRes}{KRes}
\newcommand{\SRes}{SRes}
\newcommand{\ARes}{ARes}
\newcommand{\Rc}{R}
\newcommand{\Ids}{Ids}
\newcommand{\E}{E}
\newcommand{\xcard}[1]{|#1|}
\newcommand{\xlen}{len}
\newcommand{\phie}{\phi_E}
\newcommand{\oprestr}{\backslash}
\newcommand{\opcnu}{\tilde{\backslash}}
\newcommand{\opcseq}{add_;}
\newcommand{\opclpar}{add_{\xlmerge}}
\newcommand{\opclsyn}{add_{\xlsyn}}
\newcommand{\cnu}[2]{#1\,\tilde{\backslash}\,#2}
\newcommand{\cseq}[2]{add_{;}(#1,#2)}
\newcommand{\clpar}[2]{add_{\xlmerge}(#1,#2)}
\newcommand{\clsyn}[2]{add_{\xlsyn}(#1,#2)}
\newcommand{\gammaz}{\gamma_0}
\newcommand{\Semd}{F_D}
\newcommand{\xcardc}{card_{\gamma}}
\newcommand{\xcardu}{card_u}
\newcommand{\xwhere}{\,\,\,\mathsf{where}\,\,\,}
\newcommand{\xif}{\,\,\mathsf{if}\,\,}
\newcommand{\xthen}{\,\,\mathsf{then}\,\,}
\newcommand{\xelse}{\,\,\mathsf{else}\,\,}
\newcommand{\xotherwise}{\,\,\mathsf{otherwise}\,\,}
\newcommand{\xlet}{\,\,\mathsf{let}\,\,}
\newcommand{\xin}{\,\,\mathsf{in}\,\,}
\newcommand{\dsem}{{\mathcal{D}}}
\newcommand{\osem}{{\mathcal{O}}}
\newcommand{\msync}{msync}
\newcommand{\Semx}{\xi_P}
\newcommand{\semx}{\xi_Q}
\newcommand{\semc}[1]{[\![\hspace*{0.02cm}#1\hspace*{0.02cm}]\!]_C}
\newcommand{\semk}[1]{[\![\hspace*{0.02cm}#1\hspace*{0.02cm}]\!]_K}
\newcommand{\semf}[1]{[\![\hspace*{0.02cm}#1\hspace*{0.02cm}]\!]_F}
\newcommand{\msnd}[1]{[#1]^{snd}}
\newcommand{\mrcv}[1]{[#1]^{rcv}}
\newcommand{\join}{\,\&\,}
\newcommand{\Jn}{J^n}
\newcommand{\Jpn}{J_P^n}
\newcommand{\ccsn}{CCS^n}
\newcommand{\ccsnp}{CCS^{n+}}
\newcommand{\Lccsn}{{\mathcal{L}}_{CCS^n}}
\newcommand{\Lccsnp}{{\mathcal{L}}_{CCS^{n+}}}
\newcommand{\df}{d_{F}}
\newcommand{\dc}{d_C}
\newcommand{\xundef}{\uparrow}
\newcommand{\Den}{\xdom{Den}}
\newcommand{\Sem}{\xdom{Sem}}
\newcommand{\bag}[1]{\langle\!\!|#1|\!\!\rangle}
\newcommand{\Dd}{\D^{\dsem}}
\newcommand{\opa}{op_A}
\newcommand{\Lang}{L}
\newcommand{\X}{\xdom{X}}
\newcommand{\Nset}{\mathbb{N}}
\newcommand{\Rset}{\mathbb{R}}
\newcommand{\mset}[1]{[#1]}
\newcommand{\notation}{\ \stackrel{\mathrm{not.}}{=}\ }
\newcommand{\xbnf}{::=}
\newcommand{\Lsyn}{{\mathcal{L}}_{\mathit{syn}}}
\newcommand{\Act}{Act}
\newcommand{\IAct}{IAct}
\newcommand{\SAct}{SAct}
\newcommand{\pnco}{\pset_{\mathit{nco}}}
\newcommand{\pco}{\pset_{\mathit{co}}}
\newcommand{\pfin}{\pset_{\mathit{fin}}}
\newcommand{\pset}{{\mathcal{P}}}
\newcommand{\mc}[1]{{\mathcal{#1}}}
\newcommand{\Id}{Id}
\newcommand{\var}[3]{\langle #1 \mid #2 \mapsto #3 \rangle}
\newcommand{\xapply}[2]{#1(#2)}
\newcommand{\fdom}[1]{\,{\mathsf{dom}}\hspace*{0.05cm}(#1)\,}
\newcommand{\xnedp}[1]{\hspace*{0.02cm}\oplus^{{\hspace*{-0.03cm}{#1}}}\hspace*{0.02cm}}
\newcommand{\xned}{\oplus}
\newcommand{\Cont}{\xdom{Cont}}
\newcommand{\Kres}{Kres}
\newcommand{\Conf}{Conf}
\newcommand{\Qd}{\xdom{Q}_D}
\newcommand{\Qo}{\xdom{Q}_O}
\newcommand{\Pd}{\xdom{P}_D}
\newcommand{\Po}{\xdom{P}_O}
\newcommand{\frestr}[2]{#1\!\!\upharpoonright\!#2}
\newcommand{\Semo}{Sem_O}
\newcommand{\scontext}{S}
\newcommand{\fix}{\mathsf{fix}}
\newcommand{\Decl}{Decl}
\newcommand{\A}{A}
\newcommand{\Y}{Y}
\newcommand{\fun}[1]{\mathop{\stackrel{#1}{\rightarrow}}}
\newcommand{\half}{\frac{1}{2}}
\newcommand{\xstop}{\,\mathsf{stop}\,}
\newcommand{\xseq}{\hspace*{0.025cm};\hspace*{0.04cm}}
\newcommand{\xmerge}{\hspace*{0.025cm}\parallel\hspace*{0.03cm}}
\newcommand{\xlmerge}{\hspace*{0.02cm}\lfloor\hspace*{-0.1cm}\lfloor\hspace*{0.055cm}}
\newcommand{\xsyn}{\hspace*{0.055cm}|\hspace*{0.055cm}}
\newcommand{\xlsyn}{\hspace*{0.02cm}\lfloor\hspace*{0.055cm}}
\newcommand{\xsep}{\hspace*{0.035cm}\,\big{|}\,\hspace*{0.035cm}}
\newcommand{\ovr}[1]{\overline{#1}}
\newcommand{\den}[2]{{\mathcal{#1}}[\![#2]\!]}
\newcommand{\sem}[1]{[\![\hspace*{0.02cm}#1\hspace*{0.02cm}]\!]}
\newcommand{\az}{\alpha_0}
\newcommand{\kz}{k_0}
\newcommand{\nmax}{\overline{n}}
\newcommand{\Bool}{Bool}
\newcommand{\true}{{\mathsf{true}}}
\newcommand{\false}{{\mathsf{false}}}
\newcommand{\xdom}[1]{{\mathbf{#1}}}
\newcommand{\qed}{\,\Box}
\newtheorem{thm}{Theorem}
\newtheorem{cor}{Corollary}
\newtheorem{lem}{Lemma}
\newtheorem{prop}{Proposition}
\newtheorem{defn}{Definition}
\newtheorem{exmp}{Example}
\newtheorem{rem}{Remark}
\newtheorem{notatie}[thm]{Notation}
\newenvironment{proof}{\noindent{\em Proof.}}{\hspace*{1mm} \hfill$\Box$ \par \medskip }
\newenvironment{proof*}[1]{\noindent{\bf{Proof of {#1}}}}{$\qed$\\}
\title{Abstract Continuation Semantics for \\
Multiparty Interactions in Process Calculi based on CCS}
\author{Eneia Nicolae Todoran 
\institute{Department of Computer Science\\
Technical University of Cluj-Napoca\\
Cluj-Napoca, Romania}
\email{eneia.todoran@cs.utcluj.ro}
\and
Gabriel Ciobanu
\institute{Academia Europaea\\
 London, United Kindom\\
 https://www.ae-info.org}
\email{gabriel@info.uaic.ro}
}
\begin{document}
\maketitle

\begin{abstract}
We develop denotational and operational semantics designed with continuations
for process calculi based on Milner’s CCS extended with mechanisms offering
support for multiparty interactions. We investigate the abstractness of this
continuation semantics. We show that our continuation-based denotational
models are weakly abstract with respect to the corresponding operational models.
\end{abstract}

\section{Introduction}

In denotational semantics, continuations have a long tradition, being used to
model a large variety of control mechanisms \cite{str00,dyb89,fri86}.
However, it is usually considered that continuations do not perform well enough
as a tool for describing concurrent behaviour~\cite{mos10}.
In \cite{ent00,ct14}, we introduced a technique for denotational and
operational semantic design named {\em{continuation semantics for concurrency
(CSC)}} which can be used to handle advanced concurrent control mechanisms
\cite{ct20,ct22,ent19}. The distinctive characteristic of the CSC technique
is the modelling of continuations as structured configurations of computations.

In this paper, we employ the CSC technique in providing denotational and 
operational semantics for the multiparty interaction mechanisms incorporated 
in two process calculi, namely $\ccsn$ and $\ccsnp$~\cite{lv10}, both based 
on the well-known CCS \cite{mil89} -- $\ccsn$ and $\ccsnp$ extend CCS with 
constructs called {\em{joint input}} and {\em{joint prefix}}, respectively, 
that can be used to express multiparty synchronous interactions. The semantic 
models are developed using the methodology of metric semantics \cite{bv96}.

In particular, we investigate the abstractness of continuation semantics.
As it is known, the completeness condition of the full abstraction criterion
\cite{mil77} is often difficult to be fulfilled. In models designed with
continuations, the problem may be even more difficult~\cite{car94,ct17}.
Therefore, in \cite{ct17,ent19} we introduced a {\em{weak abstractness}}
optimality criterion which preserves the correctness condition, but relaxes
the completeness condition of the classic full abstractness criterion. The
weak abstractness criterion comprises a weaker completeness condition called
{\em{weak completeness}}, which is easier to establish because it needs to be
checked only for denotable continuations (that handle only computations denotable
by the language constructs and represent an invariant of the computation).
We study the abstractness of continuation semantics based on the weak abstractness
criterion.  The continuation-based denotational models presented in this article
for the multiparty interaction mechanisms incorporated in $\ccsn$ and $\ccsnp$
are weakly abstract with respect to the corresponding operational models.

Following the approach presented in \cite{bv96}, we start from the language
$\Lsyn$ given in Chapter 11 of~\cite{bv96}. As it is mentioned in \cite{bv96},
the language $\Lsyn$ is ``essentially based on CCS''. Then, we consider two
language named $\Lccsn$ and $\Lccsnp$ which extend the language $\Lsyn$ with
constructs for multiparty interactions: $\Lccsn$ extends $\Lsyn$ with the
joint input construct of $\ccsn$, and $\Lccsnp$ extends $\Lsyn$ with the joint
prefix construct of $\ccsnp$. We define and relate continuation-based
denotational and operational semantics for $\Lccsn$ and $\Lccsnp$.

\noindent
\paragraph*{\bf Contribution:}
By using the methodology of metric semantics, we develop original continuation
semantics for the multiparty interaction mechanisms incorporated in $\ccsn$
and $\ccsnp$. We provide a new representation of continuations
based on a construction presented in Section \ref{theory.bags}.
We show that the denotational models presented in this paper are weakly
abstract with respect to the corresponding operational models. 
A weak abstractness result for $\ccsn$ was also presented in \cite{ct20};
the weak abstractness result for $\ccsnp$ is new.
The weak completeness condition of the weak abstractness principle presented
in \cite{ct17,ent19} should be checked only for denotable continuations.
Intuitively, the collection of denotable continuations have to be an invariant
of the computation, in the sense that it is sufficiently large to
support arbitrary computations denotable by program statements. The formal
conditions capturing this intuition are studied initially in \cite{ct17,ent19}.
In this article we offer a more general formal framework. We present the formal
conditions which guarantee that the domain of denotable continuations is
invariant under the operators used in the denotational semantics, where the
domain of denotable continuations is the metric completion of the class of
denotable continuations.

\section{Preliminaries} \label{theory}

We assume the reader is familiar with metric spaces, multisets, metric 
semantics \cite{bv96}, and the $\lambda$-calculus notation. For the used 
notions and notations, we refer the reader to~\cite{ct12,ct14,ct16,ct17}.

The notation $(x\in)X$ introduces the set $X$ with typical element $x$ 
ranging over~$X$. We write $S\subseteq{}X$ to express that $S$ is a subset of 
$X$. $\xcard{S}$ is the cardinal number of set~$S$. 
Let $X$ be a countable set.

The set of all finite multisets over $X$ is represented by using the notation 
$\mset{X}$; the construction~$\mset{X}$ and the operations on multisets that 
are specified formally in~\cite{ct16}. By a slight abuse, the cardinal number 
of a multiset $m\in\mset{X}$ defined as $\sum_{x\in\fdom{m}}m(x)$ is also 
denoted by~$\xcard{m}$. 
Even though the same notation $\xcard{\cdot}$ is used regardless 
of whether '$\cdot$' is a set or a multiset, it is always evident from 
the context whether the argument '$\cdot$' is a set or a multiset.
We represent a multiset by stringing its elements between square brackets 
'$[$' and '$]$'. For instance, the empty multiset is written as 
$\mset{}$, and $\mset{e_1,e_2,e_2}$ is the multiset with one and two occurrences 
of the elements $e_1$ and $e_2$, respectively. 
If $f\in{X}\to{Y}$ is a function (with domain $X$ and 
codomain $Y$) and $S$ is a subset of $X$, $S\subseteq{X}$, 
the notation $\frestr{f}{S}$ 
denotes the function $f$ restricted to the domain $S$, i.e. 
$\frestr{f}{S}:S\to{Y}$, $\frestr{f}{S}(x)=f(x),\forall{x\in{S}}$.
Also, if $f\in{X}\fun{}{Y}$ is a function, 
$\var{f}{x}{y}:X\fun{}Y$ is the function defined (for $x,x'{\in}X, y{\in}Y$) by:
\mbox{$\var{f}{x}{y}(x')=$} \mbox{$\!\!\!\xif x'{=}x \xthen y \xelse f(x')$}.
Given a function $f\in{X}\to{X}$, we say that an element $x\in{X}$
is a {\em{fixed point}} of $f$ if $f(x)=x$, and if this fixed point 
is unique we write $x=\fix(f)$. 

We present semantic models designed using the mathematical framework of 
{\em{1-bounded complete metric spaces}} \cite{bv96}. We assume the following 
notions are known: {\em{metric}} and {\em{ultrametric}} space, 
{\em{isometry}} (between metric spaces, denoted by '$\cong$'), {\em{Cauchy 
sequence}}, {\em{complete}} metric space, {\em{metric completion}}, 
{\em{compact}} set, and the {\em{discrete metric}}. 
We use the notion of {\em{metric domain}} as a synonym for the notion of 
complete (ultra) metric space. We assume the reader is familiar with the 
standard metrics for defining composed metric structures \cite{bv96}. We use 
the constructs for $\half$-identity, disjoint union ($+$), function space 
($\to$), Cartesian product ($\times$), and the compact powerdomain.
Every Cauchy sequence in a complete metric space~$M$ has a limit that is also 
in $M$. If $(M_1,d_1)$ and $(M_2,d_2)$ are metric spaces, a function 
$f{:}M_1\to{M_2}$ is a {\em{contraction}} if \mbox{$\exists{c\in\Rset}$, 
\mbox{$0\leq{c}<1$}, $\forall{x,y}\in{M_1}$}\, \mbox{$[d_2(f(x),f(x)){\leq}c\,
{\cdot}\,d_1(x,y)]$}. If \mbox{$c=1$} we say that $f$~is {\em{nonexpansive}}; 
each nonexpansive function is {\em{continuous}}~\cite{bv96}. The set of all 
nonexpansive functions from $M_1$ to $M_2$ is denoted by $M_1\fun{1}M_2$.
We recall Banach's theorem. 
\begin{thm}[Banach] \label{theory.2}
Let $(M,d)$ be a non-empty complete metric space. 
Each contraction  $f:M\to{M}$ has a {\em{unique}} fixed point.
\end{thm}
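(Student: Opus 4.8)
The plan is the classical Picard iteration argument. Since $M$ is non-empty, first I would fix an arbitrary point $x_0 \in M$ and define the sequence $(x_n)_n$ by $x_{n+1} = f(x_n)$ for $n \in \Nset$. A routine induction using the contraction inequality yields $d(x_{n+1}, x_n) \leq c^n \cdot d(x_1, x_0)$ for every $n$.

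Next I would show that $(x_n)_n$ is a Cauchy sequence. The key estimate is, for $m \geq n$,
\[
d(x_m, x_n) \leq \sum_{k=n}^{m-1} d(x_{k+1}, x_k) \leq \left(\sum_{k=n}^{m-1} c^k\right) d(x_1, x_0) \leq \frac{c^n}{1-c}\, d(x_1, x_0),
\]
obtained by the triangle inequality together with the geometric-series bound (the case $c=0$ being trivial, and for $0 < c < 1$ the tail $c^n/(1-c) \to 0$ as $n \to \infty$). Hence for every $\epsilon > 0$ there is an $N$ with $d(x_m, x_n) < \epsilon$ whenever $m, n \geq N$, so $(x_n)_n$ is Cauchy; by completeness of $M$ it converges to some $x \in M$.

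Then I would verify that $x$ is a fixed point. Since a contraction is in particular nonexpansive, hence continuous (as recalled in the excerpt), $f(x) = f(\lim_n x_n) = \lim_n f(x_n) = \lim_n x_{n+1} = x$; alternatively one estimates $d(x, f(x)) \leq d(x, x_{n+1}) + d(f(x_n), f(x)) \leq d(x, x_{n+1}) + c\, d(x_n, x)$ and lets $n \to \infty$. Finally, for uniqueness, if $f(x) = x$ and $f(y) = y$ then $d(x,y) = d(f(x), f(y)) \leq c\, d(x,y)$, and since $0 \leq c < 1$ this forces $d(x,y) = 0$, i.e. $x = y$; one may then write $x = \fix(f)$.

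The only mildly delicate point is the Cauchy estimate, namely making the geometric-series tail bound uniform in $m$; everything else follows immediately from the definitions of contraction, completeness, and continuity.
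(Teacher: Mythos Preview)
Your argument is the standard Picard iteration proof and is correct. Note, however, that the paper does not actually prove this theorem: it is stated there as a recalled classical result (``We recall Banach's theorem'') with no accompanying proof, so there is nothing to compare your approach against beyond observing that you supply the textbook argument the paper omits.
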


With $\pco(\cdot)$ ($\pnco(\cdot)$) we denote the power set of {\em{compact}} 
({\em{non-empty and compact}}) subsets of '$\cdot$'. $\pfin(\cdot)$ denotes 
the power set of {\em{finite}} subsets of '$\cdot$' (we always endow 
$\pfin(\cdot)$ with the discrete metric).

Hereafter, we shall often suppress the metrics part in metric domain definitions. 
For example, we write $\half\cdot{}M$ and $M_1\times{}M_2$ instead of 
$(M,d_{\half\cdot{}M})$ and $(M_1\times{}M_2,d_{M_1\times{}M_2})$, respectively.

Let $(M,d),(M',d')$ be metric spaces. We write $(M,d)\lhd(M',d')$, or simply 
$M\lhd{M'}$, to express that~$M$ is a {\em{subspace}} of $M'$, i.e, 
$M\subseteq{M'}$ and $\frestr{d'}{M}=d$ (the restriction of metric $d'$ to 
$M$ coincides with $d$).

For compact sets we use Theorem \ref{theory.metric.4} (due to Kuratowski) and
the characterization given in Theorem~\ref{theory.metric.5}. Given a complete
metric space $(M,d)$ and a subset $X$, $X\subseteq{}M$, according to Theorem
\ref{theory.metric.5}, the statement that $X$ is compact is equivalent to the
statement that $X$ is the limit (with respect to Hausdorff metric $d_H$) of a
sequence of finite sets \cite{bz83}. The proofs of these theorems are also
provided in~\cite{bv96}.

\begin{thm} \label{theory.metric.4} [Kuratowski] 
Let $(M,d)$ be a complete metric space.
\begin{itemize}
\item[(a)] If $(X_i)_i$ is a Cauchy sequence in $(\pnco(M),d_H)$ then
\item[]\quad$\lim_iX_i=\{ \lim_ix_i \mid \forall{}i:x_i\in{}X_i, (x_i)_i
\textnormal{\,is a Cauchy sequence in\,} M \}$.
\item[(b)] If $(X_i)_i$ is a Cauchy sequence in $(\pco(M),d_H)$ then either,
for almost all $i$, $X_i=\emptyset$, and $\lim_iX_i=\emptyset$, or for almost all
$i$ (say for $i\geq{}n$),
$X_i\neq\emptyset$ and
\item[]\quad$\lim_iX_i=\{ \lim_{i\geq{}n}x_i \mid
\forall{}i\geq{}n:x_i\in{}X_i, (x_i)_i \textnormal{\,is a Cauchy sequence in\,} M \}$.
\item[(c)] $(\pco(M),d_H)$ and $(\pnco(M),d_H)$ are complete metric spaces.
\end{itemize}
\end{thm}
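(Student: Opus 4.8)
The plan is to prove the three parts in the order (c), (a), (b): the completeness argument for $\pnco(M)$ actually \emph{produces} the limit set, so (a) falls out of it immediately, and the $\pco(M)$ statements in (b) reduce to (a) by a short case analysis on whether the $X_i$ are eventually non-empty. Throughout, the only input is that $(M,d)$ is complete ($1$-bounded); I would \emph{not} use Theorem~\ref{theory.metric.5}, which comes afterwards.

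\emph{Completeness of $(\pnco(M),d_H)$, with (a).} Given a Cauchy sequence $(X_i)_i$ in $(\pnco(M),d_H)$, let $X=\{\lim_i x_i\mid \forall i\,(x_i\in X_i),\ (x_i)_i\textnormal{ is a Cauchy sequence in }M\}$ be the candidate limit (these limits exist since $M$ is complete; equivalently one may take $X=\bigcap_n\overline{\bigcup_{m\geq n}X_m}$). I would show, in turn: (i)~$X\neq\emptyset$, (ii)~$X$ is compact, (iii)~$d_H(X_i,X)\to 0$; together these give $X\in\pnco(M)$ and $X_i\to X$, which is completeness, and since $X$ is by construction the right-hand side of (a) and metric limits are unique, (a) follows as well. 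For (i): pass to a subsequence with $d_H(X_{i_k},X_{i_{k+1}})\leq 2^{-k}$, pick $y_1\in X_{i_1}$ arbitrarily, and use repeatedly that every point of $X_{i_k}$ lies within $2^{-k}$ of $X_{i_{k+1}}$ to build a Cauchy sequence $(y_k)_k$ with $y_k\in X_{i_k}$; extend its limit $y$ to a full selection $(x_i)_i$, $x_i\in X_i$, by choosing on each block between consecutive subsequence indices a point of $X_i$ close to the appropriate $y_k$ (the Cauchy property of $(X_i)_i$ bounds the relevant $d_H(X_i,X_{i_k})$), so that $(x_i)_i$ is Cauchy with limit $y$ and hence $y\in X$. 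Closedness of $X$ (needed for (ii)) is a diagonal argument: a limit of points of $X$ is again realized by a coherent selection along the $X_i$. For (iii) I would verify the two Hausdorff inclusions up to $\epsilon$ --- each point of $X_i$ (for $i$ large) can be grown into a Cauchy selection, hence lies near $X$, and each point of $X$, being the limit of a selection, lies near $X_i$, the uniformity over $X$ in the second inclusion coming from the Cauchy estimates on $(X_i)_i$. Finally, (ii): given $\epsilon$, choose $i$ with $d_H(X_i,X)<\epsilon$ by (iii); a finite $\epsilon$-net of the compact set $X_i$ yields a finite $2\epsilon$-net of $X$, so $X$ is totally bounded, and being closed in the complete space $M$ it is compact.

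\emph{Part (b).} If $X_i=\emptyset$ for infinitely many $i$, then --- with the convention $d_H(\emptyset,A)=1$ for $A\neq\emptyset$ in the $1$-bounded space $M$ --- the Cauchy property forces $X_i=\emptyset$ for all $i$ beyond some index, whence $\lim_iX_i=\emptyset$. Otherwise there is $n$ with $X_i\neq\emptyset$ for all $i\geq n$; then $(X_i)_{i\geq n}$ is a Cauchy sequence in $\pnco(M)$, and parts (a)/(c) apply to yield a non-empty compact limit described by the stated formula with the index range $i\geq n$. In either case the limit exists in $\pco(M)$, which establishes simultaneously the convergence formula of (b) and the completeness of $(\pco(M),d_H)$.

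\emph{Main obstacle.} The delicate point throughout is the coherent selection of points $x_i\in X_i$ along the \emph{whole} Cauchy sequence --- interpolating between subsequence indices for non-emptiness, diagonalizing for closedness, and, above all, obtaining \emph{uniform} (not merely pointwise) $\epsilon$-bounds for the second Hausdorff inclusion in (iii). The empty-set bookkeeping in (b) is a minor technicality that hinges only on fixing the value of $d_H(\emptyset,\cdot)$.
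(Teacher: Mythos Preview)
Your sketch is sound and follows the standard route to Kuratowski's theorem: build the candidate limit as the set of limits of coherent selections, prove non-emptiness via a fast subsequence, closedness by diagonalization, total boundedness by pulling back a finite net from some $X_i$, and then read off the Hausdorff convergence; the reduction of (b) to (a) by the empty/non-empty dichotomy is exactly right in the $1$-bounded setting.

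Note, however, that the paper does not give its own proof of this theorem at all: it is stated as a classical result attributed to Kuratowski, and the text immediately before it says ``The proofs of these theorems are also provided in~\cite{bv96}.'' So there is no in-paper argument to compare against; your outline is essentially the textbook proof one finds in de~Bakker--de~Vink, and in particular your decision not to invoke Theorem~\ref{theory.metric.5} is correct, since that characterization is stated afterwards and is itself referred to \cite{bv96}.
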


\begin{thm} \label{theory.metric.5} Let $(M,d)$ be a complete metric space.
A subset $X\subseteq{}M$ is compact whenever $X=\lim_iX_i$, where each $X_i$
is a finite subset of $M$ (the limit is taken with respect to the Hausdorff metric $d_H$).
\end{thm}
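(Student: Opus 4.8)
The plan is to deduce compactness of $X$ from the completeness of the compact powerdomain already recorded in Theorem~\ref{theory.metric.4}(c). First I would observe that the hypothesis $X=\lim_i X_i$ (with respect to $d_H$) makes $(X_i)_i$ a $d_H$-Cauchy sequence, via the routine estimate $d_H(X_i,X_j)\le d_H(X_i,X)+d_H(X,X_j)\to 0$. Each $X_i$ is finite, hence compact, and a finite set — the empty one included — belongs to $\pco(M)$; so $(X_i)_i$ is in fact a Cauchy sequence in $(\pco(M),d_H)$.

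Next I would invoke Theorem~\ref{theory.metric.4}(c): $(\pco(M),d_H)$ is a complete metric space, so $(X_i)_i$ converges in $\pco(M)$ to some $Y$, and $Y$ is compact by the very definition of $\pco(M)$. Since $\pco(M)$ is a subspace of the space of closed subsets of $M$ under $d_H$, and limits in a metric space are unique, the hypothesis $X=\lim_i X_i$ forces $X=Y$, which is compact. Alternatively, one can bypass Theorem~\ref{theory.metric.4}(c) and argue directly: for $\epsilon>0$, picking $i$ with $d_H(X,X_i)<\epsilon$ exhibits the finite set $X_i$ as an $\epsilon$-net for $X$, so $X$ is totally bounded; and $X$ is closed in the complete space $M$ (if $(y_n)_n\subseteq X$ converges to $y\in M$, then $d(y,X)\le d(y,y_n)+d(y_n,X)=d(y,y_n)\to 0$, so $y\in\overline X=X$), hence complete; a totally bounded complete metric space is compact.

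The only delicate point — and the natural place for the argument to need care — is the tacit assumption that $X$ is \emph{closed}: the Hausdorff distance satisfies $d_H(X,\overline X)=0$, so the statement ``$X=\lim_i X_i$ with respect to $d_H$'' should be read as an equality in the genuine metric space of closed subsets of $M$ (equivalently, $X=\overline X$), since otherwise the limit is determined only up to closure. Once this convention is made explicit, both routes above go through with only the routine triangle-inequality and $\epsilon$-net verifications left to fill in.
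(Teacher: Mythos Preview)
The paper does not prove this theorem itself; it simply refers the reader to~\cite{bv96}. Your proposal is correct, and both routes you sketch are standard; the first, which leverages the completeness of $(\pco(M),d_H)$ from Theorem~\ref{theory.metric.4}(c), is exactly the kind of argument the de~Bakker--de~Vink framework is set up to deliver and is likely close in spirit to the textbook proof cited.

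One minor note on your direct route: the parenthetical verification that $X$ is closed is circular as written (it concludes ``$y\in\overline X=X$'', assuming the very equality it aims to establish). You do, however, correctly identify in your final paragraph that closedness of $X$ is a \emph{convention} built into the statement---$d_H$ is a genuine metric only on closed subsets, so ``$X=\lim_iX_i$'' tacitly places $X$ there---rather than something derivable from the hypothesis. Without that convention the statement is actually false: take $X=\{1/k:k\ge1\}\subseteq\Rset$ and $X_i=\{1,1/2,\ldots,1/i\}$; then $d_H(X,X_i)=1/i\to0$ but $X$ is not compact. With the convention in force, both of your arguments are complete.
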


\begin{rem} \label{jmath.0}
\begin{itemize}
\item[(a)] 
If $M$ and $M'$ are metric spaces with subspaces $S$ and $S'$ ($S\lhd{}M$ and 
$S'\lhd{}M'$), then $S+S'\lhd{}M+M'$, $S\times{}S'\lhd{}M\times{}M'$, 
$(\A\to{}S)\lhd(\A\to{}M)$, $\pco(S)\lhd\pco(M)$ and $\pnco(S)\lhd\pnco(M)$ 
(see \cite{bv96}, chapter~10).
\item[(b)]
Let $(M,d), (M_1,d_1)$ and $(M_2,d_2)$ be metric spaces. It is easy to verify that, 
if \mbox{$M_1\lhd{}M$}, \mbox{$M_2\lhd{}M$} and \mbox{$M_1\subseteq{}M_2$} then 
\mbox{$M_1\lhd{}M_2$}. 
\end{itemize}
\end{rem}

\begin{defn} \label{jmath.01}
Given a metric space $(M,d)$, a {\em{completion}} of $(M,d)$ is a complete 
metric space $(\ovr{M},d')$ such that $M\lhd{\ovr{M}}$ and for each element 
$x\in{\ovr{M}}$ we have: $x=\lim_jx_j$, with $x_j\in{M}, \forall{j}\in\Nset$ 
(limit is taken with respect to metric $d'$). 
\end{defn}

\noindent
Each metric space has a completion that is unique up to isometry \cite{bv96}.
For the proof of Remark \ref{jmath.03}, see~\cite{ct17}. 

\begin{rem} \label{jmath.03}
Let $(M,d)$ be a complete metric space, and $X$ be a subset of $M$, $X\subseteq{M}$.
We use the notation $\co{X}{M}$ for the set
$\co{X}{M}\notation\{ x \mid x\in{M}, x=\lim_ix_i, \forall{i\in\Nset:x_i\in{X}},
(x_i)_i$ is a Cauchy sequence in~$X\}$,
where limits are taken with respect to $d$ (as $(M,d)$ is complete 
$\lim_ix_i\in{M}$). If we endow~$X$ with $d_X=\frestr{d}{X}$ and $\co{X}{M}$ 
with $d_{\co{X}{M}}=\frestr{d}{\co{X}{M}}$, then $(\co{X}{M},d_{\co{X}{M}})$ is a 
metric completion of $(X,d_X)$. \ 
It is easy to see that $X\lhd\co{X}{M}$ and $\co{X}{M}\lhd{}M$. 
\end{rem}

\begin{rem} \label{jmath.04}
\begin{itemize}
\item[(a)]
If $(M_1,d_1)$ and $(M_2,d_2)$ are complete metric spaces and $(x_i)_i$ is a 
Cauchy sequence in $M_1+M_2$, then for almost all $i$ (i.e., for all but a 
finite number of exceptions) we have that $x_i=(1,x_i')$ or $x_i=(2,x_i')$, where 
$(x_i')_i$ is a Cauchy sequence in $M_1$ or $M_2$, respectively (see \cite{bv96}).
\item[(b)]
Let $(M_1,d_1)$ and $(M_2,d_2)$ be complete metric spaces. 
If $(x_1^i,x_2^i)_i$ is a Cauchy sequence in $M_1\times{}M_2$, then $(x_1^i)_i$ 
is a Cauchy sequence in $M_1$ and $(x_2^i)_i$ is a Cauchy sequence in $M_2$.
Since $M_1$ and $M_2$ are complete, there exists $x_1\in{}M_1$ and $x_2\in{}M_2$
such that $x_1=\lim_ix_1^i$ and $x_2=\lim_ix_2^i$, and 
$\lim_i(x_1^i,x_2^i)=(x_1,x_2)=(\lim_ix_1^i,\lim_ix_2^i)$ \cite{bv96}. 
\item[(c)] 
Let $(M,d)$ be a complete metric space. Let $(x_i)_i$ be a convergent sequence
in~$M$ with limit $x=\lim_ix_i$.
Then $(x_i)_i$ has a subsequence $(x_{f(i)})_i$ such that
\vspace*{-0.1cm}
\begin{equation}
\forall{}n\ \forall{}j\geq{}n\,[d(x_{f(j)},x)\leq2^{-n}] ,  \label{prop1}
\end{equation}
\vspace*{-0.1cm}
where $f\!\!:\!\Nset\!\to\!\Nset$ is a strictly monotone mapping, i.e.,
\mbox{$f(i)\!<\!f(i')\!$~whenever~$i\!<\!i'\!$}.
We obtain such a subsequence (by imposing the condition that the function
$f:\Nset\to\Nset$ is strictly monotone and) by putting $f(0)=0$, and if $i>0$
then $f(i)=m$, where $m\in\Nset$ is the smallest natural number such that
\vspace*{-0.3cm}
\begin{equation}
\forall{}l\geq{}m\,[d(x_l,x)\leq2^{-i}]. \label{prop2}
\end{equation}
\vspace*{-0.1cm}
It is easy to see that this subsequence satisfies property \eqref{prop1}.
Clearly, if $n=0$ (which implies $2^{-n}=1$), then \eqref{prop1} holds. 
If $n>0$ and $j\geq{}n$, then property \eqref{prop1} also holds
because we infer (from~\eqref{prop2}) that $d(x_{f(j)},x)\leq2^{-j}\leq2^{-n}$.
\end{itemize}
\end{rem}

\begin{lem} \label{theory.metric.13}
Let $(M,d)$, $(M_1,d_1)$ and $(M_2,d_2)$ be complete metric spaces.
Let $S, S_1$ and $S_2$ be subsets of $M$, $M_1$ and $M_2$, respectively,
$S\subseteq{}M$, $S_1\subseteq{}M_1$ and $S_2\subseteq{}M_2$.
Let $A$ be an arbitrary set. Then
\begin{itemize}
\item[(a)] $\co{S_1+{}S_2}{M_1+{}M_2}=\co{S_1}{M_1}+\co{S_2}{M_2}$,
\item[(b)] $\co{S_1\times{}S_2}{M_1\times{}M_2}=\co{S_1}{M_1}\times\co{S_2}{M_2}$,
\item[(c)] $\A\to\co{S}{M}=\co{\A\to{}S}{\A\to{}M}$,
\item[(d)] $\pco(\co{S}{M})=\co{\pco(S)}{\pco(M)}$,
\item[(e)] $\pnco(\co{S}{M})=\co{\pnco(S)}{\pnco(M)}$.
\end{itemize}
\end{lem}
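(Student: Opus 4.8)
The plan is to prove each of (a)--(e) by establishing the two inclusions separately, working throughout from the description of $\co{\cdot}{\cdot}$ recorded in Remark~\ref{jmath.03}: for a subset $X$ of a complete space $N$, the set $\co{X}{N}$ consists precisely of the limits in $N$ of Cauchy sequences all of whose terms lie in $X$. For every part, one of the two inclusions is a matter of unwinding this definition together with Remark~\ref{jmath.0}(a) (which guarantees that $S_1+S_2$, $S_1\times S_2$, $\A\to S$, $\pco(S)$ and $\pnco(S)$ are subspaces of the corresponding ambient domains, so that ``Cauchy in the subspace'' coincides with ``Cauchy in the ambient space''); the content lies in the reverse inclusion, which in each case is reduced to how Cauchy sequences behave under the relevant metric construction.

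For (a), Remark~\ref{jmath.04}(a) says that a Cauchy sequence in $M_1+M_2$ eventually stays in a single summand, so a Cauchy sequence in $S_1+S_2$ is, up to a finite prefix, a Cauchy sequence in $S_1$ or in $S_2$ whose limit therefore lies in $\co{S_1}{M_1}+\co{S_2}{M_2}$; conversely a Cauchy sequence in $S_1$ (or $S_2$) is also one in $S_1+S_2$. For (b), Remark~\ref{jmath.04}(b) says that a Cauchy sequence in $M_1\times M_2$ has Cauchy coordinate sequences whose limits form the limit of the product sequence, which immediately identifies $\co{S_1\times S_2}{M_1\times M_2}$ with $\co{S_1}{M_1}\times\co{S_2}{M_2}$.

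Part (c) uses the one genuine idea, namely converting a function $f\in\A\to\co{S}{M}$ into a uniformly convergent sequence of $S$-valued functions. For each $a\in\A$ write $f(a)=\lim_i s^a_i$ with $s^a_i\in S$; by Remark~\ref{jmath.04}(c) we may replace each such sequence by a subsequence satisfying $d(s^a_i,f(a))\le 2^{-i}$ for all $i$. Then $f_i(a)=s^a_i$ defines $f_i\in\A\to S$ with $d(f_i,f)=\sup_{a\in\A}d(s^a_i,f(a))\le 2^{-i}$, so $(f_i)_i$ is Cauchy in $\A\to S$ with limit $f$, giving $f\in\co{\A\to S}{\A\to M}$. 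The converse uses only $d(f_i(a),f_j(a))\le d(f_i,f_j)$ and $d(f_i(a),f(a))\le d(f_i,f)$, which make each $(f_i(a))_i$ Cauchy in $S$ with limit $f(a)\in\co{S}{M}$.

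Parts (d) and (e) are where I expect the main obstacle, both because of Hausdorff-metric bookkeeping and because the approximating sets must be produced explicitly. For the inclusion $\pco(\co{S}{M})\subseteq\co{\pco(S)}{\pco(M)}$ (and likewise for $\pnco$), take $X\subseteq\co{S}{M}$ compact (non-empty, for (e)); by total boundedness, equivalently by Theorem~\ref{theory.metric.5}, choose for each $n$ a finite net $F_n\subseteq X$ with $d_H(F_n,X)\le 2^{-n}$, then for each $y\in F_n$ pick $s_y\in S$ with $d(s_y,y)\le 2^{-n}$ and set $G_n=\{\,s_y\mid y\in F_n\,\}$, a finite subset of $S$ with $d_H(G_n,F_n)\le 2^{-n}$, hence $d_H(G_n,X)\le 2^{-n+1}$; the sequence $(G_n)_n$ lies in $\pco(S)$ (resp.\ $\pnco(S)$), is Cauchy there since it converges in $\pco(M)$ (resp.\ $\pnco(M)$), complete by Theorem~\ref{theory.metric.4}(c), and has limit $X$, so $X\in\co{\pco(S)}{\pco(M)}$ (resp.\ $\co{\pnco(S)}{\pnco(M)}$). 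For the reverse inclusion, if $X=\lim_i X_i$ with $(X_i)_i$ Cauchy and each $X_i\subseteq S$ compact, then $X$ is compact by completeness of the hyperspace, and by the Kuratowski description of the Hausdorff limit in Theorem~\ref{theory.metric.4}(a)--(b) every point of $X$ has the form $\lim_i x_i$ with $x_i\in X_i\subseteq S$, so $X\subseteq\co{S}{M}$. The care points here are the empty-set alternative of Theorem~\ref{theory.metric.4}(b) in part (d), and checking that $(G_n)_n$ really is Cauchy inside the subspace $\pco(S)\lhd\pco(M)$ rather than only in $\pco(M)$.
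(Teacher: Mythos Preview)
Your proposal is correct and follows essentially the same approach as the paper: the same use of Remark~\ref{jmath.04}(a),(b) for parts (a),(b); the same ``choose for each $a$ a sequence converging fast to $f(a)$ and bundle these into $f_i$'' argument for (c), justified (as in the paper) via Remark~\ref{jmath.04}(c); and for (d),(e) the same appeal to Theorem~\ref{theory.metric.4} for one inclusion and Theorem~\ref{theory.metric.5} for the other. The paper's own write-up of the direction $\pco(\co{S}{M})\subseteq\co{\pco(S)}{\pco(M)}$ is a one-line invocation of Theorem~\ref{theory.metric.5}, whereas you spell out the perturbation of finite nets into $S$; this is exactly the missing detail, so your version is in fact more complete there. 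Your flagged ``care point'' about $(G_n)_n$ being Cauchy in $\pco(S)$ rather than just $\pco(M)$ is a non-issue by Remark~\ref{jmath.0}(a): since $\pco(S)\lhd\pco(M)$, the metrics agree.
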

\begin{proof}
Clearly, $S_1+S_2\subseteq{}M_1+M_2$, $S_1\times{}S_2\subseteq{}M_1\times{}M_2$,
$(\A\to{S})\subseteq(\A\to{}M)$, $\pco(S)\subseteq\pco(M)$ and
$\pnco(S)\subseteq\pnco(M)$ (see Remark \ref{jmath.0}(a)).
The proof for part (a) follows by using Remark \ref{jmath.04}(a).
The proof for part (e) is similar to the proof for part (d). We provide below
the proofs for parts (b), (c) and (d).
\begin{itemize}
\item[(b)] 
We have $\co{S_1\times{}S_2}{M_1\times{}M_2}=
\{ (x_1,x_2) \mid (x_1,x_2)\in{M_1\times{}M_2},(x_1,x_2)=\lim_i(x_1^i,x_2^i),$
\item[]\hspace*{5cm}$(x_1^i,x_2^i)_i \textnormal{\,\,is a Cauchy sequence
in\,} S_1\times{}S_2 \}$ \qquad[Remark \ref{jmath.04}(b)]
\item[]$=\{ (x_1,x_2) \mid x_1\in{M_1}, x_1=\lim_ix_1^i, (x_1^i)_i
\textnormal{\,\,is a Cauchy sequence in\,} S_1,$
\item[]\hspace*{5cm}$x_2\in{M_2}, x_2=\lim_ix_2^i, (x_2^i)_i
\textnormal{\,\,is a Cauchy sequence in\,} S_2\}$
\item[]$=\{ x_1 \mid x_1\in{M_1}, x_1=\lim_ix_1^i, (x_1^i)_i
\textnormal{\,\,is a Cauchy sequence in\,} S_1\} \times$
\item[]\quad$\{ x_2 \mid x_2\in{M_2}, x_2=\lim_ix_2^i, (x_2^i)_i
\textnormal{\,\,is a Cauchy sequence in\,} S_2\}$
\item[]$=\co{S_1}{M_1}\times\co{S_2}{M_2}$.
\item[(c)]
Let $f\in{}\A\to\co{S}{M}$ (the space $\co{S}{M}$ is complete, by Remark 
\ref{jmath.03}). We define a Cauchy sequence $(f_i)_i$ in $\A\to{}S$ 
(\mbox{$f_i\in{}\A\to{}S$}, for all $i\in\Nset$) as follows: for each 
$a\in{}A$, since $f(a)\in\co{S}{M}$, we consider a Cauchy sequence 
$(x_i^a)_i$ in $S$ ($x_i^a\in{}S$ for all $i\in\Nset$) such that 
$\lim_ix_i^a=f(a)$. Without loss of generality, we may assume that 
\mbox{$\forall{}i\geq{}n\ [d(x_i^a,f(a))\leq2^{-n}]$} for any $n\in\Nset$.

For all \mbox{$i\in\Nset$}, we define $f_i\in{}\A\to{}S$ by $f_i(a)=x_i^a$,
for each $a\in{}A$. One can check that $(f_i)_i$ is a Cauchy sequence in
$\A\to{}S$ and $\lim_if_i=f$. By remarks \ref{jmath.03} and \ref{jmath.0}(a),
\mbox{$\A\to\co{S}{M}\lhd{}\A\to{}M$}, and so
$f\in{}\A\to{}M$. Therefore, \mbox{$f\in\co{\A\to{}S}{\A\to{}M}$}. Since $f$
was arbitrarily selected, we obtain
$\A\to\co{S}{M}\subseteq\co{\A\to{}S}{\A\to{}M}$.\\ 
\hspace*{0.5cm}Next, let
$f\in\co{\A\to{}S}{\A\to{}M}$. Then $f=\lim_if_i$, where $(f_i)_i$ is a
Cauchy sequence in \mbox{$\A\to{}S$}. It is easy to verify that, since
$(f_i)_i$ is a Cauchy sequence in $\A\to{}S$, $(f_i(a))_i$ is a Cauchy
sequence in $S$ for each $a\in{}A$.
Therefore, since $(f_i(a))_i$ is a Cauchy sequence in $S$, one can check that
$\lim_if_i(a)=f(a)\in{}M$ for each $a\in{}A$. Hence,
\mbox{$f\!\in\!{}\A\to\co{S}{M}$}, which means that we have
$\co{\A\to{}S}{\A\to{}M}\subseteq{}\A\to\co{S}{M}$. We conclude that
$\A\to\co{S}{M}=\co{\A\to{}S}{\A\to{}M}$. 
\item[(d)]
First, we observe that $\emptyset\in\pco(\co{S}{M})$, and also
$\emptyset\in\co{\pco(S)}{\pco(M)}$. \\ 
\hspace*{0.5cm}Next, let $X\in\co{\pco(S)}{\pco(M)}$, $X\neq\emptyset$. Since
$X\in\co{\pco(S)}{\pco(M)}$, then $X\in\pco(M)$ and $X=\lim_iX_i$, where $(X_i)_i$
is a Cauchy sequence with \mbox{$X_i\in\pco(S)$} for all $i\in\Nset$. By
Theorem~\ref{theory.metric.4} (assuming that $X_i\neq\emptyset$ for almost
all $i$, say for $i\geq{}n$), we have
\begin{itemize}
\item[]$X=\lim_iX_i=\{ \lim_{i\geq{}n}x_i \mid \forall{}i\geq{}n: x_i\in{}X_i$,
$(x_i)_{i=n}^{\infty}$ is a Cauchy sequence in $S\}$
\item[]$=\{ x \mid x\in{}M, x=\lim_{i\geq{}n}x_i, \forall{}i\geq{}n: x_i\in{}X_i$, 
$(x_i)_{i=n}^{\infty}$ is a Cauchy sequence in $S\}$
\item[]$\subseteq \{ x \mid x\in{}M, x=\lim_ix_i, \forall{}i\in\Nset: x_i\in{}S$, 
$(x_i)_i$ is a Cauchy sequence in $S\} = \co{S}{M}$.
\end{itemize}
Since $X$ is compact and $X\subseteq\co{S}{M}$, we have $X\in\pco(\co{S}{M})$.\\
Therefore, $\co{\pco(S)}{\pco(M)}\subseteq\pco(\co{S}{M})$. \\
\hspace*{0.5cm}
The proof that $\pco(co(S|M))\subseteq co(\pco(S)|\pco(M))$
follows by using Theorem~\ref{theory.metric.5}.
\end{itemize}
\end{proof}

\subsection{Denotable continuations}

The completeness condition of the weak abstraction criterion presented in 
this paper uses a notion of {\em{denotable continuation}}. The class of 
denotable continuations represents an invariant of the computation, and its 
definition relies on a construction that employs a compliance notion in 
function spaces (presented in Definition \ref{jmath.1}). The class of 
denotable continuations is introduced formally in Definition \ref{jmath.14}.

\begin{defn} \label{jmath.1}
Let $(M_1,d_1)$ and $(M_2,d_2)$ be metric spaces. 
Let $S_1$ and $S_2$ be nonempty subsets of $M_1$~and~$M_2$, respectively, 
$S_1\subseteq{}M_1$, and $S_2\subseteq{}M_2$.  
We define the metric space 
\mbox{$(\inv{M_1}{S_1}{M_2}{S_2},\dc)$} by:\\
\centerline{$\inv{M_1}{S_1}{M_2}{S_2}=\{ \f \mid \f\in{}M_1\fun{1}M_2, (\forall{}x\in{}S_1\,:\,\f(x)\in{}S_2)\, \}$
\quad\qquad$\dc=\frestr{\df}{\inv{M_1}{S_1}{M_2}{S_2}}$,}\\
where $\df$ is the standard metric defined on $M_1\fun{1}M_2$ \cite{bv96},%
\footnote{The metric defined on $M_1\fun{1}M_2$ is also presented in \cite{ct12} (Definition 2.7).}
and $\dc$ is the restriction of $\df$ to $\inv{M_1}{S_1}{M_2}{S_2}$. 
We say that $(\inv{M_1}{S_1}{M_2}{S_2},\dc)$ is an {\em{$S_1\to{}S_2$ 
compliant function space}}.
\end{defn}

Clearly, $\inv{M_1}{S_1}{M_2}{S_2}$ is a subset of $M_1\fun{1}M_2$:   
\mbox{$\inv{M_1}{S_1}{M_2}{S_2}\subseteq\,$}\mbox{$M_1\fun{1}M_2$} 
($\inv{M_1}{S_1}{M_2}{S_2}$ contains all nonexpansive functions $f\!\in\!{}M_1\!\fun{1}\!M_2$,
that in addition satisfy the property: \mbox{$\!(\forall{}x\!\in\!{}S_1\!:\!\f(x)\!\in\!{}S_2)$}).  

\begin{rem} \label{jmath.2}
As in Definition~\ref{jmath.1}, let $(M_1,d_1)$ and $(M_2,d_2)$ be metric spaces. 
Let $S_1$ and $S_2$ be nonempty subsets of $M_1$~and~$M_2$, respectively, 
$S_1\subseteq{}M_1$, $S_2\subseteq{}M_2$. 
One can establish the properties presented below. 
\begin{itemize}
\item[(a)] $(\inv{M_1}{S_1}{M_2}{S_2},\dc)$ is a subspace of $(M_1\fun{1}M_2,\df)$): $\inv{M_1}{S_1}{M_2}{S_2}\lhd{}M_1\fun{1}M_2$.  
\item[(b)] If  $(M_1,d_1)$ and $(M_2,d_2)$ are ultrametric then 
$(\inv{M_1}{S_1}{M_2}{S_2},\dc)$ is also an ultrametric space.
\item[(c)] 
The sets $S_1$ and $S_2$ can be endowed with the metrics $\frestr{d_1}{S_1}$ 
and $\frestr{d_2}{S_2}$, respectively. If the spaces $(M_2,d_2)$ and 
$(S_2,\frestr{d_2}{S_2})$ are complete then $(\inv{M_1}{S_1}{M_2}{S_2},\dc)$ 
is also a complete metric space.
\end{itemize}
\end{rem} 

\begin{rem} \label{jmath.3}
Let $M_1$ and $M_2$ be metric spaces, with subspaces $S_1$ and $S_2$ such that 
$S_1\lhd{}M_1$ and $S_2\lhd{}M_2$.  Then 
we can construct the $S_1\to{}S_2$ compliant space $(\inv{M_1}{S_1}{M_2}{S_2},\frestr{\df}{\inv{M_1}{S_1}{M_2}{S_2}})$,
and (since, by Remark \ref{jmath.03},  $\co{S_i}{M_i}\lhd{}M_i$, for $i=1,2$) we can also construct 
the \mbox{$\co{S_1}{M_1}\to\co{S_2}{M_2}$} compliant function space
$(\inv{M_1}{\co{S_1}{M_1}}{M_2}{\co{S_2}{M_2}},\frestr{\df}{\inv{M_1}{\co{S_1}{M_1}}{M_2}{\co{S_2}{M_2}}}$. 
\end{rem}

\begin{lem} \label{jmath.4}
Let $(M_1,d_1)$ and $(M_2,d_2)$ be complete metric spaces. 
Let $S_1$ and $S_2$ be nonempty subsets of $M_1$ and $M_2$ such that 
$S_1\subseteq{}M_1$ and $S_2\subseteq{}M_2$. If $\f\in\inv{M_1}{S_1}{M_2}{S_2}$, 
then $\f\in\inv{M_1}{\co{S_1}{M_1}}{M_2}{\co{S_2}{M_2}}$.
\end{lem}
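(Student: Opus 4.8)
The plan is to show that the map $\f$, which is nonexpansive from $M_1$ to $M_2$ and sends $S_1$ into $S_2$, automatically sends the closure $\co{S_1}{M_1}$ into the closure $\co{S_2}{M_2}$; nonexpansiveness itself is unchanged, so the only real content is the compliance condition on the larger set. First I would take an arbitrary $x\in\co{S_1}{M_1}$. By the definition of $\co{\cdot}{\cdot}$ in Remark~\ref{jmath.03}, there is a Cauchy sequence $(x_i)_i$ in $S_1$ with $x=\lim_i x_i$ (the limit taken in the complete space $M_1$).

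Next I would apply $\f$ pointwise to obtain the sequence $(\f(x_i))_i$. Since each $x_i\in S_1$ and $\f\in\inv{M_1}{S_1}{M_2}{S_2}$, we have $\f(x_i)\in S_2$ for every $i$. Because $\f$ is nonexpansive, $d_2(\f(x_i),\f(x_j))\leq d_1(x_i,x_j)$, so $(\f(x_i))_i$ is a Cauchy sequence in $S_2$. As $M_2$ is complete, this sequence has a limit in $M_2$; and since $\f$ is nonexpansive hence continuous, that limit is exactly $\f(x)$. Thus $\f(x)$ is the limit of a Cauchy sequence drawn from $S_2$, which is precisely the condition $\f(x)\in\co{S_2}{M_2}$.

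Since $x$ was an arbitrary element of $\co{S_1}{M_1}$, this establishes $(\forall x\in\co{S_1}{M_1}:\f(x)\in\co{S_2}{M_2})$. Combined with the fact that $\f$ was already nonexpansive from $M_1$ to $M_2$, Definition~\ref{jmath.1} gives $\f\in\inv{M_1}{\co{S_1}{M_1}}{M_2}{\co{S_2}{M_2}}$ — noting that this compliant function space is well-formed by Remark~\ref{jmath.03}, which guarantees $\co{S_1}{M_1}$ and $\co{S_2}{M_2}$ are nonempty (they contain $S_1$ and $S_2$). There is no genuine obstacle here: the argument is a routine interchange of a continuous map with a limit, and the mild subtlety — that one must verify the image sequence stays inside $S_2$ and is Cauchy there, rather than merely converging in $M_2$ — is handled directly by the compliance hypothesis together with nonexpansiveness.
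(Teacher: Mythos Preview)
Your proof is correct and is precisely the natural argument: nonexpansiveness gives continuity, continuity pushes limits through $f$, and the compliance hypothesis ensures the approximating sequence lives in $S_2$, so the limit lands in $\co{S_2}{M_2}$. The paper states this lemma without proof, so there is no alternative argument to compare against; your reasoning is exactly what one would expect and fills the gap cleanly.
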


\begin{cor} \label{jmath.5} 
Let $(M_1,d_1)$ and $(M_2,d_2)$ be complete metric spaces, 
and $S_1$ and $S_2$ be nonempty subsets of~$M_1$ and~$M_2$
($S_1\subseteq{}M_1$, $S_2\subseteq{}M_2$). Then we have 
\mbox{$\inv{M_1}{S_1}{M_2}{S_2}\lhd\inv{M_1}{\co{S_1}{M_1}}{M_2}{\co{S_2}{M_2}}$}. 
\end{cor}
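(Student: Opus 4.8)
The plan is to derive the corollary directly from Lemma~\ref{jmath.4} together with the transitivity-type property of subspaces recorded in Remark~\ref{jmath.0}(b), so that essentially no new argument is needed. First I would check that the larger compliant function space is well-defined. Since $S_1\subseteq{}M_1$ and $S_2\subseteq{}M_2$, endowing $S_i$ with $\frestr{d_i}{S_i}$ makes $S_i$ a subspace of $M_i$; then by Remark~\ref{jmath.03} we have $\co{S_i}{M_i}\lhd{}M_i$, and in particular $S_i\subseteq\co{S_i}{M_i}$, so $\co{S_1}{M_1}$ and $\co{S_2}{M_2}$ are nonempty subsets of $M_1$ and $M_2$. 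Hence Definition~\ref{jmath.1} (and Remark~\ref{jmath.3}) applies, and $(\inv{M_1}{\co{S_1}{M_1}}{M_2}{\co{S_2}{M_2}},\frestr{\df}{\inv{M_1}{\co{S_1}{M_1}}{M_2}{\co{S_2}{M_2}}})$ is a legitimate $\co{S_1}{M_1}\to\co{S_2}{M_2}$ compliant function space.

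Next I would establish the set inclusion $\inv{M_1}{S_1}{M_2}{S_2}\subseteq\inv{M_1}{\co{S_1}{M_1}}{M_2}{\co{S_2}{M_2}}$, which is exactly the content of Lemma~\ref{jmath.4}: any $\f\in\inv{M_1}{S_1}{M_2}{S_2}$ is a nonexpansive map in $M_1\fun{1}M_2$ that moreover sends $\co{S_1}{M_1}$ into $\co{S_2}{M_2}$, hence lies in $\inv{M_1}{\co{S_1}{M_1}}{M_2}{\co{S_2}{M_2}}$.

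Finally I would upgrade this set inclusion to the subspace relation $\lhd$. By Remark~\ref{jmath.2}(a), both $\inv{M_1}{S_1}{M_2}{S_2}$ and $\inv{M_1}{\co{S_1}{M_1}}{M_2}{\co{S_2}{M_2}}$ are subspaces of $(M_1\fun{1}M_2,\df)$, each carrying the appropriate restriction of $\df$. Combining this with the inclusion obtained in the previous step, Remark~\ref{jmath.0}(b) applied with ambient space $M_1\fun{1}M_2$ (its ``$M_1$'' being $\inv{M_1}{S_1}{M_2}{S_2}$ and its ``$M_2$'' being $\inv{M_1}{\co{S_1}{M_1}}{M_2}{\co{S_2}{M_2}}$) yields $\inv{M_1}{S_1}{M_2}{S_2}\lhd\inv{M_1}{\co{S_1}{M_1}}{M_2}{\co{S_2}{M_2}}$, which is the claim. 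I do not expect any genuine obstacle here; the only point that deserves care is the well-definedness of the larger compliant space, handled in the first step, after which the result is an immediate consequence of Lemma~\ref{jmath.4} and the bookkeeping of restricted metrics.
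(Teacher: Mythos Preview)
Your proposal is correct and matches the paper's intent: the corollary is stated immediately after Lemma~\ref{jmath.4} without an explicit proof, so it is meant to follow directly from that lemma together with the subspace bookkeeping you describe via Remarks~\ref{jmath.2}(a) and~\ref{jmath.0}(b).
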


\begin{rem} \label{jmath.7}
In the metric approach \cite{bv96}, a continuation-based denotational semantics 
\mbox{$\dsem:\Lang\to\D$} is a function which maps elements of a
language $\Lang$ to values in a domain $\D\cong\C\fun{1}\Rd$,
where $\D$ is the domain of {\em{computations}} (or {\em{denotations}}),
$\C$ is the domain of {\em{continuations}}
and $\Rd$ is a domain of final answers.
Note that $\D$, $\C$ and $\Rd$ are metric domains, i.e., complete metric spaces. 
In general, the domain of continuations $\C$ is given by an equation  of the form 
$\C=\cdots(\half\cdot\D)\cdots$, i.e., the definition of $\C$ depends on the domain $\D$.
In this paper, we consider only denotational semantics designed using 
domain equations of the form \mbox{$\D\cong\C\fun{1}\Rd$} 
that have unique solutions (up to isometry $\cong$) 
\cite{ar89}, and we focus on ultrametric domains.
\footnote{In the applications presented in this article, 
the domains $\D$, $\C$ and $\Rd$ are complete ultrametric spaces.}
\end{rem}

The {\em{semantic operators}} that are used in the definition of a denotational
semantics \mbox{$\dsem:\Lang\to\D$} are {\em{nonexpansive 
functions}} that receive as arguments and 
yield as results values of various types, including (combinations of) the domains $\D$, $\C$ and $\Rd$.  

\begin{defn} \label{jmath.8}
Let $\dsem:\Lang\to\D$ be a continuation-based denotational semantics, 
where the semantic domain \mbox{$\D\cong\C\fun{1}\Rd$} is as in Remark \ref{jmath.7}. 
We define two classes of metric domains $\Aclass$ and $\Fclass$ for $\dsem$,
class $\Aclass$ with typical element $\Arg$ and 
class $\Fclass$ with typical element $\F$: 

\centerline{$\Arg \xbnf \Md \xsep \D \xsep \C \xsep \Arg\times\Arg 
\xsep \Arg+\Arg \xsep \A\to\Arg \xsep \pco(\Arg) \xsep \pnco(\Arg)$ ,}

\centerline{$\F \xbnf \Arg\fun{1}\Arg$ .}

An element $\Arg\in\Aclass$ is an {\em{argument type}}, an element 
$\F\in\Fclass$ is an {\em{operator type}}. 

\noindent
Here $\Md$ is an arbitrary metric domain (a complete metric space) 
that does not depend on~either~$\D$~or~$\C$.%
\footnote{In particular, $\Md$ could be $\Rd$, $\Md=\Rd$ (in case $\Rd$ does not depend on~either~$\D$~or~$\C$).
In the applications presented in this paper the final domain $\Rd$ does not depend on~either~$\D$~or~$\C$. 
In general, domain $\Rd$ may depend on $\D$ (see chapter 18 of \cite{bv96}), 
in which case $\Rd$ may need to be modelled as a more complex  argument type. 
}
$A$ is an arbitrary set. The composed domains $\Arg\in\Aclass$ and $\F\in\Fclass$
are endowed with the standard metrics defined on the product space and the function space, respectively \cite{bv96}.%
\footnote{The metrics defined on composed spaces are also presented in \cite{ct12} (Definition 2.7).}
\end{defn}

\begin{rem} \label{jmath.80}
Since in Definition \ref{jmath.8}, $\Md$, $\D$ and $\C$ are complete spaces, 
any argument type $\Arg\in\Aclass$ is a metric domain (a complete metric space).
Also, any operator type $\F\!\in\!\Fclass$ is a metric domain~\cite{bv96}.\footnote{The completeness properties of composed spaces are also presented 
in \cite{ct12} (Remark 2.8).} 
\end{rem}

Note that the (restricted) function space $\A\to\Arg$ and the compact and
non-empty and compact powerdomain constructions $\pco(\Arg)$ and
$\pnco(\Arg)$) \footnote{Since in practice continuations
are finite structures, and since any finite set is compact \cite{bv96}, the
compactness requirement is satisfied naturally in most applications.} are
not needed in the approach presented in this paper, and are rarely used in
practice to specify argument types. The compact powerdomain constructions can
be used to specify nondeterministic behaviour by using operators for
nondeterministic scheduling \footnote{To give an example,
for a nature inspired formalism \cite{ct22}, it is presented a
denotational semantics that uses a nondeterministic scheduler mapping which
yields a collection of schedules, where each schedule is a pair consisting of
a denotation (computation) and a corresponding continuation.}. In this paper,
the specification of nondeterministic behaviour is given in the definition of
the semantic operators for parallel composition and nondeterministic choice
(without the need for nondeterministic schedulers).
However, the class $\Aclass$ can be extended with 
other constructions, including the (restricted) function space $\A\to\Arg$ 
(where $\A$ is an arbitrary set), and the compact and non-empty and compact 
powerdomain constructions $\pco(\Arg)$ and $\pnco(\Arg)$).

\begin{defn} \label{jmath.10}
We consider a continuation-based denotational semantics $\dsem:\Lang\to\D$, 
where the semantic domain is \mbox{$\D\cong\C\fun{1}\Rd$} as in Definition 
\ref{jmath.8}. Let $\Dt$ and $\Ct$ be subspaces of domains $\D$ and $\C$ such 
that~\mbox{$\Dt\lhd\D$} and~\mbox{$\Ct\lhd\C$}. For any argument type 
$\Arg\in\Aclass$, we define the metric space $\ainv{\Arg}{\Dt}{\Ct}$ (using 
induction on the structure of $\Arg$) by:\\
\centerline{$\ainv{\Md}{\Dt}{\Ct}=\Md$ \qquad$\ainv{\D}{\Dt}{\Ct}=\Dt$ \qquad $\ainv{\C}{\Dt}{\Ct}=\Ct$}\\
\centerline{$\ainv{(\Arg_1\times\Arg_2)}{\Dt}{\Ct}=(\ainv{\Arg_1}{\Dt}{\Ct})\times(\ainv{\Arg_2}{\Dt}{\Ct})$.}\\
\centerline{$\ainv{(\Arg_1+\Arg_2)}{\Dt}{\Ct}=(\ainv{\Arg_1}{\Dt}{\Ct})+
(\ainv{\Arg_2}{\Dt}{\Ct})$}\\
\centerline{$\ainv{(\A\to\Arg)}{\Dt}{\Ct}=\A\to({\ainv{\Arg}{\Dt}{\Ct}})$}\\
\centerline{$\ainv{(\pco(\Arg))}{\Dt}{\Ct}=\pco(\ainv{\Arg}{\Dt}{\Ct})$}\\
\centerline{$\ainv{(\pnco(\Arg))}{\Dt}{\Ct}=\pnco(\ainv{\Arg}{\Dt}{\Ct})$.}\\
We use a similar notation for operator types $\F\in\Fclass$. Namely, 
if $\F=\Arg_1\fun{1}\Arg_2$ (with $\Arg_1,\Arg_2\in\Aclass$), we define the space 
$\ainv{\F}{\Dt}{\Ct}$ by: \\
\centerline{$\ainv{\F}{\Dt}{\Ct}=\ainv{(\Arg_1\fun{1}\Arg_2)}{\Dt}{\Ct}=\inv{\Arg_1}{\ainv{\Arg_1}{\Dt}{\Ct}}{\Arg_2}{\ainv{\Arg_2}{\Dt}{\Ct}}$.}
For any \mbox{$\Arg\!\in\!\Aclass$} and \mbox{$\F\!\in\!\Fclass$} we have 
\mbox{$\ainv{\Arg}{\Dt}{\Ct}\lhd\Arg$} and \mbox{$\ainv{\F}{\Dt}{\Ct}\lhd\F$} 
(Remark \ref{jmath.100}), and we endow the spaces 
\mbox{$\ainv{\Arg}{\Dt}{\Ct}$} and $\ainv{\F}{\Dt}{\Ct}$
with the metrics \mbox{$\frestr{d_{\Arg}}{\!\ainv{\Arg}{\Dt}{\Ct}}\!$}
and \mbox{$\frestr{d_{\F}}{\!\ainv{\F}{\Dt}{\Ct}}$}, respectively.
\end{defn}

\begin{rem} \label{jmath.100}
Let $\dsem:\Lang\to\D$ with $\D\cong\C\fun{1}\Rd$
be a continuation-based denotational semantics, as in Definition \ref{jmath.8}. 
Let $\Dt$ and $\Ct$ be subspaces of domains $\D$ and $\C$, respectively, \mbox{$\Dt\lhd\D$} and \mbox{$\Ct\lhd\C$}.
Let $\Arg\in\Aclass$ and $\F\in\Fclass$.
\begin{itemize}
\item[(a)] The spaces $\ainv{\Arg}{\Dt}{\Ct}$ and $\ainv{\F}{\Dt}{\Ct}$
are well-defined, $\ainv{\Arg}{\Dt}{\Ct}\lhd\Arg$ and $\ainv{\F}{\Dt}{\Ct}\lhd\F$.
\item[(b)] Assuming that spaces $\Md$, $\D$ and $\C$ are ultrametric in 
Definition \ref{jmath.8}, 
$\ainv{\Arg}{\Dt}{\Ct}$ and $\ainv{\F}{\Dt}{\Ct}$ are also ultrametric spaces.
\item[(c)] Furthermore, if the spaces $\Dt$ and $\Ct$ are complete, 
then $\ainv{\Arg}{\Dt}{\Ct}$ and $\ainv{\F}{\Dt}{\Ct}$ are also complete metric spaces. 
\end{itemize}
\end{rem}

\begin{lem} \label{jmath.1000}
Let $\dsem:\Lang\to\D$ with $\D\cong\C\fun{1}\Rd$
be a continuation-based denotational semantics (as in Definition \ref{jmath.8}). 
Let $\Dt$ and $\Ct$ be subspaces of domains $\D$ and $\C$ such that 
\mbox{$\Dt\lhd\D$} and \mbox{$\Ct\lhd\C$}.\\
For all $\Arg\in\Aclass$, we have \ 
$\ainv{\Arg}{\co{\Dt}{\D}}{\co{\Ct}{\C}}=\co{\ainv{\Arg}{\Dt}{\Ct}}{\Arg}$. 
\end{lem}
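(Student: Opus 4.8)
The plan is to prove the identity by induction on the structure of the argument type $\Arg\in\Aclass$, following the grammar of Definition~\ref{jmath.8}. Before starting the induction I would record that both sides are well-defined: since $\co{\Dt}{\D}\lhd\D$ and $\co{\Ct}{\C}\lhd\C$ (Remark~\ref{jmath.03}), the left-hand side $\ainv{\Arg}{\co{\Dt}{\D}}{\co{\Ct}{\C}}$ is given by Definition~\ref{jmath.10}; and since $\ainv{\Arg}{\Dt}{\Ct}\lhd\Arg$ (Remark~\ref{jmath.100}(a)) with $\Arg$ complete (Remark~\ref{jmath.80}), the right-hand side $\co{\ainv{\Arg}{\Dt}{\Ct}}{\Arg}$ is given by Remark~\ref{jmath.03}. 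In every case both spaces carry the metric obtained by restricting $d_{\Arg}$, so it suffices to establish equality of the underlying sets.

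For the base cases, if $\Arg=\D$ then both sides unfold to $\co{\Dt}{\D}$, and if $\Arg=\C$ both sides unfold to $\co{\Ct}{\C}$, so there is nothing to prove. If $\Arg=\Md$, the left-hand side is $\Md$ while the right-hand side is $\co{\Md}{\Md}$; here I would use that $\Md$ is complete, so every $x\in\Md$ is the limit of the constant Cauchy sequence $x,x,\dots$ over $\Md$ and, conversely, every Cauchy sequence over $\Md$ converges in $\Md$, whence $\co{\Md}{\Md}=\Md$.

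For the inductive step I would treat each production uniformly: unfold $\ainv{\cdot}{\co{\Dt}{\D}}{\co{\Ct}{\C}}$ on the left using Definition~\ref{jmath.10}, apply the induction hypothesis to the immediate subcomponents, and then collapse the result with the matching clause of Lemma~\ref{theory.metric.13} --- part~(b) for $\Arg_1\times\Arg_2$, part~(a) for $\Arg_1+\Arg_2$, part~(c) for $\A\to\Arg'$, part~(d) for $\pco(\Arg')$, and part~(e) for $\pnco(\Arg')$. For instance, in the product case unfolding yields $\ainv{\Arg_1}{\co{\Dt}{\D}}{\co{\Ct}{\C}}\times\ainv{\Arg_2}{\co{\Dt}{\D}}{\co{\Ct}{\C}}$, the induction hypothesis rewrites this as $\co{\ainv{\Arg_1}{\Dt}{\Ct}}{\Arg_1}\times\co{\ainv{\Arg_2}{\Dt}{\Ct}}{\Arg_2}$, and Lemma~\ref{theory.metric.13}(b) turns the product of completions into $\co{\ainv{\Arg_1}{\Dt}{\Ct}\times\ainv{\Arg_2}{\Dt}{\Ct}}{\Arg_1\times\Arg_2}=\co{\ainv{(\Arg_1\times\Arg_2)}{\Dt}{\Ct}}{\Arg_1\times\Arg_2}$; the remaining four cases are completely analogous.

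I do not expect any genuine obstacle. The only points requiring attention are, first, verifying in each inductive step the hypotheses of Lemma~\ref{theory.metric.13} --- namely that each subcomponent argument type $\Arg_i$ is a complete metric space (Remark~\ref{jmath.80}) containing the corresponding $\ainv{\Arg_i}{\Dt}{\Ct}$ as a subset (Remark~\ref{jmath.100}(a)) --- and, second, the slightly degenerate base case $\Arg=\Md$, which hinges on $\co{\Md}{\Md}=\Md$ and hence on the completeness of $\Md$.
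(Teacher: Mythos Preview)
Your proposal is correct and is exactly the argument the paper sets up: the paper states Lemma~\ref{jmath.1000} without proof, but Lemma~\ref{theory.metric.13} is proved precisely so that its clauses (a)--(e) line up with the constructors of $\Aclass$, making the structural induction you describe the evident (and intended) route. Your handling of the base cases and of the well-definedness prerequisites is also accurate.
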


\begin{defn} \label{jmath.12}   
We consider a continuation-based denotational semantics $\dsem:\Lang\to\D$, 
where the semantic domain is \mbox{$\D\cong\C\fun{1}\Rd$} as in Definition \ref{jmath.8}.
Let $\Dt$ and $\Ct$ be subspaces of domains $\D$ and $\C$ such that 
\mbox{$\Dt\lhd\D$} and \mbox{$\Ct\lhd\C$}. 
Let $\F$ be an operator type, \mbox{$\F\in\Fclass$}. 
Let $\f\in\F$ be an operator of type $\F$.
We say that the class of continuations $\Ct$ is {\em{invariant}} for $\Dt$ 
under the operator $\f$ \ iff \ $\f\in\ainv{\F}{\Dt}{\Ct}$.  
\end{defn}

\begin{lem} \label{jmath.13}
Let $\dsem:\Lang\to\D$ with $\D\cong\C\fun{1}\Rd$
be a continuation-based denotational semantics (as in Definition \ref{jmath.8}). 
Let $\Dt$ and $\Ct$ be subspaces of domains $\D$ and $\C$ such 
that \mbox{$\Dt\lhd\D$} and \mbox{$\Ct\lhd\C$}.
\begin{itemize}
\item[(a)]
$\ainv{\Arg}{\Dt}{\Ct}\lhd\ainv{\Arg}{\co{\Dt}{\D}}{\co{\Ct}{\C}}$ for any $\Arg\in\Aclass$.
\item[(b)]
$\ainv{\F}{\Dt}{\Ct}\lhd\ainv{\F}{\co{\Dt}{\D}}{\co{\Ct}{\C}}$ for any $\F\in\Fclass$.
\item[(c)] Let $\F\in\Fclass$ be an operator type, and let $\f\in\F$ be an operator of type $\F$. 
If $\Ct$ is invariant for $\Dt$ under the operator $\f\in\F$, then 
$\co{\Ct}{\C}$ is also invariant for $\co{\Dt}{\D}$ under operator $\f$.
\end{itemize}
\end{lem}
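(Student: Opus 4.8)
The plan is to derive all three parts from the single structural identity established in Lemma \ref{jmath.1000}, namely $\ainv{\Arg}{\co{\Dt}{\D}}{\co{\Ct}{\C}}=\co{\ainv{\Arg}{\Dt}{\Ct}}{\Arg}$ for every $\Arg\in\Aclass$, together with the general fact (Remark \ref{jmath.03}) that $X\lhd\co{X}{M}$ whenever $X\subseteq M$ with $M$ complete. First I would prove (a): fix $\Arg\in\Aclass$. By Remark \ref{jmath.100}(a), $\ainv{\Arg}{\Dt}{\Ct}$ is a subspace of $\Arg$, so applying Remark \ref{jmath.03} with $M=\Arg$ and $X=\ainv{\Arg}{\Dt}{\Ct}$ gives $\ainv{\Arg}{\Dt}{\Ct}\lhd\co{\ainv{\Arg}{\Dt}{\Ct}}{\Arg}$. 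Rewriting the right-hand side via Lemma \ref{jmath.1000} yields exactly $\ainv{\Arg}{\Dt}{\Ct}\lhd\ainv{\Arg}{\co{\Dt}{\D}}{\co{\Ct}{\C}}$, which is (a). (One should note in passing that $\co{\Dt}{\D}\lhd\D$ and $\co{\Ct}{\C}\lhd\C$ by Remark \ref{jmath.03}, so the space $\ainv{\Arg}{\co{\Dt}{\D}}{\co{\Ct}{\C}}$ on the right is legitimately formed via Definition \ref{jmath.10}.)

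For (b), an operator type has the form $\F=\Arg_1\fun{1}\Arg_2$ with $\Arg_1,\Arg_2\in\Aclass$, and by Definition \ref{jmath.10} we have $\ainv{\F}{\Dt}{\Ct}=\inv{\Arg_1}{\ainv{\Arg_1}{\Dt}{\Ct}}{\Arg_2}{\ainv{\Arg_2}{\Dt}{\Ct}}$, i.e.\ an $S_1\to S_2$ compliant function space with $S_i=\ainv{\Arg_i}{\Dt}{\Ct}$. I would invoke Corollary \ref{jmath.5} with $M_i=\Arg_i$ (complete by Remark \ref{jmath.80}) and $S_i=\ainv{\Arg_i}{\Dt}{\Ct}$ (nonempty, since these spaces contain at least one element — one can check emptiness is never produced by the clauses of Definition \ref{jmath.10}, as $\Md,\Dt,\Ct$ are nonempty), obtaining $\inv{\Arg_1}{S_1}{\Arg_2}{S_2}\lhd\inv{\Arg_1}{\co{S_1}{\Arg_1}}{\Arg_2}{\co{S_2}{\Arg_2}}$. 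Then Lemma \ref{jmath.1000} rewrites $\co{S_i}{\Arg_i}=\co{\ainv{\Arg_i}{\Dt}{\Ct}}{\Arg_i}=\ainv{\Arg_i}{\co{\Dt}{\D}}{\co{\Ct}{\C}}$, and by Definition \ref{jmath.10} the right-hand side of the subspace relation is precisely $\ainv{\F}{\co{\Dt}{\D}}{\co{\Ct}{\C}}$, giving (b).

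Part (c) is then immediate from (b) and the definition of invariance (Definition \ref{jmath.12}): if $\Ct$ is invariant for $\Dt$ under $\f$, this means $\f\in\ainv{\F}{\Dt}{\Ct}$; since (b) gives $\ainv{\F}{\Dt}{\Ct}\lhd\ainv{\F}{\co{\Dt}{\D}}{\co{\Ct}{\C}}$ and in particular $\ainv{\F}{\Dt}{\Ct}\subseteq\ainv{\F}{\co{\Dt}{\D}}{\co{\Ct}{\C}}$, we get $\f\in\ainv{\F}{\co{\Dt}{\D}}{\co{\Ct}{\C}}$, which is exactly the statement that $\co{\Ct}{\C}$ is invariant for $\co{\Dt}{\D}$ under $\f$. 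The only genuine work is bookkeeping — checking that every compliant-function-space and completion construction invoked along the way is well-formed (the $S_i$ nonempty, the ambient spaces complete) so that Corollary \ref{jmath.5} and Lemma \ref{jmath.1000} actually apply; the main obstacle, if any, is making sure the identification $\ainv{\F}{\co{\Dt}{\D}}{\co{\Ct}{\C}}=\inv{\Arg_1}{\co{S_1}{\Arg_1}}{\Arg_2}{\co{S_2}{\Arg_2}}$ is literally what Definition \ref{jmath.10} produces, which it is once Lemma \ref{jmath.1000} is applied componentwise to $\Arg_1$ and $\Arg_2$.
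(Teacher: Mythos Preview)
Your proof is correct and is exactly the argument the paper's preparatory material is set up to support: part (a) follows from Remark~\ref{jmath.03} combined with Lemma~\ref{jmath.1000}, part (b) from Corollary~\ref{jmath.5} together with Lemma~\ref{jmath.1000}, and part (c) is immediate from (b) and Definition~\ref{jmath.12}. The paper itself states Lemma~\ref{jmath.13} without proof, but Lemma~\ref{jmath.1000} and Corollary~\ref{jmath.5} are placed immediately before it precisely to make this derivation routine, so your reconstruction matches the intended line of argument.
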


\begin{defn} \label{jmath.14}
Let $\dsem:\Lang\to\D$ with $\D\cong\C\fun{1}\Rd$
be a continuation-based metric denotational semantics. 
We put \mbox{$\Dd=\{ \dsem(s) \mid s\in\Lang \}$}. 
Since $\dsem(s)\in\D$ for any $s\in\Lang$, 
$\Dd\lhd\D$ (we endow $\Dd$ with $\frestr{d_{\D}}{\Dd}$). 
Let \mbox{$f_1\in\F_1,\cdots,f_n\in\F_n$} be all operators 
used in the definition of the denotational mapping~$\dsem$.
If $\Cd$ is a subspace of $\C$, namely $\Cd\lhd\C$,  
we say that $\Cd$ is a {\em{class of denotable continuations}} for~$\dsem$ 
iff $\Cd$ is invariant for~$\Dd$ under all operators $f_i$ ($i=1,\ldots,n$)   
used in defining $\dsem$. 
If~$\Cd$ is a class of denotable continuations for $\dsem$, the metric domain 
$\co{\Cd}{\C}$ is called a {\em{domain of denotable continuations}} for~$\dsem$. 
\end{defn}

\begin{rem} \label{jmath.140}
Let $\dsem:\Lang\to\D$ with $\D\cong\C\fun{1}\Rd$
be a continuation-based metric denotational semantics, 
and $\Dd=\{ \dsem(s) \mid s\in\Lang \}$ as in Definition \ref{jmath.14}.  
Let \mbox{$f_1\in\F_1,\cdots,f_n\in\F_n$} be all operators 
used in the definition of the denotational mapping $\dsem$.
If $\Cd$ is a class of denotable continuations for~$\dsem$ and $\co{\Cd}{\C}$
is the corresponding metric domain of denotable continuations, by Lemma~\ref{jmath.13}(c),  
$\co{\Cd}{\C}$ is invariant for $\co{\Dd}{\D}$ under all operators $f_i$ ($i=1,\ldots,n$)   
used in the definition of $\dsem$.  
\end{rem}

\subsection{Weak abstractness criterion} \label{theory.weakabs}

If we compare the classic {\em{full abstractness}} criterion \cite{mil77} 
with the {\em{weak abstractness}} criterion \cite{ct17} employed in this paper, 
we emphasize that the {\em{correctness}} condition of the two criteria coincides, 
but the weak abstractness criterion relies on a weaker completeness condition 
called {\em{weak completeness}}, a condition that should be verified {\it only
for denotable continuations}.
While the classic full abstractness condition cannot be established 
in continuation semantics \cite{car94,ct17},
the abstractness of a continuation-based denotational model can be 
investigated based on the weak abstractness criterion.
The terminology used here to present the abstraction criteria 
(comprising also the notion of a {\em{syntactic context}}) is taken from \cite{bv96}. 
We recall the {\em{completeness}} condition of the full abstractness criterion 
for such a continuation-based model\footnote{The {\em{correctness}} condition of 
the two criteria coincides, and is presented in Definition \ref{weakabs.def}(a).}. 
For this, we consider a language $\Lang$, a continuation-based denotational 
semantics $\dsem:\Lang\to\D$ (where the domain $\D\cong\C\fun{1}\Rd$ is as in 
Remark \ref{jmath.7} and $\C$ is the domain of continuations), 
and an operational semantics $\osem:\Lang\to\Od$. 
If $\scontext$ ranges over a set of {\em{syntactic contexts}} for~$\Lang$, 
$\dsem$ is {\em{complete}} with respect to $\osem$ when

\centerline{$\forall{x_1,x_2\in\Lang}\,[(\exists{\gamma}\in\C\,[\dsem(x_1)(\gamma)\neq\dsem(x_2)(\gamma)])
\Rightarrow(\exists{\scontext}\,[\osem(\scontext(x_1))\neq\osem(\scontext(x_2))])]$.}

\noindent
When the domain of continuations $\C$ contains elements which do not correspond to 
language elements, this completeness condition may not hold \cite{car94,ct17}.
Definition \ref{weakabs.def} presents the weak abstractness criterion which 
comprises a weaker completeness condition. 
In Definition \ref{weakabs.def}  and Lemma \ref{weakabs.lem}, 
we assume that $(x\in)\Lang$ is a language,  
\mbox{$\dsem:\Lang\to\D$} is a continuation-based denotational semantics 
where the denotational domain $\D\cong\C\fun{1}\Rd$ is as in Remark \ref{jmath.7},  
$(\gamma\in)\C$ is the domain of continuations, 
$\osem:\Lang\to\Od$ is an operational semantics for $\Lang$, and  
$\scontext$ is a typical element of the set of {\em{syntactic contexts}} for $\Lang$. 

\begin{defn} \label{weakabs.def} (Weak abstractness for continuation semantics)
\begin{itemize} 
\item[(a)] $\dsem$ is {\em{correct}} with respect to $\mc{O}$ iff\quad
\mbox{$\forall{x_1,x_2\in\Lang}[\dsem(x_1)=\dsem(x_2)\Rightarrow\forall{\scontext}[\osem(\scontext(x_1))=\osem(\scontext(x_2))]]$.}
\item[(b)] If $\Cd$ is a class of denotable continuations for $\dsem$ and 
$\Cdc=\co{\Cd}{\C}$ is the corresponding domain of denotable continuations for $\dsem$,  
then we say that $\dsem$ is {\em{weakly complete w.r.t $\osem$ and $\Cdc$}}~iff
\item[] \qquad $\forall{x_1,x_2\in\Lang}\,[(\exists{\gamma}\in\Cdc\,[\dsem(x_1)\gamma\neq\dsem(x_2)\gamma])
\Rightarrow (\exists{\scontext}\,[\osem(\scontext(x_1))\neq\osem(\scontext(x_2))])]$. 
\item[] We say that $\dsem$ is {\em{weakly complete with respect to $\osem$}} iff
there exists a class of denotable continuations~$\Cd$ 
such that $\dsem$ is {\em{weakly complete}} with respect to $\osem$ and $\Cdc$, where 
$\Cdc=\co{\Cd}{\C}$ is the corresponding domain of denotable continuations.
\item[(c)] $\dsem$ is {\em{weakly abstract}} with respect to $\osem$ 
\ iff \ $\dsem$ is correct and weakly complete with respect to~$\osem$.
\end{itemize}
\end{defn}

\begin{lem} \label{weakabs.lem}
Let \mbox{$\dsem:\Lang\to\D$} be a continuation-based
denotational semantics, where the domain $\D$ is given by \mbox{$\D\cong\C\fun{1}\Rd$} and $\C$ is the domain of continuations
(as in Definition \ref{weakabs.def}).
If $\Cd$ is a class of denotable continuations for $\dsem$ and $\Cdc=\co{\Cd}{\C}$ 
is the corresponding domain of denotable continuations for~$\dsem$, then 
$\dsem$ is {\em{weakly complete}} with respect to $\osem$ and $\Cdc$ iff
\begin{equation}
\forall{x_1,x_2\in\Lang}\,[(\exists{\gamma}\in\Cd\,[\dsem(x_1)\gamma\neq\dsem(x_2)\gamma])
\Rightarrow
(\exists{\scontext}\,[\osem(\scontext(x_1))\neq\osem(\scontext(x_2))])]. \label{weakc}
\end{equation}
Therefore, if there exists a class of denotable continuations $\Cd$ for $\dsem$ such that condition \eqref{weakc}
is satisfied, then $\dsem$ is {\em{weakly complete}} with respect to $\osem$. \quad
{\rm The proof of Lemma \ref{weakabs.lem} is provided in \cite{ct17}.}
\end{lem}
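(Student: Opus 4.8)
The plan is to establish the equivalence by proving that the quantifier over $\gamma\in\Cdc$ in Definition \ref{weakabs.def}(b) can be replaced, without loss, by the quantifier over $\gamma\in\Cd$ in condition \eqref{weakc}. Since $\Cd\subseteq\Cdc$ (because $\Cd\lhd\co{\Cd}{\C}=\Cdc$ by Remark \ref{jmath.03}), the implication from the $\Cdc$-statement to the $\Cd$-statement is immediate: any $\gamma\in\Cd$ witnessing $\dsem(x_1)\gamma\neq\dsem(x_2)\gamma$ is also a witness in $\Cdc$. So the real content is the converse: if there is a distinguishing continuation in the completion $\Cdc$, then there is already one in $\Cd$ itself.

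For the converse, I would argue by contraposition on the inner existential. Suppose $\dsem(x_1)\gamma=\dsem(x_2)\gamma$ for all $\gamma\in\Cd$; I want to conclude $\dsem(x_1)\gamma=\dsem(x_2)\gamma$ for all $\gamma\in\Cdc$. Fix an arbitrary $\gamma\in\Cdc=\co{\Cd}{\C}$. By the definition of $\co{\cdot}{\cdot}$ in Remark \ref{jmath.03}, there is a Cauchy sequence $(\gamma_j)_j$ in $\Cd$ with $\gamma=\lim_j\gamma_j$ (limit taken in $\C$). Now $\dsem(x_1),\dsem(x_2)\in\D\cong\C\fun{1}\Rd$ are nonexpansive, hence continuous, so $\dsem(x_i)\gamma=\lim_j\dsem(x_i)\gamma_j$ for $i=1,2$. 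Since $\gamma_j\in\Cd$ for every $j$, our hypothesis gives $\dsem(x_1)\gamma_j=\dsem(x_2)\gamma_j$ for all $j$, whence the two limits coincide and $\dsem(x_1)\gamma=\dsem(x_2)\gamma$. As $\gamma\in\Cdc$ was arbitrary, the contrapositive is established, and combining the two directions yields the stated equivalence.

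The final sentence of the lemma is then an immediate consequence: if some class of denotable continuations $\Cd$ satisfies \eqref{weakc}, then by the equivalence just proved $\dsem$ is weakly complete with respect to $\osem$ and the corresponding $\Cdc$, which is exactly what it means (by Definition \ref{weakabs.def}(b)) for $\dsem$ to be weakly complete with respect to $\osem$. The only subtle point in the whole argument is the interchange of the denotation with the limit, which rests entirely on the fact that elements of $\D$ are nonexpansive (hence continuous) maps on $\C$ together with the characterization of $\Cdc$ as the set of limits of Cauchy sequences drawn from $\Cd$; since both facts are already recorded in the excerpt (Remark \ref{jmath.7} and Remark \ref{jmath.03}), I do not expect any genuine obstacle, and indeed the proof is deferred to \cite{ct17} precisely because it is routine.
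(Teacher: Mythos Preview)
Your argument is correct. The paper itself does not spell out a proof of Lemma~\ref{weakabs.lem} but merely cites \cite{ct17}, so there is no in-text proof to compare against; nevertheless, the argument you give---one direction trivial from $\Cd\subseteq\Cdc$, the other by continuity of nonexpansive denotations together with the characterization of $\Cdc$ as limits of Cauchy sequences from $\Cd$---is exactly the routine verification the authors allude to, and it goes through without difficulty.
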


Since Remark \ref{jmath.140} and Lemma \ref{weakabs.lem} automatically 
extend the above properties to the entire domain of denotable continuations, it 
is enough to verify the invariance and completeness properties required by the 
weak abstraction criterion for the class of denotable continuations.

\subsection{Finite bags and the structure of continuations} \label{theory.bags}

In this paper, the structure of continuations is defined based on a 
construction for finite bags $\bag{\cdot}$, which in turn is defined based on 
a set of identifiers $\Id$. The symbols $;$ , $\parallel$ and $\backslash$ 
occurring in an identifier $\alpha\in\Id$ are used to describe the semantics 
of sequential composition, parallel composition and restriction operators, 
respectively.

\begin{defn} \label{id.1} (Identifiers)
In the sequel of the paper we use a set 
$(c\in)\Names$ of {\em{names}};
the set of names~$\!\Names\!$ is assumed to be countable, as in CCS \cite{mil89}. 
We introduce  a set of {\em{identifiers}} $(\alpha\!\in)\!\Id$ given in BNF by:\\
\centerline{$\alpha \xbnf \ahole \xsep \ars \xsep \als{\alpha} \xsep \anu{\alpha}{c} \xsep \alp{\alpha} \xsep \arp{\alpha}$.}
For substituting the hole symbol $\ahole$ occurring in an identifier $\alpha$ with $\alpha'$, 
we use the notation $\alpha(\alpha')$ given by: $\ahole(\alpha')=\alpha'$, 
$\ars(\alpha')=\ars$, $\als{\alpha}(\alpha')=\als{\alpha(\alpha')}$,
$\anu{\alpha}{c}(\alpha')=\anu{\alpha(\alpha')}{c}$,  
$\alp{\alpha}(\alpha')=\alp{\alpha(\alpha')}$, and 
$\arp{\alpha}(\alpha')=\arp{\alpha(\alpha')}$. 
\end{defn}

The symbol $\ahole$ is used as a reference to an {\em{active computation}}. 
Thus, the substitution $\alpha(\alpha')$ does not replace the symbol 
$\ahole$ when it occurs on the right-hand side of a sequential composition $\ars$.

In previous works based on the CSC (continuation semantics for concurrency) 
technique, the set of identifiers is defined as a collection of finite 
sequences $\{1,2\}^*$ or $(\Names\cup\{1,2\})^*$ endowed with a partial 
ordering relation that is used to express the structure of continuations 
\cite{ent00,ct14} and \cite{ct20}, respectively.
In this paper, we employ a new representation of continuations based on the set of 
identifiers $\Id$ introduced in Definition~\ref{id.1}. 

We define and use the predicate $\amatch:(\Id\times\Id)\to\Bool$ given by:
\begin{itemize}
\item[]$\amatch(\ahole,\alpha)=\true$, \quad $\amatch(\ars,\ars)=\true$, 
\item[]$\amatch(\als{\alpha_1},\als{\alpha_2})=\amatch(\alp{\alpha_1},\alp{\alpha_2})=\amatch(\arp{\alpha_1},\arp{\alpha_2})=\amatch(\alpha_1,\alpha_2)$, 
\item[]$\amatch(\anu{\alpha_1}{c},\anu{\alpha_2}{c})=\amatch(\alpha_1,\alpha_2)$,\quad and \,\, $\amatch(\alpha_1,\alpha_2)=\false$ \,\,otherwise.
\end{itemize}

\noindent
By structural induction on $\alpha$, one can show that \ 
$\amatch(\alpha,\alpha)$, 
$\amatch(\alpha,\alpha') \wedge \amatch(\alpha',\alpha'')\Rightarrow \amatch(\alpha,\alpha'')$, 
and \ $\amatch(\alpha,\alpha') \wedge \amatch(\alpha',\alpha)\Rightarrow\alpha=\alpha'$,
for any $\alpha,\alpha',\alpha''\in\Id$.
Thus, the relation 
$\leq=\{(\alpha,\alpha') \mid \amatch(\alpha,\alpha') \} (\subseteq\Id\times\Id)$
is a {\em{partial order}}. We write $\alpha\leq\alpha'$ to express that $(\alpha,\alpha')\in\leq$.

When we have $\alpha\leq\alpha'$, the identifier $\alpha'$ can be obtained 
from $\alpha$ by a substitution $\alpha'=\alpha(\alpha'')$ for some $\alpha''\in\Id$. 
Let $(\pi\in)\Pi=\pfin(\Id)$. For any $\pi\in\Pi$ and $\alpha\in\Id$, we use the notation:

\centerline{$\pirestr{\pi}{\alpha}=\{ \alpha' \mid \alpha'\in\pi, \alpha\leq\alpha' \}$.}

We also define the operators $\glb:(\Id\times\Id)\to\Id$ and \mbox{$\ominus:(\Id\times\Id)\to(\Id\cup\{\xundef\})$} 
with $\xundef\notin\Id$ (for $\ominus$ we use the infix notation),
as well as the predicate \mbox{$\inalpha\!:\!(\Names\times\Id)\!\to\!\Bool$} by:

\begin{itemize} 
\item[]$\glb(\alpha,\alpha)=\alpha$,\quad$\glb(\als{\alpha_1},\als{\alpha_2})=\als{\glb(\alpha_1,\alpha_2)}$, \quad$\glb(\anu{\alpha_1}{c},\anu{\alpha_2}{c})=\anu{\glb(\alpha_1,\alpha_2)}{c}$, 
\item[]$\glb(\alp{\alpha_1},\alp{\alpha_2})=\alp{\glb(\alpha_1,\alpha_2)}$,\,\,\,
$\glb(\arp{\alpha_1},\arp{\alpha_2})=\arp{\glb(\alpha_1,\alpha_2)}$,\,\, and $\glb(\alpha_1,\alpha_2)=\ahole$\,\, otherwise 
\item[]$\alpha\ominus\ahole=\alpha$, \quad $\als{\alpha_1}\ominus\als{\alpha_2}=\alp{\alpha_1}\ominus\alp{\alpha_2}=\arp{\alpha_1}\ominus\arp{\alpha_2}=\alpha_1\ominus\alpha_2$,
\item[]$\anu{\alpha_1}{c}\ominus\anu{\alpha_2}{c}=\alpha_1\ominus\alpha_2$,\quad and \quad $\alpha_1\ominus\alpha_2=\xundef$\quad otherwise
\item[]$\inalpha(c,\ahole)=\false$,\quad$\inalpha(c,\ars)=\false$,\quad$\inalpha(c,\anu{\alpha}{c'})=\xif c'=c \xthen \true \xelse \inalpha(c,\alpha)$, 
\item[]$\inalpha(c,\als{\alpha})=\inalpha(c,\alp{\alpha})=\inalpha(c,\arp{\alpha})=\inalpha(c,\alpha)$.  
\end{itemize}

One can show that $\glb(\alpha_1,\alpha_2)$ is the {\em{greatest lower bound}} of 
$\alpha_1$ and $\alpha_2$ with respect to~$\leq$.\footnote{$\glb(\alpha_1,\alpha_2)
\leq\alpha_1$, $\glb(\alpha_1,\alpha_2)\leq\alpha_2$, and if $\alpha\leq\alpha_1$ 
and $\alpha\leq\alpha_2$, then $\alpha\leq\glb(\alpha_1,\alpha_2)$ for any 
$\alpha_1,\alpha_2,\alpha\in\Id$.}
For example, considering the identifiers $\alpha_1=\anu{\alp{\alpha_1'}}{c}$ and 
$\alpha_2=\anu{\arp{\alpha_2'}}{c}$, we have $\glb(\alpha_1,\alpha_2)=\anu{\ahole}{c}$.  
Clearly, $\glb(\alpha_1,\alpha_2)\leq\alpha_1$ and $\glb(\alpha_1,\alpha_2)\leq\alpha_2$. 
Moreover, one can show that, if $\alpha\leq\alpha'$ then $\alpha'\ominus\alpha\in\Id$.
For example, $\anu{\alp{\alpha_1'}}{c}\ominus\anu{\ahole}{c}=\alp{\alpha_1'}$. 
$\inalpha(c,\alpha)=\true$ if $c$ occurs restricted in $\alpha$.

We also define the predicate $\iotatwo:((\Names\times\Id)\times(\Names\times\Id))\to\Bool$ by:
\begin{itemize}
\item[]$\iotatwo(\at{c_1}{\alpha_1},\at{c_2}{\alpha_2})= \xlet \alpha=\glb(\alpha_1,\alpha_2), \alpha_1'=\ominus(\alpha_1,\alpha), \alpha_2'=\ominus(\alpha_2,\alpha)$
\item[]\hspace*{3.55cm}$\xin (\alpha_1\neq\alpha_2) \wedge (c_1=c_2) \wedge \neg(\inalpha(c_1,\alpha_1')) \wedge \neg(\inalpha(c_2,\alpha_2'))$.
\end{itemize}

We write a pair $(c,\alpha)\in\Names\times\Id$ as $\at{c}{\alpha}$ to express 
that action $c$ is executed by a process with identifier $\alpha$.
We model multiparty interactions using the binary interaction predicate 
$\iotatwo(\at{c_1}{\alpha_1},\at{c_2}{\alpha_2})$; such an interaction is 
successful if the two actions are executed by different processes 
$\alpha_1\neq\alpha_2$, they use the same interaction channel $c_1=c_2$, and 
they are not disabled by restriction operators $\anu{\cdot}{c}$.

As in \cite{ct17,ct20}, we use the construct $\bag{\cdot}$ 
given below  to model finite bags (multisets) of computations. 
Let $(x\in)\X$ be a metric domain, i.e. a complete metric space. 
Let $(\pi\in)\Pi\!=\!\pfin(\Id)$.  We use the notation \\
\centerline{$\bag{\X} \notation \Pi\times(\Id\to{\X})$.}
\noindent
Let $\theta$ ranges over $\Id\to{\X}$. An element of type $\bag{\X}$ 
is a pair $(\pi,\theta)$, with $\pi\in\Pi$ and $\theta\in\Id\to{\X}$.\\
We define mappings \mbox{$\idbag:\bag{\X}\to\Pi$}, 
$\applybag{(\cdot)}{\cdot}:(\bag{\X}\times\Id)\to\X$, and 
\mbox{$\subsbag{\cdot}{\cdot}{\cdot}:(\bag{\X}\times\Id\times\X)\to\bag{\X}$} by:

\centerline{$
\begin{array}{rcl}
\idbag{(\pi,\theta)} &  = & \pi , \\ 
\applybag{(\pi,\theta)}{\alpha}  &  = & \theta(\alpha) ,\\ 
\subsbag{(\pi,\theta)}{\alpha}{x}  &  = & (\pi\cup\{\alpha\},\var{\theta}{\alpha}{x}) .\\
\end{array}$}

\begin{rem} \label{bags.1}
When $X$ is a plain set (rather than a metric domain),
we use the same notation $\bag{X}=\Pi\times(\Id\fun{}X)$
(with operators $\idbag$, $\applybag{(\cdot)}{\cdot}$,
\mbox{$\subsbag{\cdot}{\cdot}{\cdot}$});
only in this case $\bag{X}$ is not equipped with a metric.
\end{rem} 

\begin{notatie} \label{lists}
The semantic models given in this paper are presented using a notation for 
{\em{finite tuples}} ({\em{lists}} or {\em{sequences}}) similar to the functional 
programming language Haskell (\texttt{www.haskell.org}) notation for lists; 
we use round brackets (rather than square brackets, that we use to represent 
multisets) to enclose the elements of a tuple. The elements in a list 
are separated by commas, and we use the symbol '$:$' as the cons operation.
For example, the empty tuple is written as $()$,
and if $(e\in)S$ is a set, $e_1,e_2,e_3\in{}S$,
then $(e_1,e_2,e_3)\in{}S^3$ ($S^3=S\times{}S\times{}S$)
and $(e_1,e_2,e_3)=e_1:(e_2,e_3)=e_1:e_2:(e_3)=e_1:e_2:e_3:()$.
By convention, $S^0=\{()\}$. 
This notation is also used for metric domains~\cite{bv96}). 
For $S$ a set, we denote by $S^*$ the set of all finite (possibly empty)
sequences over $S$.
\end{notatie}

\section{Continuation semantics for $\ccsn$} \label{ccsn}

We started the semantic investigation of $\ccsn$ with the language $\Lsyn$ 
given in \cite{bv96}, based on CCS" (we only omit the CCS relabelling operator 
\cite{mil89} which is not included in neither $\ccsn$ nor $\ccsnp$ \cite{lv10}).
In this section, we consider a language $\Lccsn$ which extends 
$\Lsyn$ with the {\em{joint input}} construct of $\ccsn$ and with the process 
algebra operators {\em{left merge}} $\xlmerge$, {\em{synchronization merge}} 
$\xsyn$ and {\em{left synchronization merge}} $\xlsyn$. We refer the reader 
to \cite{ct17,ct20,ent19} for further explanations regarding these auxiliary 
operators, which are essentially needed to make any element of the weakly 
abstract domain definable \cite{car94}. We use a set $(b\in)\IAct$ of 
{\em{internal actions}} which contains a distinguished {\em{silent action}} 
$\tau$, $\tau\in\IAct$. We also use the set of {\em{names}} $(c\in)\Names$ 
(see Definition \ref{id.1}) and a corresponding set of {\em{co-names}} 
\mbox{$(\ovr{c}\in)\Conames=\{ \ovr{c} \mid c\in\Names \}$}.
It is assumed that sets $\IAct$ and $\Names\cup\Conames$ are disjoint: $\IAct\cap(\Names\cup\Conames)=\emptyset$.
The approach to recursion is based on {\em{declarations}} and {\em{guarded statements}} \cite{bv96},
and we use a set $(y\in)\Y$ of {\em{procedure variables}}. Following \cite{bv96}, 
we work (without loss of generality) with a fixed declaration $D\in\Decl$, and in 
any context we refer to such a fixed declaration $D$.

\begin{defn} \label{ccsn.1}
The syntax of $\Lccsn$ is given by the following constructs: 
\begin{itemize} 
\item[(a)] Joint inputs $(j\in)\Jn$ \quad  $j \xbnf c \xsep j\join{}j$
\item[(b)] Elementary actions $(a\in)\Act$ \quad $a \xbnf b \xsep \ovr{c} \xsep j \xsep \xstop$ 
\item[(c)] Statements $(\s\in)\Stmt$, \qquad $\s \xbnf a \xsep y \xsep \xrestr{\s}{c} \xsep \s{\xseq}\s \xsep \s+\s 
\xsep \s\xmerge\s  \xsep \s\xsyn\s \xsep \s\xlmerge\s \xsep \s\xlsyn\s$
\item[(d)] Guarded statements $(g\in)\GStmt$,\qquad$g \xbnf a \xsep \xrestr{g}{c} \xsep g{\xseq}\s \xsep g+g 
\xsep g\xmerge{}g  \xsep g\xsyn{}g \xsep g\xlmerge{}x \xsep g\xlsyn{}g$
\item[(e)] Declarations \quad $(D\in)\Decl=\Y\to\GStmt$ 
\item[(f)] Programs \qquad $(\rho\in)\Lccsn=\Decl\times\Stmt$ .
\end{itemize}
\end{defn}

The class of elementary actions $(a\in)\Act$ comprises elements of the 
following types: internal actions $b\in\IAct$, output actions $\ovr{c}\in\Conames$, 
{\em{joint inputs}} $j\in\Jn$ and the action $\xstop$ which denotes {\em{deadlock}}. 
In addition, $\Lccsn$ provides operators for {\em{sequential composition}} 
($\s\xseq\s$), {\em{nondeterministic choice}} ($\s+\s$), {\em{restriction}} 
$\xrestr{\s}{c}$, {\em{parallel composition}} or {\em{merge}} ($\s\xmerge\s$), 
{\em{left merge}} ($\s\xlmerge\s$), {\em{synchronization merge}} 
($\s\xsyn\s$), and {\em{left synchronization merge}} ($\s\xlsyn\s$). 
These operators are known from the classic process algebra theories. 
For instance, the restriction operator $\xrestr{x}{c}$ is used to make 
the name $c$ private within the scope of $x$~\cite{mil89}.

\begin{rem} \label{ccsn.2}
In $\Lccsn$, a {\em{joint input}} $j$ is a construct $j=c_1\join\cdots\join{}c_m$,
where $1\leq{}m\leq{}n$.\footnote{A joint input is written as 
$\mset{c_1,\ldots,c_m}$ in $\ccsn$ \cite{lv10}. Since in this paper we use the 
notation based on square brackets $\mset{\cdots}$ to represent {\em{multisets}} 
(as in \cite{ct16}, see Section \ref{theory}),
for a joint input we use the notation $c_1\join\cdots\join{}c_m$.}
The language $\Lccsnp$ studied in Section \ref{ccsnp} 
provides a more general construct $j=l_1\join\cdots\join{}l_m$ 
called {\em{joint prefix}}, with $1\leq{}m\leq{}n$, where 
$l_1,\ldots,l_m\in\SAct$ are synchronization actions $\SAct=\Names\cup\Conames$. 
For the remainder of this work, we assume a fixed positive natural number $\nmax\in\Nset^+$, 
such that at most $\nmax+1$ concurrent components can be involved in any (multiparty) interaction.
However, note that $\nmax$ is a parameter of our formal specifications, 
and can be chosen to be arbitrarily large. For the language $\Lccsn$ we put $\nmax=n$
($n$~is the same number that occurs in the name of calculus $\ccsn$, in the name of language $\Lccsn$ 
and in the name of class $\Jn$). In $\Lccsn$ (as in $\ccsn$ \cite{lv10}) 
$m+1$ actions $c_1\join\cdots\join{}c_m$, 
$\ovr{c}_1,\ldots,\ovr{c}_m$, executed by $m+1$ concurrent processes
can synchronize and their interaction
is seen abstractly as a silent action $\tau$. 
\end{rem}

\begin{defn} \label{ccsn.3} 
In inductive reasoning, we use a complexity measure $\cx:\Stmt\to\Nset$ 
defined as in \cite{bv96}: 
\mbox{$\cx(a)=1$}, 
\mbox{$\cx(y) = 1+\cx(D(y))$},
\mbox{$\cx(\xrestr{x}{c})=1+\cx(x)$}, 
$\cx(x_1;x_2)\!=\!\cx(x_1\xlmerge{}x_2)\!=\!1+\cx(x_1)$,
and \mbox{$\cx(x_1 \xop x_2)=$} $1+\max\{\cx(x_1),\cx(x_2)\}$, for 
$\xop\in\{+,\xmerge,\xsyn,\xlsyn\}$.
\end{defn}

\begin{defn} \label{ccsn.4} (Interaction function for $\Lccsn$)
Let \mbox{$(u\in)\U=\{u \mid u\in\pfin(\Act\times\Id), \xcard{u}\leq\nmax+1\}$}
be the class of {\em{interaction sets}}. 
We write a pair $(a,\alpha)\in\Act\times\Id$ as $\at{a}{\alpha}$. 
We define the {\em{interaction function}} 
\mbox{$\iota: \U\to(\IAct\cup\{\xundef\})$} (where \mbox{$\xundef\notin\IAct$}) by:
$\iota(\{\at{b}{\alpha}\})=b$, 
$\iota(\{\at{\xstop}{\alpha}\})=\iota(\{\at{\ovr{c}}{\alpha}\})=\iota(\{\at{j}{\alpha}\})=\xundef$, 
$\iota(\{\at{\ovr{c}_1}{\alpha_1},\ldots,\at{\ovr{c}_m}{\alpha_m},\at{c_1\join\cdots\join{}c_m}{\alpha})\!=\!\!$
\mbox{$\xif \iotatwo(\at{c_1}{\alpha_1},\at{c_1}{\alpha})\wedge\cdots\wedge\iotatwo(\at{c_m}{\alpha_m},\at{c_m}{\alpha}) \!\xthen \tau \xelse \!\xundef$},
and $\iota(u)=\xundef$ otherwise. 
The actions in a set $u\in\U$ can interact \ iff \ $\iota(u)\in\IAct$. 
\end{defn}

\subsection{Final Semantic Domains} \label{final}
We employ (linear time) metric domains $(q\in)\Qd$ and $(q\in)\Qo$ defined 
by domain equations \cite{bv96}:

\centerline{$
\begin{array}{rcl}
\Qd &  \cong & \{\epsilon\}+(\IAct\times\half\cdot\Qd) ,\\
\Qo &  \cong & \{\epsilon\}+\{\delta\}+(\IAct\times\half\cdot\Qo) ,\\
\end{array}
$}

\noindent
where $\epsilon$ is the {\em{empty sequence}} and $\delta$ models {\em{deadlock}}.
The elements of $\Qd$ and $\Qo$ are finite or infinite sequences over $\IAct$,
and finite $\Qo$ sequences can be terminated with $\delta$.
Instead of $(b_1,(b_2,\ldots,(b_n,\epsilon)\ldots))$, 
$(b_1,(b_2,\ldots,(b_n,\delta)\ldots))$ and $(b_1,(b_2,\ldots))$, we write 
$b_1b_2\cdots{}b_n$, \ $b_1b_2\cdots{}b_n\delta$ \ and \ $b_1b_2\cdots$, 
respectively. 

The metric domain \mbox{$\Pd=\pnco(\Qd)$} is used as 
final domain for the denotational models presented in this paper.
For the operational semantics, we use the metric domain \mbox{$\Po=\pnco(\Qo)$}. 
The elements of $\Po$ and $\Pd$ are nonempty and compact subsets 
of $\Qd$ and $\Qo$, respectively. 
For any $b\in\IAct, q\in\Qo$ ($q\in\Qd$) and $p\in\Po$ ($p\in\Pd$), 
we use the notations $b\cdot{}q=(x,q)$ and  
$b\cdot{}p=\{b\cdot{}q \mid q\in{}p\}$. 

By $\tau^i$ and $\tau^i\cdot{}q$ we represent $\Qd$ sequences defined 
inductively as follows: $\tau^0=\epsilon, \tau^0\cdot{}q=q$ and $\tau^{i+1} = 
\tau\cdot\tau^i, \tau^{i+1}\cdot{}q = \tau\cdot(\tau^i\cdot{}q)$, for any 
$q\in\Qd$ and $i\geq0$. Also, for any $p\in\Pd$ and $i\geq0$, we put 
$\tau^i\cdot{}p=\{\tau^i\cdot{}q \mid q\in{}p\}$.

Silent steps are needed to establish the contractiveness of function $\Psi$ 
given in Definition \ref{densem.1}. Following \cite{ct20,ct22}, in the 
denotational model we use a sequence of the form $\tau^{\nmax}b$ to represent 
a successful interaction (among at most \mbox{$\nmax+1$} concurrent 
processes), namely $\nmax$ silent steps $\tau^{\nmax}$ followed by an 
internal action $b$, which describes the effect of the interaction. Deadlock 
is modelled in the denotational model by a sequence of $\nmax$ silent steps 
$\tau^{\nmax}$ (not followed by an internal action).

\begin{defn} \label{final.1}
For any $0\leq{}i\leq\nmax$, we define 
operators\, $\xnedp{i}:(\Pd\times\Pd)\fun{1}\Pd$ given by:\\
\centerline{$
p_1\xnedp{i}p_2 = \xlet \ovr{p}=\{ q \mid q\in{}p_1\cup{}p_2, q=\tau^{\nmax-i}\cdot\ovr{q}, \ovr{q}\neq\epsilon\} 
\xin \xif \ovr{p}=\emptyset \xthen \{\tau^{\nmax-i}\} \xelse \ovr{p}.$}
The operators $\xnedp{i}$ are well-defined, associative and commutative 
\cite{ct20,ct22}.
\end{defn}

\subsection{Operational semantics} \label{ccsn.osem}

The operational semantics of $\Lccsn$ is defined in the style of \cite{plo04}. 
Following \cite{bv96}, we use the term {\em{resumption}} as an operational 
counterpart of the term {\em{continuation}}.

\begin{defn} \label{ccsn.osem.1} (Resumptions and configurations) Let 
$(f\in)\SRes=\bigcup_{0\leq{}i\leq\nmax}\Stmt^i$ be the class of 
{\em{synchronous resumptions}}, where $\Stmt^i=\Stmt\times\cdots\times\Stmt$ 
($i$ times). Let us consider $(r\in)\Rc \xbnf \E \xsep \s$, where 
$\s\in\Stmt$ is a statement (Definition \ref{ccsn.1}), and $\E$ is a symbol 
denoting {\em{termination}}. Also, $(k\in)\KRes=\bag{\Rc}$ (here $\bag{\Rc}$ 
introduces a plain set, see Remark \ref{bags.1}), and 
\mbox{$(\mu\in)\Ids=\bigcup_{1\leq{}i\leq\nmax+1}\Id^i$} 
($\Id^i=\Id\times\cdots\Id$ ($i$ times). An element of the type $\Ids$ is a 
nonempty sequence of identifiers of length at most $\nmax+1$. Let 
\mbox{$\ARes=\Ids\times\KRes$} be the class of {\em{asynchronous 
resumptions}}. We write a pair $(\mu,k)\in\ARes$ \, as \, 
$\asyn{\mu}{k}$. Let $\az\in\Id$, $\az=\ahole$, and 
$\kz=(\emptyset,\xlbd{\alpha}{\E})$. We define the class of 
{\em{resumptions}} $(\varrho\in)\Res$ \, as the smallest subset of 
$(\SRes\times\U\times\ARes)$, $\Res\subseteq(\SRes\times\U\times\ARes)$ 
(where $(u\in)\U$ is the set of interaction sets presented in Definition 
\ref{ccsn.4}) satisfying the following axioms and rules:\\
\centerline{$((),\emptyset,\asyn{(\az)}{\kz})\in\Res\qquad\displaystyle{\frac{(x:f,u,\asyn{\alpha:\mu}{k})\in\Res \quad{}a\in\Act\quad\xcard{u}\leq\nmax}{(f,\{\at{a}{\alpha}\}\cup{}u,\asyn{\mu}{k})\in\Res}}$\qquad}\\
\smallskip
\smallskip
\centerline{$\displaystyle{\frac{(f,u,\asyn{\alpha:\mu}{k})\in\Res\quad{}c\in\Names}{(f,u,\asyn{\alpha\anu{\ahole}{c}:\mu}{k})\in\Res}}\qquad\displaystyle{\frac{(f,u,\asyn{\alpha:\mu}{k})\in\Res\quad\s\in\Stmt}{(f,u,\asyn{\alpha\als{\ahole}:\mu}{\subsbag{k}{\alpha\ars}{\s}})\in\Res}}\qquad$}\\
\smallskip
\smallskip
\centerline{$\displaystyle{\frac{(f,u,\asyn{\alpha:\mu}{k})\in\Res\quad\s\in\Stmt}{(f,u,\asyn{\alpha\alp{\ahole}:\mu}{\subsbag{k}{\alpha\arp{\ahole}}{\s}})\in\Res}}\qquad\displaystyle{\frac{(f,u,\asyn{\alpha:\mu}{k})\in\Res\quad\s\in\Stmt\quad(\xlen(f)+\xcard{u})<\nmax}{(x:f,u,\asyn{\alpha\alp{\ahole}:\alpha\arp{\ahole}:\mu}{k})\in\Res}}$}\\
\noindent
where $\xlen(f)$ is the length of sequence $f\in\SRes$, and $\xcard{u}$ is 
the cardinal number of the $u$.

We also define the class of {\em{configurations}} $(t\in)\Conf$ by
$\Conf=(\Stmt\times\Res)\cup\Rc$. 
\end{defn}

\begin{rem} \label{ccsn.osem.2}
If we endow the set $\Nset\times\Nset$ with the {\em{lexicographic ordering}} denoted by $\prec$, 
we can define the complexity measure \mbox{$\cres:\Res\to(\Nset\times\Nset)$} by 
$\cres(f,u,\asyn{\mu}{k})=(\xcard{u},\cb(\mu))$, where $\xcard{u}$ is the cardinal number of $u$,
and for $\mu=(\alpha_1,\ldots,\alpha_m)\in\Ids$ the mapping $\cb(\mu)$ is given by $\cb(\mu)=\sum_{1\leq{}i\leq{}m}\ca(\alpha_i)$;
here, $\ca(\alpha)$ is the size of the term $\alpha$ (i.e., the number of nodes in the abstract syntax tree of $\alpha$, $\ca:\Id\to\Nset$).
One can verify that for any rule $\displaystyle{\frac{\varrho}{\varrho'}}$ presented in Definition \ref{ccsn.osem.1} ,
we have $\cres(\varrho)\prec\cres(\varrho')$. Thus, any derivation tree proving that $\varrho\in\Res$ is finite. 
\end{rem}

Before introducing the transition relation for $\Lccsn$, we present a mapping 
$\ks:(\Id\times\KRes)\to\Rc$ that is used to transform an element $k$ of type 
$\KRes$ into a value of type $\Rc$:
\begin{itemize}
\item[]$\ks(\alpha,k)=\xlet \pi=\idbag(k) \xin$
\item[]\qquad$\xif \pirestr{\pi}{\alpha}=\emptyset \xthen \E \,\,\xelse \xif \pirestr{\pi}{\alpha}=\{\alpha\} \xthen \applybag{k}{\alpha}$
\item[]\qquad$\xelse \xif \pirestrn{\pi}{\alpha}=\{c\} \xthen \ks(\alpha\anu{\ahole}{c},k)\restrr{c}$
\item[]\qquad$\xelse \xif \alpha\ars\in\pi \xthen (\ks(\alpha\als{\ahole},k))\seqr(\applybag{k}{\alpha\ars})
\xelse  (\ks(\alpha\alp{\ahole},k))\parr(\ks(\alpha\arp{\ahole},k)),$
\end{itemize}

\noindent
where the operators $\restrr:(\Rc\times\Names)\to\Rc$, $\seqr,\parr:(\Rc\times\Rc)\to\Rc$ are given by:
$\E\restrr{}c=\E$, $\s\restrr{}c=\xrestr{\s}{c}$, 
$\E\seqr\E=\E$, $\E\seqr\s=\s$, $\s\seqr\E=\s$, $\s_1\seqr\s_2=\s_1\xseq\s_2$,
$\E\parr\E=\E$, $\E\parr\s\!=\!\s$, $\s\parr\E\!=\!\s$, \mbox{$\s_1\parr\s_2\!=\!\s_1\xmerge\s_2$}. 
For any $\pi\in\Pi$ and $\alpha\in\Id$ we use the notation 
$\pirestrn{\pi}{\alpha}$ given by: 
\begin{itemize}
\item[]$\pirestrn{\pi}{\alpha}=\{ c \mid \alpha'\in\pi, \amatchn(\alpha,\alpha')=c \in \Names \}$
\item[]$\amatchn(\ahole,\anu{\alpha_2}{c})=c$,\quad$\amatchn(\anu{\alpha_1}{c},\anu{\alpha_2}{c})=\amatchn(\alpha_1,\alpha_2)$, 
\item[]$\amatchn(\als{\alpha_1},\als{\alpha_2})=\amatchn(\alp{\alpha_1},\alp{\alpha_2})=\amatchn(\arp{\alpha_1},\arp{\alpha_2})=\amatchn(\alpha_1,\alpha_2)$, 
\item[]and \quad $\amatch(\alpha_1,\alpha_2)=\xundef$\quad otherwise.
\end{itemize}

\noindent
The type of mapping $\amatchn$ is 
$\amatchn:(\Id\times\Id)\to(\Names\cup\{\xundef\})$, with $\xundef\notin\Names$.
\smallskip

The transition relation $\to$ for language $\Lccsn$ is presented below
by using the notation \mbox{$\xstep{t}{b}{t'}$} to expresses that $(t,b,t')\in\to$.
Like in \cite{bv96}, in Definition \ref{ccsn.osem.3} we write $t_1\xinfer{}t_2$ as 
an abbreviation for $\displaystyle{\frac{\xstep{t_2}{b}{t'}}{\xstep{t_1}{b}{t'}}}$.

\begin{defn} \label{ccsn.osem.3}
The transition relation $\to$
for $\Lccsn$ is the smallest subset of 
\mbox{$\Conf\times\IAct\times\Conf$}
satisfying the rules given below. 
\begin{itemize}
\item[(A0)] $\xstep{(a,((),u,\asyn{(\alpha)}{k}))}{b}{r}$\hspace*{4.4cm}$\xif \xcard{u}\leq\nmax, \iota(\{\at{a}{\alpha}\}\cup{}u)=b, \ks(\ahole,k)=r$
\item[(R1)] $(a,(\s:f,u,\asyn{\alpha:\mu}{k}))\xinfer(\s,(f,\{\at{a}{\alpha}\}\cup{}u,\asyn{\mu}{k}))$\quad$\xif \xcard{u}\leq\nmax$
\item[(R2)] $(y,(f,u,\asyn{\alpha:\mu}{k}))\xinfer(D(y),(f,u,\asyn{\alpha:\mu}{k}))$
\item[(R3)] $(\xrestr{\s}{c},(f,u,\asyn{\alpha:\mu}{k}))\xinfer(\s,(f,u,\asyn{\alpha\anu{\ahole}{c}:\mu}{k}))$
\item[(R4)] $(\s_1\xseq\s_2,(f,u,\asyn{\alpha:\mu}{k}))\xinfer(\s_1,(f,u,\asyn{\alpha\als{\ahole}:\mu}{\subsbag{k}{\alpha\ars}{\s_2}}))$
\item[(R5)] $(\s_1+\s_2,(f,u,\asyn{\alpha:\mu}{k}))\xinfer(\s_1,(f,u,\asyn{\alpha:\mu}{k}))$
\item[(R6)] $(\s_1+\s_2,(f,u,\asyn{\alpha:\mu}{k}))\xinfer(\s_2,(f,u,\asyn{\alpha:\mu}{k}))$
\item[(R7)] $(\s_1\xlmerge\s_2,(f,u,\asyn{\alpha:\mu}{k}))\xinfer(\s_1,(f,u,\asyn{\alpha\alp{\ahole}:\mu}{\subsbag{k}{\alpha\arp{\ahole}}{\s_2}}))$
\item[(R8)] $(\s_1\xlsyn\s_2,(f,u,\asyn{\alpha:\mu}{k}))\xinfer(\s_1,(\s_2:f,u,\asyn{\alpha\alp{\ahole}:\alpha\arp{\ahole}:\mu}{k}))$
\qquad$\xif (\xlen(f)+\xcard{u})<\nmax$
\item[(R9)] $(\s_1\xsyn\s_2,(f,u,\asyn{\alpha:\mu}{k}))\xinfer(\s_1,(\s_2:f,u,\asyn{\alpha\alp{\ahole}:\alpha\arp{\ahole}:\mu}{k}))$
\qquad$\xif (\xlen(f)+\xcard{u})<\nmax$
\item[(R10)] $(\s_1\xsyn\s_2,(f,u,\asyn{\alpha:\mu}{k}))\xinfer(\s_2,(\s_1:f,u,\asyn{\alpha\alp{\ahole}:\alpha\arp{\ahole}:\mu}{k}))$
\qquad$\xif (\xlen(f)+\xcard{u})<\nmax$
\item[(R11)] $(\s_1\xmerge\s_2,(f,u,\asyn{\alpha:\mu}{k}))\xinfer(\s_1,(f,u,\asyn{\alpha\alp{\ahole}:\mu}{\subsbag{k}{\alpha\arp{\ahole}}{\s_2}}))$
\item[(R12)] $(\s_1\xmerge\s_2,(f,u,\asyn{\alpha:\mu}{k}))\xinfer(\s_2,(f,u,\asyn{\alpha\alp{\ahole}:\mu}{\subsbag{k}{\alpha\arp{\ahole}}{\s_1}}))$
\item[(R13)] $(\s_1\xmerge\s_2,(f,u,\asyn{\alpha:\mu}{k}))\xinfer(\s_1,(\s_2:f,u,\asyn{\alpha\alp{\ahole}:\alpha\arp{\ahole}:\mu}{k}))$
\qquad$\xif (\xlen(f)+\xcard{u})<\nmax$
\item[(R14)] $(\s_1\xmerge\s_2,(f,u,\asyn{\alpha:\mu}{k}))\xinfer(\s_2,(\s_1:f,u,\asyn{\alpha\alp{\ahole}:\alpha\arp{\ahole}:\mu}{k}))$
\qquad$\xif (\xlen(f)+\xcard{u})<\nmax$
\item[(R15)] $\s\xinfer(\s,((),\emptyset,\asyn{(\az)}{\kz}))$ .
\end{itemize}
\end{defn}

In a configuration $(\s,(f,u,\asyn{\alpha:\mu}{k}))$, $\alpha$ is the 
identifier of the {\em{active computation}} $\s$, and the elements contained 
in $\mu$ are identifiers of the computations contained in the synchronous 
resumption $f$. Hence,we often represent a list of type $\Ids$ by highlighting 
the first element as $\alpha:\mu$, where $\mu$ can be the empty list.
To define the behaviour of a restriction operation $\xrestr{x}{c}$ evaluated 
in a context given by identifier~$\alpha$, in rule (R3) a new (local) context 
is created indicated by the identifier $\alpha\anu{\ahole}{c}$ for the 
evaluation of statement $\s$. To model multiparty interactions, joint inputs 
and output actions are added to the interaction set $u$, and an inference 
starts according to rule (R1) searching for a set of actions that could 
possibly interact. Axiom (A0) models the transition performed when it is 
found a set of elementary statements that can interact.

\begin{defn} \label{opsem.3}
For $t\in{}\Conf$, we write $t\narrow$ to express that there are no $b,t'$ 
such that $\xstep{t}{b}{t'}$. We say that~$t$ {\em{terminates}} if $t=\E$, 
and that $t$ {\em{blocks}} if $t\narrow$ and $t$ does not terminate. 
\end{defn}

\begin{defn} \label{opsem.4} (Operational semantics $\den{O}{\cdot}$ for $\Lccsn$)
Let $(S\in)\Semo=\Conf\to\Po$ ($\Po$ was defined in Section \ref{final}). 
We define the higher order mapping $\Omega:\Semo\to\Semo$ by:
\item[]\qquad$\Omega(S)(t)=\left\{
\begin{array}{l}
\{\epsilon\}\hspace*{3cm}\xif\,  t \,\,\textnormal{terminates}\\
\{\delta\}\hspace*{3cm}\xif\,  t \,\,\textnormal{blocks}\\
\bigcup\{b\cdot{}S(t') \mid \xstep{t}{b}{t'} \} \hspace*{0.2cm}\xotherwise .
\end{array}
\right.$\\
We put $\osem=\fix(\Omega)$. 
We also define $\den{O}{\cdot}:\Stmt\to\Po$ \ by \ 
$\den{O}{\s}=\osem(\s,((),\emptyset,\asyn{(\az)}{\kz}))$. 
\end{defn}

\noindent
To justify Definition \ref{opsem.4}, we note that the mapping $\Omega$ is a 
contraction (it has a {\em{unique}} fixed point, according to Banach's Theorem).

\begin{exmp} \label{ccsn.opsem.ex}
Let $\s_1,\s_2,\s_3\in\Stmt$, \mbox{$\s_1=(b_1\xmerge{}b_2);\xstop$},  
\mbox{$\s_2=((\xrestr{((b_1;(c_1\join{}c_2))\xmerge\ovr{c}_1)}{c_1})\xmerge\ovr{c}_2)\xseq(b_2+b_3)$}, 
and 
\mbox{$\s_3=(\xrestr{((c_1\join{}c_2)\xmerge\ovr{c}_1)}{c_1})\xmerge{}\ovr{c}_2$}.
Considering $\nmax=2$, in all the examples presented in this paper we have at most 
$3 (=\nmax+1)$ concurrent components interacting in each computing step. 
We use the function $\den{O}{\cdot}$ to compute the operational semantics for 
each of the three $\Lccsn$ programs $\s_1, \s_2$ and $\s_3$.
One can check that: $\den{O}{\s_1}=\{b_1b_2\delta,b_2b_1\delta\}$,
$\den{O}{\s_2}=\{b_1\tau{}b_2,b_1\tau{}b_3\}$, and $\den{O}{\s_3}=\{\tau\}$.
\end{exmp}

\noindent
\paragraph*{\bf Implementation:}
The operational and denotational semantics presented in this paper are 
available at {\texttt{http://ftp.utcluj.ro/pub/users/gc/eneia/from24}}
as executable semantic interpreters implemented in Haskell. All $\Lccsn$ and 
$\Lccsnp$ programs presented in this paper (Example~\ref{ccsn.opsem.ex}, 
Example~\ref{ccsn.densem.ex} and Example \ref{ccsnp.ex}) can be tested by 
using these semantic interpreters.

\subsection{Denotational semantics} \label{ccsn.dsem}

We define a denotational semantics \mbox{$\sem{\cdot}:\Stmt\to\D$} for $\Lccsn$, 
where (domain $\Pd$ is given in Section~\ref{final}): 

\centerline{$
\begin{array}{rcl}
(\phi\in)\D & \cong & \Cont\fun{1}\Pd\\
(\gamma\in)\Cont & = & \Conts\times\U\times\Conta \hspace*{1.2cm} \textnormal{({\em{continuations}})}\\
(\varphi\in)\Conts & = & \sum_{i=0}^{\nmax}\Sem^i\hspace*{2.4cm} \textnormal{({\em{synchronous\,continuations}})}\\
\Conta & = & \Ids\times\K \hspace*{2.8cm} \textnormal{({\em{asynchronous\,continuations}})}\\
(\kappa\in)\K & = & \bag{\Den}\hspace*{1.5cm} (\phi\in)\Sem = \half\cdot\D\hspace*{1.5cm}(\phi\in)\Den = \{\phie\}+\half\cdot\D .\\
\end{array}
$}

\noindent
The domain equation is given by the isometry $\cong$ between complete metric spaces.
All basic sets ($\Id,\Pi, \U$ and $\Ids$) are equipped with the discrete 
metric (which is an ultrametric). The construction $\bag{\cdot}$ is presented 
in Section~\ref{theory.bags}. According to \cite{ar89,bv96}, this domain 
equation has a solution which is {\em{unique}} (up to isometry) and the 
solutions for $\D$ and all other domains presented above are obtained as 
complete ultrametric spaces.

We use semantic operators for restriction $\oprestr:(\D\times\Names)\to\D$, 
sequential composition \mbox{$\xseq:(\D\times\D)\to\D$}, nondeterministic 
choice $\xned:(\D\times\D)\to\D$, parallel composition (or merge) 
$\xmerge:(\D\times\D)\to\D$, left merge \mbox{$\xlmerge\!:\!(\D\times\D)\to\D$}, 
left synchronization merge $\xlsyn\!:\!(\D\times\D)\to\D$ and synchronization 
merge \mbox{$\xsyn\!:\!(\D\times\D)\to\D$}, defined with the aid of operators 
on continuations $\opcnu:(\Cont\times\Names)\to\Cont$, 
$\opcseq:(\D\times\Cont)\to\Cont$, $\opclpar:(\D\times\Cont)\to\Cont$ and 
$\opclsyn:(\D\times\Cont)\to\Cont$ as follows:

\begin{itemize}
\item[]$\xrestr{\phi}{c}=\xlbd{\gamma}{\phi(\cnu{\gamma}{c})}$,\qquad$\phi_1\xseq\phi_2=\xlbd{\gamma}{\phi_1(\cseq{\phi_2}{\gamma})}$,\qquad
$\phi_1\xlmerge\phi_2=\xlbd{\gamma}{\phi_1(\clpar{\phi_2}{\gamma})}$,
\item[]$\phi_1\xned\phi_2=\xlbd{\gamma}{\phi_1(\gamma)\xnedp{i}\phi_2(\gamma)$,\quad$\xwhere i=\xcardu(\gamma)}$\quad{(operators $\xnedp{i}$ are presented in Section \ref{final})},
\item[]$\phi_1\xlsyn\phi_2=\xif \xcardc(\gamma) \xthen \xlbd{\gamma}{\phi_1(\clsyn{\phi_2}{\gamma})} \xelse \{\tau^{\nmax-\xcard{u}}\}$,\qquad$\phi_1\xsyn\phi_2\!=\!\phi_1\xlsyn\phi_2\xned\phi_2\xlsyn\phi_1$
\item[] $\phi_1\xmerge\phi_2\!=\!\phi_1\xlmerge\phi_2\xned\phi_2\xlmerge\phi_1\xned\phi_1\xsyn\phi_2$,\qquad $\cnu{(\varphi,u,\asyn{\alpha:\mu}{\kappa})}{c}=(\varphi,u,\asyn{\alpha\anu{\ahole}{c}:\mu}{\kappa})$, 
\item[]$\cseq{\phi}{(\varphi,u,\asyn{\alpha:\mu}{\kappa})}=(\varphi,u,\asyn{\alpha\als{\ahole}:\mu}{\subsbag{\kappa}{\alpha\ars}{\phi}})$, 
\item[]$\clpar{\phi}{(\varphi,u,\asyn{\alpha:\mu}{\kappa})}=(\varphi,u,\asyn{\alpha\alp{\ahole}:\mu}{\subsbag{\kappa}{\alpha\arp{\ahole}}{\phi}})$, and
\item[]$\clsyn{\phi}{(\varphi,u,\asyn{\alpha:\mu}{\kappa})}=(\phi:\varphi,u,\asyn{\alpha\alp{\ahole}:\alpha\arp{\ahole}:\mu}{\kappa})$. 
\end{itemize}

\noindent
The mapping $\xcardu\!:\!\Cont\!\to\!\Nset$ is defined by $\xcardu(\varphi,u,\asyn{\alpha:\mu}{\kappa})\!=\!\xcard{u}$,
and predicate \mbox{$\xcardc\!:\!\Cont\to\Bool$} is given by $\xcardc(\varphi,u,\asyn{\alpha:\mu}{\kappa})=((\xlen(\varphi)+\xcard{u})<\nmax)$,
where $\xlen(\varphi)$ is the length of sequence $\varphi$. 
Since operators~$\xnedp{i}$ are associative and commutative \cite{ct20,ct22}, 
the operator $\xned$ is also associative and commutative. The mapping 
$\kd:(\Id\times\K)\to\Den$ is the semantic counterpart of function $\ks$ 
given in Section \ref{ccsn.osem}.

\begin{itemize}
\item[]$\kd(\alpha,\kappa)=\xlet \pi=\idbag(\kappa) \xin$
\item[]\qquad$\xif \pirestr{\pi}{\alpha}=\emptyset \xthen \phie \,\,\xelse \xif \pirestr{\pi}{\alpha}=\{\alpha\} \xthen \,\applybag{\kappa}{\alpha}$
\item[]\qquad$\xelse \xif \pirestrn{\pi}{\alpha}=\{c\} \xthen \,\kd(\alpha\anu{\ahole}{c},\kappa)\restrden{c}$
\item[]\qquad$\xelse \xif \alpha\ars\in\pi \xthen \,(\kd(\alpha\als{\ahole},\kappa))\seqden(\applybag{k}{\alpha\ars})
\xelse  \,(\kd(\alpha\alp{\ahole},\kappa))\parden(\kd(\alpha\arp{\ahole},\kappa))$.
\end{itemize}

\noindent
Here, the operators $\restrden:(\Den\times\Names)\to\Den$, $\seqden,\parden:(\Den\times\Den)\to\Den$ are given by:
\mbox{$\phie\restrden{}c=\phie$}, \mbox{$\phi\restrden{}c=\xrestr{\phi}{c}$}, 
$\phie\seqden\phie=\phie$, $\phie\seqden\phi=\phi\seqden\phie=\phi$, $\phi_1\seqden\phi_2=\phi_1\xseq\phi_2$,
$\phie\parden\phie=\phie$, $\phie\parden\phi=\phi\parden\phie=\phi$, \mbox{$\phi_1\parden\phi_2\!=\!\phi_1\xmerge\phi_2$} -- 
the notation $\pirestrn{\pi}{\alpha}$ is presented in Section \ref{ccsn.osem}.

\begin{defn} \label{densem.1}
(Denotational semantics $\sem{\cdot}$)
Let $\opa:\Act\to\Cont\to\Pd$ be given by:
\begin{itemize}
\item[]$\opa(a)((),u,\asyn{(\alpha)}{\kappa})\!=\!\xif \xcard{u}\!\leq\!\nmax \xthen 
(\!\xif (\iota(\{\at{a}{\alpha}\}\cup{}u)\!=\!b\in\IAct, \kd(\ahole,\kappa)\!=\!\phie) \xthen \{\tau^{\nmax-\xcard{u}}\cdot{b}\}$
\item[]\hspace*{1.7cm}$\xelse (\!\xif (\iota(\{\at{a}{\alpha}\}\cup{}u)\!=\!b\in\IAct, \kd(\ahole,\kappa)\!=\!\phi\in\D) \xthen \tau^{\nmax-\xcard{u}}\cdot{b}\cdot\phi(\gammaz)
\xelse \{\tau^{\nmax-\xcard{u}}\}))$
\item[]$\opa(a)(\phi:\varphi,u,\asyn{(\alpha)}{\kappa})=\xif \xcard{u}\!\leq\!\nmax \xthen \tau\cdot\phi(\varphi,\{\at{a}{\alpha}\}\cup{}u,\asyn{\mu}{\kappa}) 
\xelse \{\tau^{\nmax-\xcard{u}}\cdot{b}\},$
\end{itemize}
where $\gammaz=((),\emptyset,\asyn{(\az)}{\xlbd{\alpha}{\phie}})$ \ and \ 
$\az=\ahole$ (as in Definition \ref{ccsn.osem.1}).

For \mbox{$(S\in)\Semd=\Stmt\to\D$}, we define the higher-order mapping $\Psi:\Semd\to\Semd$ by\\
\quad\centerline{$
\begin{array}{rcl}
\Psi(S)(a) & = & \xlbd{\gamma}{\opa(a)(\gamma)}\\
\Psi(S)(y) & = & \Psi(S)(D(y))\\ 
\Psi(S)(\xrestr{\s}{c}) & = & \xrestr{\Psi(S)(\s)}{c}\\ 
\Psi(S)(\s_1;\s_2) & = & \Psi(S)(\s_1)\xseq{}S(\s_2)\\
\Psi(S)(\s_1+\s_2) & = & \Psi(S)(\s_1)\xned\Psi(S)(\s_2)\\
\Psi(S)(\s_1{\xmerge}\s_2) & = & (\Psi(S)(\s_1)\xlmerge{}S(\s_2))\xned (\Psi(S)(\s_2)\xlmerge{}S(\s_1))\xned\\
&& (\Psi(S)(\s_1)\xlsyn{}\Psi(S)(\s_2))\xned (\Psi(S)(\s_2)\xlsyn{}\Psi(S)(\s_1))\\
\Psi(S)(\s_1{\xsyn}\s_2)& = & 
(\Psi(S)(\s_1)\xlsyn{}\Psi(S)(\s_2))\xned(\Psi(S)(\s_2)\xlsyn{}\Psi(S)(\s_1))\\
\Psi(S)(\s_1{\xlmerge}\s_2) & = & \Psi(S)(\s_1)\xlmerge{}S(\s_2)\\
\Psi(S)(\s_1{\xlsyn}\s_2) & = & \Psi(S)(\s_1)\xlsyn{}\Psi(S)(\s_2).\\
\end{array}
$}

We consider $\dsem=\fix(\Psi)$, and define $\den{D}{\cdot}:\Stmt\to\Pd$ by
$\den{D}{\s}=\dsem(\s)(\gammaz)$. 
\end{defn}

Definition \ref{densem.1} can be easily justified by the techniques used in 
standard metric semantics \cite{bv96}, based on the observation that the 
definition of mapping $\Psi(S)(\s)$ is structured by induction on 
the complexity measure $\cx(\s)$ presented in Definition~\ref{ccsn.3}).

\begin{exmp} \label{ccsn.densem.ex}
Let $x_1,x_2,x_3\in\Stmt$ be as in Example \ref{ccsn.opsem.ex}.
Considering $\nmax=2$, one can check that:
$\den{D}{\s_1}=\{\tau^{\nmax}b_1\tau^{\nmax}b_2\tau^{\nmax},\tau^{\nmax}b_2\tau^{\nmax}b_1\tau^{\nmax}\}$,
$\den{D}{\s_2}=\{\tau^{\nmax}b_1\tau^{\nmax}\tau\tau^{\nmax}{}b_2,\tau^{\nmax}b_1\tau^{\nmax}\tau\tau^{\nmax}{}b_3\}$, and $\den{D}{\s_3}=\{\tau^{\nmax}\tau\}$.
For each $\s_i$ (\mbox{$i=1,2,3$}), we observe that the result of 
$\den{O}{\s_i}$ (given in Example \ref{ccsn.opsem.ex}) can be obtained from 
the yield of $\den{D}{x_i}$ if we omit the interspersed sequences 
$\tau^{\nmax}$ and replace a terminating sequence $\tau^{\nmax}$ with~$\delta$.
\end{exmp}

\section{Continuation semantics for $\ccsnp$} \label{ccsnp}

As explained in \cite{lv10}, the joint input construct of $\ccsn$ 
``induces a unidirectional information flow''. 
In \cite{lv10}, it is also studied a more general calculus called $\ccsnp$
which can be obtained from $\ccsn$ by replacing 
outputs and inputs with the {\em{joint prefix}} construct written
as $\mset{\alpha_1,\ldots,\alpha_m}$,
where each~$\alpha_i$ can be either an input action or an output action.
Since we use the symbol~$\alpha$ to represent identifiers, and the notation 
$\mset{\ldots}$ to represent multisets, we employ a different notation. 
In this section, we consider a language named~$\Lccsnp$ which can be obtained from 
$\Lccsn$ by replacing the output and joint input constructs with the joint prefix construct.
We denote the $\Lccsnp$ {\em{joint prefix}} construct as 
$l_1\join\cdots{}\join{}l_m$, where~$l$ is an 
element of the set of {\em{synchronization actions}} $\SAct$,
$(l\in)\SAct=\Names\cup\Conames$. $(c\in)\Names$ is the given set of {\em{names}} and 
$(\ovr{c}\in)\Conames=\{\ovr{c} \mid c\in\Names \}$ is the set of {\em{co-names}}.    
We use a mapping $\ovr{\cdot}:\SAct\to\SAct$, defined such that $\ovr{\ovr{c}}=c$. 
The syntax of $\Lccsnp$ is similar to the syntax of $\Lccsn$.
Only the classes of {\em{joint prefixes}} $(j\!\in)\Jpn$   
and {\em{elementary actions}}  $(a\!\in)\Act$ are specific to $\Lccsnp$, and 
they are defined as follows: 

\centerline{$j \xbnf l \xsep j\join{}j$\qquad\qquad\qquad\qquad$a \xbnf b \xsep j \xsep \xstop$ .}

\noindent
As in the case of language $\Lccsn$, $b$ is an element of the class of 
{\em{internal actions}} $IAct$ (which includes the distinguished element 
$\tau$), and $\xstop$ denotes {\em{deadlock}}. 	In $\Lccsnp$, the classes of 
{\em{statements}} $(\s\in)\Stmt$, {\em{guarded statements}} $(g\in)\GStmt$ 
and {\em{declarations}} $(D\in)\Decl$ remain as in Definition \ref{ccsn.1}.

In $\Lccsnp$, a joint prefix is a construct $l_1\join\cdots\join{}l_m$ with 
$1\leq{}m\leq{}n$, where $n$ is the number occurring in the name of the 
language $\Lccsnp$ and in the name of the syntactic class $\Jpn$. As in 
Section \ref{ccsnp}, we use the number $\nmax$ introduced in Remark 
\ref{ccsn.2} as a parameter of the formal specification of $\Lccsnp$. 
However, in the case of language $\Lccsnp$ we cannot simply put $\nmax=n$ (as 
we did for $\Lccsn$). The general rule is that $\nmax$ should be chosen 
sufficiently large such that at most $\nmax+1$ concurrent components are 
involved in a synchronous (multiparty) interaction in each computation step.

The flexibility provided by the technique of continuations (as a semantic tool)
can handle a variety of complex interaction mechanisms with only minor changes 
to the formal specifications. 
Based on this flexibility, the continuation semantics for $\Lccsnp$ can be 
obtained easily from the semantic specification of $\Lccsn$. Only one 
modification in the semantic models of $\Lccsn$ is necessary to obtain the 
corresponding semantic models for $\Lccsnp$. Namely, we need to provide a new 
definition for the interaction function $\iota$. The interaction function 
$\iota:\U\!\to\!(\IAct\cup\{\xundef\})$ for language $\Lccsnp$ is defined by 
using $\msync\!:\!\U\!\to\!(\IAct\cup\{\xundef\})$ in the following way:

\vspace*{0.1cm}
\centerline{$\iota(\{\at{b}{\alpha}\})=b$,\quad$\iota(\{\at{\xstop}{\alpha}\})=\iota(\{\at{j}{\alpha}\})=\xundef$\quad{and}\quad$\iota(u)=\msync(u)$\ otherwise, \ where}
\vspace*{-0.3cm}
\begin{itemize}
\item[]$\msync(u)=\xif u=\{\at{j_1}{\alpha_1},\ldots,\at{j_m}{\alpha_m}\}, j_i\in\Jpn$\quad for all \,\,$i=1,\ldots,n$
\item[]\hspace*{2cm}$\!\!\xthen \xlet w_r=\mrcv{\at{j_1}{\alpha_1}}\uplus\cdots\uplus\mrcv{\at{j_1}{\alpha_1}}, w_s=\msnd{\at{j_1}{\alpha_1}}\uplus\cdots\uplus\msnd{\at{j_1}{\alpha_1}}$
\item[]\hspace*{2.8cm}$\xin \!\xif (\xcard{w_r}\!=\!\xcard{w_s} \wedge \exists \varpi_r\!\in\!\perm(w_r), \varpi_s\!\in\!\perm(w_s) [\match(\varpi_r,\varpi_s)]) 
\xthen \tau \xelse \xundef$
\item[]\hspace*{2cm}$\!\xelse \xundef$ .
\end{itemize}
\vspace*{-0.4cm}
\noindent

Neither in $\Lccsn$ nor in $\Lccsnp$ we impose the condition that the actions 
contained in a {\em{joint input}} or a {\em{joint prefix}} are distinct 
(these constructions describe multisets of actions). In the definition of 
$\msync$, we let $w$ range over the set $\mset{\SAct\times\Id}$ of finite 
multisets of elements of type $\SAct\times\Id$, and $\uplus$ is the multiset sum 
operator (the notation for multisets is as in \cite{ct16}). Also, we let 
$\varpi$ range over the set $(\SAct\times\Id)^*$ of finite sequences over 
$\SAct\times\Id$; for the representation sequences we use lists 
(notation~\ref{lists}). We assume that $\perm$ is a function which computes 
the permutations of a multiset. { }
If the mapping $\msync(u)$ receives as argument a set 
$u=\{\at{j_1}{\alpha_1},\ldots,\at{j_m}{\alpha_m}\}$ (where $j_i\in\Jpn$),
then it splits the collection of actions contained in $u$ into two multisets 
$w_r$ and $w_s$ containing input actions and output actions, respectively.
For this purpose, it uses two mappings 
$\mrcv{\cdot}:(\Jpn\times\Id)\to\mset{\Names\times\Id}$ and 
$\msnd{\cdot}:(\Jpn\times\Id)\to\mset{\Conames\times\Id}$. 
For example, if $j=c_1\join{}c_1\join{}\ovr{c}_2\join\ovr{c}_3$ then 
$\mrcv{\at{j}{\alpha}}=\mset{\at{c_1}{\alpha},\at{c_1}{\alpha}}$ and 
$\msnd{\at{j}{\alpha}}=\mset{\at{\ovr{c}_2}{\alpha},\at{\ovr{c}_3}{\alpha}}$.
The function $\msync$ yields $\tau$ when it finds a pair of permutations of 
$w_r$ and $w_s$ that can match; it uses function $\match$, which in turn uses 
the binary interaction mapping $\iotatwo$ presented in Section~\ref{theory.bags}.

\begin{itemize}
\item[]$\mrcv{\at{c}{\alpha}}\!=\!\mset{\at{c}{\alpha}} \xif c\!\in\!\Names$, $\mrcv{\at{\ovr{c}}{\alpha}}\!=\!\mset{} \xif \ovr{c}\in\Conames$, \, and\, $\mrcv{\at{(j_1\join{}j_2)}{\alpha}}\!=\!\mrcv{\at{j_1}{\alpha}}\uplus\mrcv{\at{j_2}{\alpha}}$ 
\item[]$\msnd{\at{\ovr{c}}{\alpha}}\!=\!\mset{\at{\ovr{c}}{\alpha}} \xif \ovr{c}\!\in\!\Conames$, $\msnd{\at{c}{\alpha}}\!=\!\mset{} \xif c\in\Names$, \, and\, $\msnd{\at{(j_1\join{}j_2)}{\alpha}}\!=\!\msnd{\at{j_1}{\alpha}}\uplus\msnd{\at{j_2}{\alpha}}$ 
\item[]$\match((),())=\true$
\item[]$\match(\at{c_r}{\alpha_r}:\varpi_r,\at{c_s}{\alpha_s}:\varpi_s)= \xif \iotatwo(\at{c_r}{\alpha_r},\at{\ovr{c}_s}{\alpha_s}) \xthen \match(\varpi_r,\varpi_s) \xelse \false$
\item[]and \quad $\match(\varpi_r,\varpi_s)=\false$\quad otherwise.
\end{itemize}

Apart from this new definition of function $\iota$,
all other components of the formal specification of $\Lccsnp$ 
(including the semantic domains and all semantic operators) 
remain as in Section \ref{ccsn}, for both the operational and the 
denotational semantics. Thus, we define the operational semantics 
\mbox{$\den{O}{\cdot}:\Stmt\to\Po$} and the denotational semantics 
$\den{D}{\cdot}:\Stmt\to\Pd$ for $\Lccsnp$ as in Definitions \ref{opsem.4} 
and~\ref{densem.1}.

\begin{exmp} \label{ccsnp.ex} 
Let $\s_4\in\Stmt$, 
\mbox{$\s_4=\xrestr{((((\ovr{c}_1\join{}c_2)\xseq{}b_1)\xmerge(c_1\join{}\ovr{c}_3))}{c_1}\xmerge((\ovr{c}_2\join{}c_3)\xseq{}b_2)$}.   
This $\Lccsnp$ program is based on a $\ccsnp$ example presented in \cite{lv10}.
Considering $\nmax=2$, one can check that we have: 
$\den{O}{\s_4}=\{\tau{}b_1b_2,\tau{}b_2b_1\}$, and $\den{D}{\s_4}=
\{\tau^{\nmax}\tau\tau^{\nmax}{}b_1\tau^{\nmax}b_2,\tau^{\nmax}\tau{}\tau^{\nmax}b_2\tau^{\nmax}b_1\}$.
\end{exmp}

\section{Weak abstractness of continuation semantics} \label{weakabs}

Since the domain of CSC is not fully abstract \cite{ct17}, we study the 
abstractness of continuation semantics based on the weak abstractness 
principle presented in Section \ref{theory.weakabs}; thus, we show that the 
denotational models given in this article are weakly abstract with respect to 
their corresponding operational models. The proofs for $\Lccsn$ and $\Lccsnp$ 
are similar. Due to space limitations, we focus on the weak abstractness 
proof for $\Lccsn$, and show that the denotational semantics $\dsem:\Stmt\to\D$ 
presented in Definition \ref{densem.1} is weakly abstract 
with respect to the operational semantics $\den{O}{\cdot}:\Stmt\to\Po$ 
presented in Definition \ref{opsem.4}.

We consider the class of {\em{syntactic contexts}} for $\Lccsn$
with typical element $\scontext$ (see Definition~\ref{weakabs.scontext}).

\begin{defn} \label{weakabs.scontext} (Syntactic contexts for $\Lccsn$) 
\ \ $\scontext \xbnf \shole \xsep a \xsep y \xsep \xrestr{\scontext}{c} \xsep 
\scontext;\scontext \xsep \scontext+\scontext \xsep \scontext{\,\xmerge\,}
\scontext \xsep \scontext{\xsyn}\scontext \xsep \scontext{\xlmerge}\scontext \xsep 
\scontext{\xlsyn}\scontext$. 
\end{defn}

\noindent
For a context $\scontext$ and statement $\s$, we denote by $\scontext(\s)$ 
the result of replacing all occurrences of $\shole$ in $\scontext$ with~$\s$.

\begin{defn} \label{weakabs.def1}
Let $\semx:\Qo\to\Qd$ be the (unique) function \cite{bv96} satisfying
$\semx(\epsilon)=\epsilon$, $\semx(\delta)=\tau^{\nmax}$ 
and \mbox{$\semx(b\cdot{}q)=\tau^{\nmax}\cdot{}b\cdot{}\semx(q)$}. 
We also define $\Semx:\Po\to\Pd$ by $\Semx(p)=\{\semx(q) \mid q\in{}p\}$. 
\end{defn}

We can now relate $\den{D}{\cdot}$ and $\den{O}{\cdot}$ for $\Lccsn$ .
The proof of Proposition~\ref{weakabs.correct.2} can proceed by using 
Lemma~\ref{weakabs.lem} and the observation that $\Semx$ is an injective function. 
We omit the proofs for Proposition \ref{weakabs.correct.2} and Lemma 
\ref{weakabs.lem.1}; the reader can find similar results in~\cite{ent00}. 

\begin{lem} \label{weakabs.lem.1}
 $\Semx(\den{O}{\s})=\den{D}{\s}$, for all $\s\in\Stmt$.
\end{lem}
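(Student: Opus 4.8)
The plan is to lift $\dsem$ to a map $\ovr{\dsem}\colon\Conf\to\Pd$ on configurations and to identify both $\Semx\circ\osem$ and $\ovr{\dsem}$ as the unique fixed point of one contraction on $\Conf\to\Pd$, so that the lemma falls out by evaluating that identity at the initial configuration. First I would turn a resumption into a continuation: replace every statement occurring in the synchronous part $f\in\bigcup_i\Stmt^i$ by its denotation, keep $u$ and $\mu$ unchanged, and send $k\in\bag{\Rc}$ to $\mathcal{K}(k)\in\bag{\Den}$ via $\E\mapsto\phie$ and $\s\mapsto\dsem(\s)$; call this translation $\mathcal{C}\colon\Res\to\Cont$. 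Then I set $\ovr{\dsem}(\E)=\{\epsilon\}$, $\ovr{\dsem}(\s)=\den{D}{\s}$ for $\s\in\Rc$, and $\ovr{\dsem}((\s,(f,u,\asyn{\mu}{k})))=\tau^{\xcard{u}}\cdot\dsem(\s)(\mathcal{C}((f,u,\asyn{\mu}{k})))$. The $\tau^{\xcard{u}}$ prefix is bookkeeping: operationally a multiparty interaction is a single visible step, which $\semx$ pads to exactly $\nmax$ silent $\tau$'s, whereas denotationally those $\nmax$ $\tau$'s are spread over the successive commits of actions to the interaction set and the $\tau^{\nmax-\xcard{u}}$ produced at the leaf by $\opa$; the prefix records the $\tau$'s already spent. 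Since $\gammaz=\mathcal{C}(((),\emptyset,\asyn{(\az)}{\kz}))$ and $\xcard{\emptyset}=0$, one gets $\ovr{\dsem}((\s,((),\emptyset,\asyn{(\az)}{\kz})))=\den{D}{\s}$, and also $\Semx(\osem((\s,((),\emptyset,\asyn{(\az)}{\kz}))))=\Semx(\den{O}{\s})$ by Definition~\ref{opsem.4}; so it suffices to prove $\Semx\circ\osem=\ovr{\dsem}$ on all of $\Conf$.

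Next I would introduce $\Theta\colon(\Conf\to\Pd)\to(\Conf\to\Pd)$ by $\Theta(F)(t)=\{\epsilon\}$ if $t$ terminates, $\Theta(F)(t)=\{\tau^{\nmax}\}$ if $t$ blocks, and $\Theta(F)(t)=\bigcup\{\tau^{\nmax}\cdot b\cdot F(t')\mid\xstep{t}{b}{t'}\}$ otherwise (cf.\ Definition~\ref{opsem.3}). Each $t$ has finitely many outgoing transitions, computable by a finite chain of $\xinfer$-reductions (finite because $D$ is guarded and $\xlen(f)<\nmax$; cf.\ Remark~\ref{ccsn.osem.2}), and $\nmax\ge1$, so $\Theta$ is contractive with factor $2^{-\nmax}$, hence has a unique fixed point by Banach's theorem (Theorem~\ref{theory.2}). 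That $\Semx\circ\osem$ is a fixed point of $\Theta$ is immediate from $\osem=\fix(\Omega)$ together with $\semx(\epsilon)=\epsilon$, $\semx(\delta)=\tau^{\nmax}$, $\semx(b\cdot q)=\tau^{\nmax}\cdot b\cdot\semx(q)$ and the commutation of $\Semx$ with unions and with $b\cdot(\cdot)$ (a three-case check against the definition of $\Omega$). Before the main step I would also record the routine facts that $\dsem$ is compositional for all syntactic operators ($\dsem(\xrestr{\s}{c})=\xrestr{\dsem(\s)}{c}$, $\dsem(\s_1;\s_2)=\dsem(\s_1)\xseq\dsem(\s_2)$, $\dsem(\s_1\xmerge\s_2)=\dsem(\s_1)\xmerge\dsem(\s_2)$, $\dsem(y)=\dsem(D(y))$, etc., all directly from $\dsem=\Psi(\dsem)$); that the continuation operators $\opcnu,\opcseq,\opclpar,\opclsyn$ carry $\mathcal{C}$-images of resumptions exactly to the $\mathcal{C}$-images of the resumptions occurring in the right-hand sides of rules (R3),(R4),(R7)--(R14); and that $\ks$ and $\kd$ correspond along $\mathcal{K}$, i.e.\ $\kd(\alpha,\mathcal{K}(k))=\phie$ iff $\ks(\alpha,k)=\E$ and $\kd(\alpha,\mathcal{K}(k))=\dsem(\ks(\alpha,k))$ otherwise (structural induction on the common shape of the clauses of $\ks/\kd$, using that $\restrden,\seqden,\parden$ act on denotations exactly as $\oprestr,\xseq,\xmerge$ and that the $\phie$-clauses mirror the $\E$-clauses).

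The core step is to verify $\ovr{\dsem}=\Theta(\ovr{\dsem})$. The cases $t=\E$ and $t\in\Rc$ are immediate, the latter reducing via rule (R15) to a configuration $(\s,((),\emptyset,\asyn{(\az)}{\kz}))$. For $t=(\s,(f,u,\asyn{\alpha:\mu}{k}))$ I would induct on the length of the chain of $\xinfer$-reductions issuing from $t$ while unfolding $\dsem(\s)(\mathcal{C}(\cdot))$ through $\dsem=\Psi(\dsem)$ and the recorded identities. Each administrative rule (R2)--(R14) is matched by a compositionality/operator identity, so $\ovr{\dsem}(t)$ and $\Theta(\ovr{\dsem})(t)$ each reduce to the same expression over the residual(s): the nondeterministic rules (R5)/(R6) and the four merge residuals (R11)--(R14) are collected by $\xned$, whose operator $\xnedp{i}$ with $i=\xcardu(\mathcal{C}(\varrho))=\xcard{u}$ reproduces exactly the operational treatment of choice --- take the union of the residuals, but discard a purely-blocking residual unless every residual is purely blocking --- which is precisely what $\Omega$ does with $\{\delta\}$; when the active component becomes an elementary action $a$ with $f=()$, axiom (A0) and the first clause of $\opa$ agree (with $\ks(\ahole,k)$ matched to $\kd(\ahole,\mathcal{K}(k))$), and the $\tau^{\xcard{u}}$ prefix together with the $\tau^{\nmax-\xcard{u}}$ inside $\opa$ gives exactly the $\tau^{\nmax}$ of $\Theta$; with $f\neq()$, rule (R1) and the clause $\opa(a)(\phi:\varphi,u,\ldots)=\tau\cdot\phi(\ldots)$ both shift the active component to the head of $f$ and enlarge the interaction set by one, the new $\tau$ being absorbed into the prefix; and a blocked configuration matches the $\{\tau^{\nmax-\xcard{u}}\}$ branches of $\opa$ (and of $\xlsyn$ when $\xcardc$ fails, aligned with the side conditions $(\xlen(f)+\xcard{u})<\nmax$ of (R8)--(R10),(R13),(R14)), again prefixed to $\{\tau^{\nmax}\}$. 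With both $\Semx\circ\osem$ and $\ovr{\dsem}$ shown to be fixed points of $\Theta$, uniqueness yields $\Semx\circ\osem=\ovr{\dsem}$, and evaluating at $(\s,((),\emptyset,\asyn{(\az)}{\kz}))$ gives $\Semx(\den{O}{\s})=\den{D}{\s}$.

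I expect the main obstacle to be this last verification, and within it two points: getting the $\tau$-bookkeeping exactly right --- reconciling the single visible step charged operationally per interaction (padded by $\semx$ to $\nmax$ silent steps) with the $\nmax$ silent steps distributed by the denotational model over the commits and the leaf $\tau^{\nmax-\xcard{u}}$ --- and checking that the parameterised operator $\xnedp{i}$ genuinely reproduces the operational behaviour of nondeterministic choice and of the four-way merge, in particular that a residual that can only block is silently dropped unless every residual can only block. The compositionality identities and the $\ks/\kd$ correspondence are routine structural inductions, and the finiteness of the $\xinfer$-chains uses only the guardedness of $D$ and the bound $\xlen(f)<\nmax$.
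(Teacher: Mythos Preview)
Your proposal is correct and follows essentially the approach the paper has in mind: the paper omits the proof, pointing to \cite{ent00} for ``similar results'', and the standard argument there is exactly the Banach fixed-point comparison you describe. In fact your translation $\mathcal{C}$ coincides with the mapping $\semc{\cdot}$ that the paper introduces later in Definition~\ref{weakabs.cls}, and your lifted semantics $\ovr{\dsem}((\s,(f,u,\asyn{\mu}{k})))=\tau^{\xcard{u}}\cdot\dsem(\s)(\mathcal{C}(\cdot))$ is precisely the expression $\tau^{\xcard{u}}\cdot\dsem(\s)\semc{(f,u,\asyn{\alpha:\mu}{k})}$ that appears in Lemma~\ref{weakabs.lem.2}, so you have reconstructed the paper's own machinery.

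One small remark on the obstacles you flag. For the $\xnedp{i}$ verification you will need, as an auxiliary invariant, that for every $(\s,\varrho)$ with $\xcardu(\mathcal{C}(\varrho))=i$ the value $\dsem(\s)(\mathcal{C}(\varrho))$ is either exactly $\{\tau^{\nmax-i}\}$ (the blocking case) or consists entirely of sequences of the form $\tau^{\nmax-i}\cdot\ovr{q}$ with $\ovr{q}\neq\epsilon$; this is what makes the filtering in $\xnedp{i}$ match the operational discarding of blocked summands. You have implicitly anticipated this, and it follows by the same induction on the $\xinfer$-chain, but it is worth stating explicitly as a lemma rather than leaving it buried in the case analysis.
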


\begin{prop} \label{weakabs.correct.2}
The denotational semantics $\dsem$ is {\em{correct}} with respect to the 
operational semantics~$\den{O}{\cdot}$. 
\end{prop}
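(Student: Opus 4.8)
The plan is to deduce correctness (Definition~\ref{weakabs.def}(a)) from two facts: the \emph{compositionality} of $\dsem$ with respect to syntactic contexts, and the already-available relationship $\Semx(\den{O}{\s})=\den{D}{\s}$ of Lemma~\ref{weakabs.lem.1} together with the injectivity of $\Semx$. Since $\den{D}{\s}=\dsem(\s)(\gammaz)$ and $\Semx$ is injective, $\den{O}{\s}$ is completely determined by $\dsem(\s)$, so it suffices to show that substitution into a context preserves denotational equality.

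First I would prove the compositionality lemma: to each syntactic context $\scontext$ (Definition~\ref{weakabs.scontext}) one can associate a map $\widehat{\scontext}:\D\to\D$, built from the semantic operators of Definition~\ref{densem.1} and the fixed denotations $\dsem(a),\dsem(y)$, such that $\dsem(\scontext(\s))=\widehat{\scontext}(\dsem(\s))$ for all $\s\in\Stmt$. This is a structural induction on $\scontext$. The leaves $\scontext=\shole$, $\scontext=a$, $\scontext=y$ are immediate: $\widehat{\shole}$ is the identity, while $\widehat{a}$ and $\widehat{y}$ are the constant maps $\dsem(a)$ and $\dsem(y)=\dsem(D(y))$, which do not depend on $\s$. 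For a composite context one uses the fixed-point identity $\Psi(\dsem)=\dsem$ to cancel the asymmetry present in the defining clauses of Definition~\ref{densem.1}; for instance
\begin{equation*}
\dsem(\scontext_1(\s);\scontext_2(\s)) = \Psi(\dsem)(\scontext_1(\s);\scontext_2(\s)) = \Psi(\dsem)(\scontext_1(\s))\xseq\dsem(\scontext_2(\s)) = \dsem(\scontext_1(\s))\xseq\dsem(\scontext_2(\s)),
\end{equation*}
so one may take $\widehat{\scontext_1;\scontext_2}(\phi)=\widehat{\scontext_1}(\phi)\xseq\widehat{\scontext_2}(\phi)$, and symmetrically for $\xrestr{\cdot}{c}$, $+$, $\xmerge$, $\xsyn$, $\xlmerge$ and $\xlsyn$ (in the last four, $\widehat{\scontext}$ is obtained by plugging $\widehat{\scontext_1}(\phi)$ and $\widehat{\scontext_2}(\phi)$ into the right-hand sides of the corresponding clauses).

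With the lemma in hand, assume $\dsem(x_1)=\dsem(x_2)$ for $x_1,x_2\in\Stmt$. Then $\dsem(\scontext(x_1))=\widehat{\scontext}(\dsem(x_1))=\widehat{\scontext}(\dsem(x_2))=\dsem(\scontext(x_2))$ for every syntactic context $\scontext$, hence $\den{D}{\scontext(x_1)}=\dsem(\scontext(x_1))(\gammaz)=\dsem(\scontext(x_2))(\gammaz)=\den{D}{\scontext(x_2)}$. By Lemma~\ref{weakabs.lem.1}, $\Semx(\den{O}{\scontext(x_1)})=\den{D}{\scontext(x_1)}=\den{D}{\scontext(x_2)}=\Semx(\den{O}{\scontext(x_2)})$, and since $\semx$ of Definition~\ref{weakabs.def1} is injective (an image sequence determines its preimage by deleting the interspersed blocks $\tau^{\nmax}$ and decoding a trailing $\tau^{\nmax}$ as $\delta$), so is $\Semx:\Po\to\Pd$, whence $\den{O}{\scontext(x_1)}=\den{O}{\scontext(x_2)}$. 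This is exactly the correctness condition. The only step requiring genuine care is the compositionality lemma — specifically the observation that $\dsem$ itself, not merely $\Psi(\dsem)(\cdot)$, distributes over the semantic operators, which is precisely where the fixed-point identity $\Psi(\dsem)=\dsem$ is used; the rest is a routine case enumeration, so I do not anticipate a real obstacle.
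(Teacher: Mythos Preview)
Your argument is correct and matches the paper's intended approach. The paper omits the proof and only hints that it ``can proceed by using Lemma~\ref{weakabs.lem} and the observation that $\Semx$ is an injective function'' (this reference is almost certainly a typo for Lemma~\ref{weakabs.lem.1}); your proposal spells out exactly these two ingredients, together with the compositionality of $\dsem$ that the paper takes for granted as inherent to a denotational definition.
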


\begin{defn} \label{weakabs.cls}
Let $\semk{\cdot}:\Kres\to\K$ be given by $\semk{k}=
(\idbag(k),\xlbd{\alpha}{\xif k(\alpha)=
\E \xthen \phie \xelse \dsem(\xapply{k}{\alpha})})$. 
We define the mapping $\semf{\cdot}:\SRes\to\Conts$ by $\semf{()}=()$ and 
$\semf{(\s:f)}=\dsem(\s):\semf{f}$.  We also define the mapping 
$\semc{\cdot}:\Res\to\Cont$ by $\semc{(f,u,\asyn{\alpha:\mu}{k})}=
(\semf{f},u,\asyn{\alpha:\mu}{\semk{k}})$, and consider
$\Contd\!=\!\{ \semc{(f,u,\asyn{\alpha\!:\!\mu}{k})} \!\mid\! 
(f,u,\asyn{\alpha:\mu}{k})\!\in\!\Res \}$. Clearly, $\Contd$ is a subspace 
of~$\Cont$, i.e. $\Contd\!\!\lhd\Cont$.
\end{defn}

\begin{lem} \label{weakabs.cls.lem}
$\Contd$ is a class of denotable continuations for the denotational 
semantics $\dsem$ of $\Lccsn$.
\end{lem}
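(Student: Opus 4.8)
The plan is to check the two requirements of Definition~\ref{jmath.14}. The subspace inclusion $\Contd\lhd\Cont$ is already noted in Definition~\ref{weakabs.cls}, so the real content is to show that $\Contd$ is invariant for $\Dd=\{\dsem(s)\mid s\in\Stmt\}$ under every operator used in defining $\dsem$. By Definitions~\ref{jmath.12} and~\ref{jmath.10}, for an operator $\f:\Arg_1\fun{1}\Arg_2$ this amounts (beyond nonexpansiveness, which belongs to the ambient set-up) to $\f$ carrying $\ainv{\Arg_1}{\Dd}{\Contd}$ into $\ainv{\Arg_2}{\Dd}{\Contd}$. The operators involved are: the seven semantic operators on $\D$ of Definition~\ref{densem.1} ($\oprestr,\xseq,\xned,\xmerge,\xlmerge,\xlsyn,\xsyn$), the action family $\opa(a)$, the four continuation operators $\opcnu,\opcseq,\opclpar,\opclsyn$ appearing in their definitions, and the auxiliaries $\xnedp{i}$ and $\kd$. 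Since $\ainv{\Arg}{\Dd}{\Contd}$ unfolds to $\Dd$ at each $\D$-leaf, to $\Contd$ at each $\C$-leaf, and leaves each $\Md$-leaf (in particular $\Names$ and $\Pd$) untouched, the verification splits into a $\Dd$-closure part and a $\Contd$-closure part.

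The $\Dd$-closure part is immediate from $\dsem=\fix(\Psi)$, that is, $\Psi(\dsem)=\dsem$: reading each clause of $\Psi$ at the fixed point shows that applying a $\D$-operator to $\dsem$-images produces again a $\dsem$-image, such as $\dsem(s_1)\xseq\dsem(s_2)=\dsem(s_1;s_2)$, $\xrestr{\dsem(s)}{c}=\dsem(\xrestr{s}{c})$, and so on; the only mildly non-immediate cases, $\dsem(s_1)\xmerge\dsem(s_2)=\dsem(s_1\xmerge s_2)$ and $\dsem(s_1)\xsyn\dsem(s_2)=\dsem(s_1\xsyn s_2)$, additionally use the defining identities $\phi_1\xsyn\phi_2=\phi_1\xlsyn\phi_2\xned\phi_2\xlsyn\phi_1$ and $\phi_1\xmerge\phi_2=\phi_1\xlmerge\phi_2\xned\phi_2\xlmerge\phi_1\xned\phi_1\xsyn\phi_2$ together with associativity and commutativity of $\xned$. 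Similarly $\opa(a)=\dsem(a)\in\Dd$ (so the internal occurrence of $\kd$ there needs no separate treatment), and $\xnedp{i}$ touches only $\Pd$, so its invariance is vacuous.

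The heart of the proof is the $\Contd$-closure part: $\cnu{\gamma}{c},\cseq{\phi}{\gamma},\clpar{\phi}{\gamma},\clsyn{\phi}{\gamma}\in\Contd$ for all $\gamma\in\Contd$, $c\in\Names$, and $\phi=\dsem(s)\in\Dd$ with $s\in\Stmt$. I would use the exact match between these four operators and the four formation rules of Definition~\ref{ccsn.osem.1} that introduce, respectively, a new restriction context $\anu{\ahole}{c}$, the pair $\als{\ahole}$/$\ars$, the pair $\alp{\ahole}$/$\arp{\ahole}$, and a new synchronous head. Writing $\gamma=\semc{(f,u,\asyn{\alpha:\mu}{k})}$ with $(f,u,\asyn{\alpha:\mu}{k})\in\Res$, one reads off directly from the definitions of $\semf{\cdot},\semk{\cdot},\semc{\cdot}$ and of the continuation operators that $\cnu{\gamma}{c}=\semc{(f,u,\asyn{\alpha\anu{\ahole}{c}:\mu}{k})}$, $\cseq{\phi}{\gamma}=\semc{(f,u,\asyn{\alpha\als{\ahole}:\mu}{\subsbag{k}{\alpha\ars}{s}})}$, $\clpar{\phi}{\gamma}=\semc{(f,u,\asyn{\alpha\alp{\ahole}:\mu}{\subsbag{k}{\alpha\arp{\ahole}}{s}})}$, and $\clsyn{\phi}{\gamma}=\semc{(s:f,u,\asyn{\alpha\alp{\ahole}:\alpha\arp{\ahole}:\mu}{k})}$. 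The two middle identities use the one-line substitution lemma $\semk{\subsbag{k}{\beta}{s}}=\subsbag{\semk{k}}{\beta}{\dsem(s)}$ (valid because $s\neq\E$, so the $\E$-branch of $\semk{\cdot}$ is never taken at $\beta$); the last uses $\semf{(s:f)}=\dsem(s):\semf{f}$. Each resumption on the right is obtained from $(f,u,\asyn{\alpha:\mu}{k})\in\Res$ by precisely the matching rule of Definition~\ref{ccsn.osem.1}, hence lies in $\Res$, so its $\semc{\cdot}$-image lies in $\Contd$.

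The delicate point — and the step I expect to require the most care — is the side condition $(\xlen(f)+\xcard{u})<\nmax$ attached to the $\Res$-rule underlying $\opclsyn$: there is no rule-free version, so $\clsyn{\phi}{\gamma}$ need not lie in $\Contd$ for every $\gamma\in\Contd$ (it fails already when $\xlen(f)+\xcard{u}=\nmax$). The resolution is that $\opclsyn$ is never applied to an arbitrary continuation in the definition of $\dsem$: inside the clause for $\xlsyn$ it is invoked only under the guard $\xcardc(\gamma)$, and since $\semf{\cdot}$ preserves length, $\xcardc(\semc{(f,u,\asyn{\alpha:\mu}{k})})$ is literally the side condition $(\xlen(f)+\xcard{u})<\nmax$; hence at every $\gamma\in\Contd$ where $\opclsyn$ is actually evaluated the matching $\Res$-rule applies and $\clsyn{\phi}{\gamma}\in\Contd$. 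This is the same handling of guarded/unguarded side conditions used in \cite{ct17,ct20,ent19}. Together with the routine well-definedness checks that $\semf{\cdot},\semk{\cdot},\semc{\cdot}$ land in $\Conts,\K,\Cont$ (using $\xlen(f)\le\nmax$ and $\xcard{u}\le\nmax+1$ for resumptions in $\Res$), this establishes that $\Contd$ is invariant for $\Dd$ under all operators defining $\dsem$, i.e.\ that $\Contd$ is a class of denotable continuations for $\dsem$.
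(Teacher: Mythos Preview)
Your proposal is correct and follows essentially the same approach as the paper: verify, for each operator appearing in the definition of $\dsem$, that it carries $\Dd$/$\Contd$-arguments to $\Dd$/$\Contd$-results by matching the continuation operators $\opcnu,\opcseq,\opclpar,\opclsyn$ against the corresponding $\Res$-formation rules through $\semc{\cdot}$. The paper only spells out the $\opclsyn$ case (and simply writes ``Assuming $\xlen(f)+\xcard{u}<\nmax$'' without further comment), whereas you treat all operators and explicitly justify that assumption via the guard $\xcardc(\gamma)$ in the definition of $\xlsyn$; this extra care is warranted, since $\opclsyn$ taken as a stand-alone operator on all of $\Dd\times\Contd$ does not satisfy the invariance condition.
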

\begin{proof}
Let $\Dd\!=\!\{ \dsem(s) \!\mid\! s\in\Stmt \}$. We must show that $\Contd$ 
is invariant for $\Dd$ under all operators used in the definition of $\dsem$. 
We only handle operator $\opclsyn\!\!:\!\!(\D\!\times\!\Cont)\!\to\!\Cont$ 
given in Section~\ref{ccsn.dsem}. The other operators can be handled similarly. 
The semantic operators used in the definition of a denotational semantics 
are nonexpansive; this means that we can write
$\opclsyn$ as $\opclsyn\!:\!(\D\!\times\!\Cont)\!\fun{1}\!\Cont$, namely 
$\opclsyn\in\F$, where \mbox{$\F\!=\!(\D\!\times\!\Cont)\!\fun{1}\!\Cont$}.
We show now that $\opclsyn\!\!\in\!\F(\Dd,\Contd)$, i.e., 
$\opclsyn\!(\phi,\gamma)\!\in\!\Contd\!$ for any $\phi\!\in\!\Dd\!$, 
$\gamma\!\in\!\Contd\!$. Since $\phi\!\in\!\Dd$ and 
$\gamma\!\in\!\Contd\!$, $\phi\!=\!\dsem(\s)$ for some $\s\!\in\!\Stmt$,
and $\gamma\!=\!\semc{(f,u,\asyn{\alpha:\mu}{k})}$
$=(\semf{f},u,\asyn{\alpha:\mu}{\semk{k}})$ for some 
$(f,u,\asyn{\alpha:\mu}{k})\in\Res$.
Assuming $\xlen(f)+\xcard{u}<\nmax$, we get $\opclsyn(\phi,\gamma)$ = 
$\opclsyn(\dsem(\s),(\semf{f},u,\asyn{\alpha:\mu}{\semk{k}})) =
(\dsem(\s):\semf{f},u,\asyn{\alpha\alp{\ahole}:\alpha\arp{\ahole}:\mu}{\semk{k}}))$
= $(\semf{\s:f},u,\asyn{\alpha\alp{\ahole}:\alpha\arp{\ahole}:\mu}{\semk{k}})$
= $\semc{(\s:f,u,\asyn{\alpha\alp{\ahole}:\alpha\arp{\ahole}:\mu}{k})}=\gamma'$. 
By Definition \ref{ccsn.osem.1}, since $(f,u,\asyn{\alpha:\mu}{k})\in\Res$,
we also have 
$(\s:f,u,\asyn{\alpha\alp{\ahole}:\alpha\arp{\ahole}:\mu}{k})\in\Res$. 
Thus, we infer that 
$\gamma'\!=\!\semc{(\s\!:\!f,u,\asyn{\alpha\alp{\ahole}\!:\!\alpha\arp{\ahole}\!:\!\mu}{k})}\!\in\!\Contd$, as required.
\end{proof}

By Remark \ref{jmath.140}, the domain of denotable continuations 
$\co{\Contd}{\Cont}$ is invariant for $\co{\Dd}{\D}$ under all operators used 
in the definition of $\dsem$.
The proof of Lemma \ref{weakabs.lem.2} is by induction on the depth of the 
inference of \mbox{$(f,u,\asyn{\alpha:\mu}{k})\in\Res$} 
by using the rules given in Definition \ref{ccsn.osem.1}.

\begin{lem} \label{weakabs.lem.2} 
For any $\s\in\Stmt$, $(f,u,\asyn{\alpha:\mu}{k})\in\Res$ there is an $\Lccsn$ 
syntactic context $\scontext$ such that
\mbox{$\tau^{\xcard{u}}\cdot\dsem(\s)\semc{(f,u,\asyn{\alpha:\mu}{k})}=\den{D}{\scontext(\s)}=\dsem(\scontext(\s))(\gammaz)$}.
Furthermore, $\scontext$ does not depend on $\s$; it only depends on $(f,u,\asyn{\alpha:\mu}{k})$.
For any $\s,\s'\in\Stmt$, $\s\neq{}\s'$, $(f,u,\asyn{\alpha:\mu}{k})\in\Res$
there is a syntactic context~$\scontext$ such that: 
$\tau^{\xcard{u}}\cdot\dsem(\s)\semc{(f,u,\asyn{\alpha:\mu}{k})}=\den{D}{\scontext(\s)}$ and 
$\tau^{\xcard{u}}\cdot\dsem(\s')\semc{(f,u,\asyn{\alpha:\mu}{k})}=\den{D}{\scontext(\s')}$
(the same $\scontext$ in both equalities). Notice that in general,   
$\den{D}{\scontext(\s)}\neq\den{D}{\scontext(\s')}$. 
\end{lem}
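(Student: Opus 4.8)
The plan is to prove Lemma~\ref{weakabs.lem.2} by induction on the depth of the derivation establishing $(f,u,\asyn{\alpha:\mu}{k})\in\Res$, building the syntactic context $\scontext$ step by step so that it mirrors exactly the rule applied at the root of that derivation. The base case is the axiom $((),\emptyset,\asyn{(\az)}{\kz})\in\Res$; here $\scontext=\shole$ works, since $\tau^0\cdot\dsem(\s)\semc{((),\emptyset,\asyn{(\az)}{\kz})}=\dsem(\s)(\gammaz)=\den{D}{\s}$, using that $\semc{((),\emptyset,\asyn{(\az)}{\kz})}=((),\emptyset,\asyn{(\az)}{\semk{\kz}})=\gammaz$ because $\kz=(\emptyset,\xlbd{\alpha}{\E})$ maps to $(\emptyset,\xlbd{\alpha}{\phie})$ under $\semk{\cdot}$.

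For the inductive step, each of the five rules in Definition~\ref{ccsn.osem.1} that derives $(f,u,\asyn{\mu}{k})\in\Res$ from a premise $\varrho'\in\Res$ corresponds to exactly one of the semantic operators $\opa$, $\opcnu$, $\opcseq$, $\opclpar$, $\opclsyn$ used in $\dsem$, and to a matching clause of $\Psi$. Concretely: the rule that adds an action $\at{a}{\alpha}$ to $u$ is matched by the operational rule (R1) / the $\opa$ clause on $\phi:\varphi$, where a fresh $\tau$ is produced; the rule introducing $\alpha\anu{\ahole}{c}$ corresponds to prefixing the context with $\xrestr{(\cdot)}{c}$ via $\opcnu$; the rule introducing $\alpha\als{\ahole}$ with a substitution $\subsbag{k}{\alpha\ars}{\s}$ corresponds to wrapping in $(\cdot)\xseq\s''$ where $\s''$ is any statement with $\dsem(\s'')=\applybag{\semk{k}}{\alpha\ars}$ (such a statement exists since $k(\alpha\ars)\in\Rc$ is either $\E$ or a statement); the rule for $\alpha\alp{\ahole}$ with $\subsbag{k}{\alpha\arp{\ahole}}{\s}$ corresponds to $(\cdot)\xmerge\s''$ via $\opclpar$; and the rule extending $f$ to $x:f$ with $\alpha\alp{\ahole}:\alpha\arp{\ahole}:\mu$ corresponds to $(\cdot)\xlsyn\s''$ via $\opclsyn$. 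In each case I would invoke the induction hypothesis to get a context $\scontext'$ for the premise, then set $\scontext$ to be $\scontext'$ with its hole $\shole$ replaced by the appropriate one-step wrapper, and verify the displayed equality by unfolding $\Psi(\dsem)$ on that wrapper, using that $\dsem=\fix(\Psi)$, together with the fact (Lemma~\ref{weakabs.lem.1}) that $\Semx(\den{O}{\s})=\den{D}{\s}$ is not needed here but the defining equations of the operators on continuations are. The factor $\tau^{\xcard{u}}$ on the left accumulates precisely the silent steps emitted by the $\xcard{u}$ applications of the $\opa$-clause on nonempty synchronous continuations, so the arithmetic of the exponents matches.

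The claim that $\scontext$ does not depend on $\s$ is immediate from this construction: at no point does the wrapper chosen at a given rule mention $\s$, only the data $(f,u,\asyn{\alpha:\mu}{k})$ and auxiliary statements recovered from $k$. Hence the \emph{same} $\scontext$ works for any $\s$, and in particular for a pair $\s\neq\s'$ we obtain $\tau^{\xcard{u}}\cdot\dsem(\s)\semc{(f,u,\asyn{\alpha:\mu}{k})}=\den{D}{\scontext(\s)}$ and the analogue for $\s'$ simultaneously; the final remark that these two answers may differ is just the observation that $\scontext$ is, in effect, a ``transparent'' context for the purpose of distinguishing computations.

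The main obstacle I anticipate is the sequential-composition and left-merge/left-synchronization cases, where the operational rule performs a substitution $\subsbag{k}{\alpha\ars}{\s}$ (respectively $\subsbag{k}{\alpha\arp{\ahole}}{\s}$) into the resumption bag, and I must show that wrapping the inductively-obtained context in $(\cdot)\xseq\s''$ (respectively $(\cdot)\xmerge\s''$, $(\cdot)\xlsyn\s''$) produces a continuation that, under $\semc{\cdot}$, agrees with $\semc{\cdot}$ applied to the substituted resumption --- i.e., that $\semk{\subsbag{k}{\alpha\ars}{\s}}=\subsbag{\semk{k}}{\alpha\ars}{\dsem(\s)}$ and likewise for the other marker. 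This is a compatibility lemma between the syntactic substitution $\subsbag{\cdot}{\cdot}{\cdot}$ on $\KRes$ and its semantic counterpart on $\K$, which follows directly from the definition of $\semk{\cdot}$ in Definition~\ref{weakabs.cls} and the definition of $\subsbag{\cdot}{\cdot}{\cdot}$ on bags, but needs to be stated and checked carefully; once it is in hand, the case analysis closes routinely by unfolding $\Psi$.
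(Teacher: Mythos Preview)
Your overall approach---induction on the depth of the derivation of $(f,u,\asyn{\alpha:\mu}{k})\in\Res$---is exactly what the paper proposes (the paper gives only that one-line sketch). Your treatment of the base case and of the rules that extend the head identifier or the bag $k$ (the rules corresponding to $\opcnu$, $\opcseq$, $\opclpar$, $\opclsyn$) is correct: in each of those cases the context $\scontext$ is indeed obtained from the inductively given $\scontext'$ by replacing $\shole$ with a one-step wrapper $\xrestr{\shole}{c}$, $\shole\xseq\s'$, $\shole\xlmerge\s'$, or $\shole\xlsyn\s'$, where $\s'$ is the specific statement introduced by the rule instance.

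There is, however, a genuine gap in the case of the action-adding rule (the second rule of Definition~\ref{ccsn.osem.1}, corresponding to (R1)/$\opa$). Here the premise is $\varrho'=(x{:}f,u,\asyn{\alpha:\mu}{k})$ and the conclusion is $\varrho=(f,\{\at{a}{\alpha}\}\cup u,\asyn{\mu}{k})$. The inductive context $\scontext'$ for $\varrho'$ has its hole at the position of identifier $\alpha$, while the statement $x$ (from the head of the synchronous resumption) sits at the position of the head of $\mu$. The context $\scontext$ you need for $\varrho$ must have its hole where $x$ was and must have $a$ where the old hole was; concretely, in the smallest nontrivial example one finds $\scontext'=\shole\xlsyn x$ but $\scontext=a\xlsyn\shole$. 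This is a ``hole migration'', not a ``hole wrapping'', so your uniform recipe ``replace $\shole$ in $\scontext'$ by a one-step wrapper'' fails here. To repair the argument you must either (i) observe that $(\s{:}f,u,\asyn{\alpha:\mu}{k})\in\Res$ for \emph{every} $\s$ with a derivation of the same depth (replacing $x$ by $\s$ at the rule-6 step that introduced it), then instantiate the IH with $\s''=a$ on that modified premise; or (ii) strengthen the induction hypothesis so that the context carries an auxiliary hole for each element currently in $f$, not just for the active statement. Either route closes the gap, but the phenomenon should be named explicitly rather than folded into the ``one-step wrapper'' narrative.
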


\begin{thm} \label{weakabs.theorem}
The denotational semantics $\dsem$ of $\Lccsn$ is {\em{weakly abstract}}
with respect to the operational semantics $\den{O}{\cdot}$. 
\end{thm}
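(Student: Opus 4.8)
The plan is to assemble the theorem from the pieces already in place. By Definition \ref{weakabs.def}(c), weak abstractness means $\dsem$ is both correct and weakly complete with respect to $\den{O}{\cdot}$. Correctness is exactly Proposition \ref{weakabs.correct.2}, so that half is done. For weak completeness, by Lemma \ref{weakabs.lem} it suffices to exhibit a class of denotable continuations $\Cd$ for $\dsem$ such that condition \eqref{weakc} holds, i.e.
\[
\forall{\s_1,\s_2\in\Stmt}\,[(\exists{\gamma}\in\Cd\,[\dsem(\s_1)\gamma\neq\dsem(\s_2)\gamma])
\Rightarrow
(\exists{\scontext}\,[\osem(\scontext(\s_1))\neq\osem(\scontext(\s_2))])].
\]
The natural candidate is $\Cd=\Contd$ from Definition \ref{weakabs.cls}, which is a legitimate class of denotable continuations by Lemma \ref{weakabs.cls.lem}.

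So first I would fix $\s_1,\s_2\in\Stmt$ and assume there is some $\gamma\in\Contd$ with $\dsem(\s_1)\gamma\neq\dsem(\s_2)\gamma$. By definition of $\Contd$, $\gamma=\semc{(f,u,\asyn{\alpha:\mu}{k})}$ for some resumption $(f,u,\asyn{\alpha:\mu}{k})\in\Res$; in particular $\s_1\neq\s_2$, since otherwise the two denotations applied to $\gamma$ would coincide. Next I would invoke Lemma \ref{weakabs.lem.2}: for this resumption there is a single syntactic context $\scontext$ (depending only on $(f,u,\asyn{\alpha:\mu}{k})$, not on the statement plugged in) with
\[
\tau^{\xcard{u}}\cdot\dsem(\s_i)\semc{(f,u,\asyn{\alpha:\mu}{k})}=\den{D}{\scontext(\s_i)}\qquad(i=1,2).
\]
Since $\dsem(\s_1)\gamma\neq\dsem(\s_2)\gamma$, prefixing both sides by the same finite word $\tau^{\xcard{u}}$ preserves the inequality (the map $p\mapsto\tau^{\xcard{u}}\cdot p$ on $\Pd$ is injective), so $\den{D}{\scontext(\s_1)}\neq\den{D}{\scontext(\s_2)}$.

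It remains to transfer this inequality from the denotational side to the operational side. Here I would use Lemma \ref{weakabs.lem.1}, which gives $\Semx(\den{O}{\s})=\den{D}{\s}$ for every $\s$, applied to $\s=\scontext(\s_1)$ and $\s=\scontext(\s_2)$. If we had $\den{O}{\scontext(\s_1)}=\den{O}{\scontext(\s_2)}$, then applying $\Semx$ would force $\den{D}{\scontext(\s_1)}=\den{D}{\scontext(\s_2)}$, a contradiction. Hence $\osem(\scontext(\s_1))=\den{O}{\scontext(\s_1)}\neq\den{O}{\scontext(\s_2)}=\osem(\scontext(\s_2))$, which is precisely the conclusion required by \eqref{weakc} (no injectivity of $\Semx$ is even needed in this direction, only that it is a well-defined function). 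This establishes condition \eqref{weakc} for $\Cd=\Contd$, so by Lemma \ref{weakabs.lem} $\dsem$ is weakly complete with respect to $\den{O}{\cdot}$, and together with Proposition \ref{weakabs.correct.2} we conclude $\dsem$ is weakly abstract.

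The genuinely substantive content — and the step I expect to be the main obstacle — is entirely upstream in Lemma \ref{weakabs.lem.2}: constructing, for an arbitrary resumption built from the rules of Definition \ref{ccsn.osem.1}, a syntactic context $\scontext$ whose composite denotation reproduces $\tau^{\xcard{u}}\cdot\dsem(\s)\semc{(f,u,\asyn{\alpha:\mu}{k})}$, uniformly in $\s$. That is where the auxiliary process-algebra operators (left merge, synchronization merge, left synchronization merge) and the identifier/bag bookkeeping earn their keep, and it is proved by induction on the depth of the inference of $(f,u,\asyn{\alpha:\mu}{k})\in\Res$, matching each resumption-formation rule to a context-building step via the operator equations in Definition \ref{densem.1}. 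Granting that lemma (as the excerpt invites us to), the proof of the theorem itself is just the short chain of implications above.
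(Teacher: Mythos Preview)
Your proposal is correct and follows essentially the same route as the paper's proof: correctness via Proposition~\ref{weakabs.correct.2}, then weak completeness by taking $\Cd=\Contd$ (Lemma~\ref{weakabs.cls.lem}), invoking Lemma~\ref{weakabs.lem.2} to produce a uniform context $\scontext$ with $\den{D}{\scontext(\s_i)}=\tau^{\xcard{u}}\cdot\dsem(\s_i)\gamma$, and then transferring the inequality to $\den{O}{\cdot}$ via Lemma~\ref{weakabs.lem.1}. Your write-up is in fact slightly more explicit than the paper's about why each step is legitimate (citing Lemma~\ref{weakabs.cls.lem}, injectivity of the $\tau^{\xcard{u}}$-prefix, and that only well-definedness of $\Semx$ is needed), but the argument is the same.
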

\vspace*{-0.2cm}
\begin{proof}
By Proposition \ref{weakabs.correct.2}, $\dsem$ is {\em{correct}} with respect 
to~$\den{O}{\cdot}$. For the weak completeness condition, we consider the class 
of denotable continuations $\Contd$ for $\dsem$ presented in Definition 
\ref{weakabs.cls}. By Lemma~\ref{weakabs.lem}, if we prove that
$\forall{\s_1,\s_2\in\Stmt}[(\exists{\gamma}\in\Contd[\dsem(\s_1)\gamma\neq\dsem(\s_2)\gamma])\Rightarrow
(\exists{\scontext}[\osem(\scontext(\s_1))\neq\osem(\scontext(\s_2))])]$,
then $\dsem$ is also {\em{weakly complete}} with respect to $\den{O}{\cdot}$.
Let $\s_1,\s_2\in{\Stmt}$ and $(f,u,\asyn{\alpha:\mu}{k})\in\Res$ be such that 
$\dsem(x_1)\semc{(f,u,\asyn{\alpha:\mu}{k})}\neq\dsem(x_2)\semc{(f,u,\asyn{\alpha:\mu}{k})}$
which implies 
$\tau^{\xcard{u}}\cdot\dsem(x_1)\semc{(f,u,\asyn{\alpha:\mu}{k})}\neq
\tau^{\xcard{u}}\cdot\dsem(x_2)\semc{(f,u,\asyn{\alpha:\mu}{k})}$. 
By Lemma \ref{weakabs.lem.2}, there is an $\Lccsn$ syntactic context $\scontext$ 
such that $\den{D}{\scontext(\s_1)}=\tau^{\xcard{u}}\cdot\dsem(x_1)\semc{(f,u,\asyn{\alpha:\mu}{k})}\neq$
$\tau^{\xcard{u}}\cdot\dsem(x_2)\semc{(f,u,\asyn{\alpha:\mu}{k})}=\den{D}{\scontext(\s_2)}$.
By using Lemma \ref{weakabs.lem.1}, we get 
$\Semx(\den{O}{\scontext(\s_1)})=\den{D}{\scontext(\s_1)}\neq$
$\den{D}{\scontext(\s_2)}=\Semx(\den{O}{\scontext(\s_2)})$
which implies $\den{O}{\scontext(\s_1)}\neq\den{O}{\scontext(\s_2)}$.   
Thus, we conclude that $\dsem$ is weakly complete, and therefore 
{\em{weakly abstract}} with respect to~$\den{O}{\cdot}$. 
\end{proof}

In a similar way, one can show that the denotational semantics of 
$\Lccsnp$ is weakly abstract with respect to the corresponding operational model. 

\section{Conclusion}

While the classic full abstractness condition cannot be established in 
continuation semantics \cite{car94}, the abstractness of a continuation-based 
denotational model can be investigated based on the weak abstractness 
criterion. Compared to the classic full abstractness criterion \cite{mil77}, 
the weak abstractness criterion used in this paper relies on a weaker 
completeness condition that should be verified only for a class of denotable 
continuations.

We provide the denotational and operational semantics defined by using 
continuations for two process calculi (based on CCS) able to express 
multiparty synchronous interactions. We work with metric semantics and with 
the continuation semantics for concurrency (a technique introduced by the 
authors to handle advanced concurrent control mechanisms). For the multiparty 
interaction mechanisms incorporated in both process calculi, we proved that 
the continuation-based denotational models are weakly abstract with respect 
to their corresponding operational models.

As future work, we intend to investigate the weak abstractness issue for 
nature-inspired approaches introduced in the area of membrane computing 
\cite{paun10,son21,ct22}.

\nocite{*}
\bibliographystyle{eptcs}
\bibliography{generic}

\end{document}